\documentclass[10pt]{amsart}
\pdfoutput=1

\usepackage{amssymb}
\usepackage{amsthm}
\usepackage{amsfonts}
\usepackage{amsmath}
\usepackage{pmboxdraw}
\usepackage{verbatim} 
\usepackage{graphicx}
\usepackage{color}
\usepackage[colorlinks=true, citecolor=blue, filecolor=black, linkcolor=black, urlcolor=black]{hyperref}
\usepackage{cite}
\usepackage[normalem]{ulem}
\usepackage{subcaption}
\usepackage{bbm}
\usepackage{bm}
\usepackage{mathtools}
\usepackage{todonotes}
\usepackage{kantlipsum}
\usepackage{colortbl}
\usetikzlibrary{decorations, arrows.meta, patterns}

\newcommand{\RN}[1]{%
	\textup{\uppercase\expandafter{\romannumeral#1}}%
}

\def\C{\mathbb{C}}

\def\R{\mathbb{R}}

\newcommand{\re}{\operatorname{Re}}
\newcommand{\im}{\operatorname{Im}}

\theoremstyle{plain}
\newtheorem{thm}{Theorem}[section]

\newtheorem{cor}[thm]{Corollary}
\newtheorem{conj}[thm]{Conjecture}
\newtheorem{lem}[thm]{Lemma}

\newtheorem{prop}[thm]{Proposition}

\theoremstyle{remark}
\newtheorem{defn}{Definition}

\newtheorem{rem}{Remark}

\numberwithin{equation}{section}

\begin{document}

\title[Equilibrium measures for rotationally symmetric Riesz gases]{Equilibrium measures for higher dimensional \\ rotationally symmetric Riesz gases}
\author{Sung-Soo Byun}
\address{Department of Mathematical Sciences and Research Institute of Mathematics, Seoul National University, Seoul 151-747, Republic of Korea}
\email{sungsoobyun@snu.ac.kr}

\author{Peter J. Forrester}
\address{School of Mathematical and Statistics, The University of Melbourne, Victoria 3010, Australia}
\email{pjforr@unimelb.edu.au} 

\author{Satya N. Majumdar} 
\address{LPTMS, CNRS, Univ. Paris-Sud, Université Paris-Saclay, 91405 Orsay, France}
\email{satyanarayan.majumdar@cnrs.fr} 

\author{Gregory Schehr} 
\address{Sorbonne Universit\'e, Laboratoire de Physique Th\'eorique et Hautes Energies, CNRS UMR 7589, 4 Place Jussieu,
75252 Paris Cedex 05, France}
\email{schehr@lpthe.jussieu.fr} 


\begin{abstract} 
We study equilibrium measures for Riesz gases in dimension~$d$ with pairwise interaction kernel $|x-y|^{-s}$, subject to radially symmetric external fields. 
We characterise broad classes of confining potentials for which the equilibrium measure is supported on the unit ball and admits an explicit density.
Our main contribution is a converse construction: starting from a prescribed radially symmetric equilibrium density given as a power series in the squared radius, we determine the associated external potential and establish the corresponding Euler--Lagrange variational conditions. 
A key ingredient in the proof is an identity between two ${}_3F_2$ hypergeometric functions evaluated at unit argument, which is of independent interest.
As applications, we identify the external potentials corresponding to equilibrium densities proportional to $(1-|x|^2)^\alpha$, $\alpha>-1$, and show that these potentials can be expressed in terms of Gauss hypergeometric functions ${}_2F_1$, reducing to polynomials for special values of~$\alpha$. 
We also determine the equilibrium measure associated with purely power-type external potentials---often referred to as Freud or Mittag--Leffler potentials in the context of log gases---for which the equilibrium density admits an explicit ${}_2F_1$ representation.
Furthermore, we apply our framework to a Coulomb gas in dimension $d+1$ confined by a harmonic potential to the half-space. We derive a necessary condition under which the equilibrium measure is fully supported on the boundary hyperplane of dimension~$d$, with the induced density corresponding to that of a Riesz gas with exponent $s=d-1$.
\end{abstract}


\maketitle

\section{Introduction}

Long-range repulsive interactions play a central role in many-body systems and have therefore attracted a wide interest across both physics~\cite{CDFR14} and mathematics~\cite{ST97,Lew22,Se24,BF25a}. 
When combined with an external confining field, such interactions typically give rise, in the thermodynamic limit, to equilibrium configurations with compact support, nontrivial density profiles, and microscopic structures that depend on the form of the interactions. These features appear in a broad range of physical contexts, including classical plasmas and electrolytes~\cite{Bax63,AM1980,CKMN81,JLM93}, trapped ions and cold atomic gases~\cite{BC03}, as well as rotating fermionic systems and effective descriptions of quantum Hall states~\cite{CE2020,Coo10,OLMSE24,LMS19,SDMS22}.

A unifying theoretical framework for these systems is provided by \emph{Riesz gases}~\cite{Riesz38}, which describe $N$ identical particles in dimension $d$ interacting through a repulsive pairwise potential of the form $|x_i-x_j|^{-s}$, with $x_i$'s denoting the positions of the particles. The exponent $s$ governs the effective interaction range. 
By varying $s$, Riesz gases interpolate between Coulomb systems ($s=d-2$), logarithmic interactions ($s\to 0$) which play a central role in random matrix theory~\cite{Meh04,Fo10,BF25}, and more general power-law interactions. The interplay among the interaction exponent $s$, the spatial dimension $d$, and the confining potential determines both the shape of the equilibrium measure and the structure of its support, thereby offering a flexible framework to investigate how dimensionality and interaction range influence collective behaviour. In particular, these systems exhibit qualitatively distinct regimes depending on whether the interactions are genuinely long-ranged ($s<d$) or effectively short-ranged ($s>d$).

While many general properties of Riesz gases--such as the existence of equilibrium measures, large-deviation principles, and qualitative features of minimisers--are now well understood~\cite{Lew22,Se24}, explicit descriptions of equilibrium densities remain scarce. In one dimension, interesting progress has been made recently, in particular in the physics literature: for the harmonic potential \cite{ADKKMMS19}, and in the presence of an additional hard wall \cite{DKMSS17,KKKMMS21,KKKMMS22,DM06,DM08}, the equilibrium measure can be computed explicitly, often in connection with the statistics of the rightmost particle; see also \cite{LS25} in the context of linear and counting statistics. In higher dimensions, however, and for general confining potentials, such explicit results are largely unavailable, leaving many questions about equilibrium structures open. Even for radially symmetric potentials, which is the setting we focus on in this paper, closed-form expressions are typically available only in special situations, such as harmonic confinement or for specific interaction exponents.
Notable examples include the Coulomb gas $s=d-2$, as well as more exotic
``iterated Coulomb gases'' with $s=d-4$ \cite{CSW23}; see also \cite{CSW22} for $s=d-3$. 
In this work, we study rotationally invariant Riesz gases in arbitrary dimension under radially symmetric external potentials and identify broad
classes of equilibrium densities that admit explicit expressions.
 
The Riesz gas studied in this paper is defined as follows. For positive integers $N$ and $d$, let $X_N=(x_1,\ldots,x_N)\in(\mathbb{R}^d)^N$ be a point configuration.  
We consider the associated $N$–particle Hamiltonian
\begin{equation}  \label{def of Hamiltonian}
\mathcal{H}_N(X_N) := \sum_{1 \le i \ne j \le N} g(x_i - x_j) 
+ N \sum_{i=1}^N V(x_i),
\end{equation}
where $V:\mathbb{R}^d\to(-\infty,+\infty)$ is a confining potential and  
\begin{equation}
\label{eq:g}
g(x)  := \begin{cases}
\frac{1}{s} |x|^{-s}, & s \ne 0, 
\smallskip 
\\
-\log|x|, & s = 0 , 
\end{cases} \qquad x \in \R^d. 
\end{equation} 
Without loss of generality, we often normalise the confining potential by $V(0)=0.$ 
In particular, the choice $s=d-2$ yields the $d$–dimensional Coulomb gas. Note that the case $s=0$, corresponding to log gases \cite{Fo10}, can be formally recovered from the general Riesz gas by taking the limit since
\begin{equation}
-\log |x|= \lim_{s\to 0} s^{-1}( |x|^{-s}-1 ) .
\end{equation} 
A special feature of the Coulomb potential is that it satisfies the $d$-dimensional Poisson equation
\begin{equation}\label{1.3a}
\Delta g(x) |_{s=d-2} = - c_d \delta(x), \qquad c_d = 2 \pi^{d/2}/ \Gamma(d/2).
\end{equation}
Here $\Delta$ is the Laplacian, $\delta(x)$ is the $d$-dimensional Dirac delta, and $c_d$ is the surface area of the unit ball in $\mathbb R^d$.
We consider the canonical Gibbs measure at inverse temperature $\beta>0$, given by 
\begin{equation}
d \mathbb{P}_{N,\beta}(X_N) = \frac{1}{Z_{N,\beta}} \exp\Big(-\frac{\beta}{2} N^{-\frac{s}{d}} \mathcal{H}_N(X_N) \Big) dX_N. 
\end{equation}
Here we adopt a scaling under which the resulting equilibrium measure is supported on a compact set. Other conventions are also commonly used in the literature, in which the support grows with~$N$; see e.g. \cite{ADKKMMS19}.

By definition, the equilibrium measure $\mu_V$ is the unique minimiser of the energy 
\begin{equation} \label{def of energy functional}
I_V[\sigma] := \iint_{ (\R^d)^2} g(x-y) \,d\sigma(x)\,d\sigma(y) + \int_{ \R^d } V(x)\,d\sigma(x),  
\end{equation}
subject to the constraint that the total charge $Q=\int \sigma(x) \, dx$ is conserved.
Furthermore, according to Frostman's theorem \cite{Fr35} (see also \cite[Theorem~2]{DOSW23}), for $0 < s < d$, $\mu_V$ is characterised by the Euler–Lagrange condition 
\begin{equation} \label{def of EL eqn}
2\int_{ \R^d } g(x-y) \,d\mu_V(y) +V(x) \begin{cases}
= c & \textup{if } x \in \textup{supp}(\mu_V),  
\smallskip 
\\
\ge c & \textup{if } x \not\in \textup{supp}(\mu_V) \;.  
\end{cases} 
\end{equation} 
Here the constant $c$ is called the (modified) Robin's constant.  
 
In this work, we aim to find a class of potentials $V$ and its associated equilibrium measure supported on the unit ball $\{x\in \R^d: |x| \le 1 \}$ for which both admit an explicit expression.
Our starting point is to prescribe the equilibrium measure and then determine the corresponding potential. Specifically, upon a minor abuse of notation we write
$d \mu(x) = \mu(x) \, dx$, where
\begin{equation}\label{def of eq msr mu ak gen d}
\mu(x) = {d \over  c_d} \bigg( \sum_{k=0}^\infty a_k |x|^{2k} \bigg) \cdot  \mathbbm{1}_{ \{ |x| \le 1 \} } \,.
\end{equation}
Here, $c_d$ is given as in  (\ref{1.3a}).
 Assuming that $\mu$ defines a valid probability density, we then ask: 
\smallskip 
\begin{center}
\textit{What is the confining potential $V(x)$ for which $\mu$
arises as the corresponding equilibrium measure?}
\end{center} 
\smallskip 
We aim to contribute to this question for a broad class of external potentials.

Before addressing the general Riesz gas case, we first focus on the Coulomb interaction. In the Coulomb case $s=d-2$, the equilibrium problem admits a particularly transparent formulation. 
The external potential is characterised by the Poisson equation
(\ref{1.3a}) with $g(x) |_{s=d-2}$ replaced by ${1 \over 2} V(x) |_{s=d-2} $ and
the Dirac delta replaced by $-\mu(x)$ as given by (\ref{def of eq msr mu ak gen d}). This reads
\begin{align}\label{e.2}
\Delta \Big( V(x)|_{s=d-2}\Big) = 2d \sum_{n=0}^\infty a_n |x|^{2n}.  
\end{align}
Recalling that the radial part of the $d$-dimensional Laplacian is
${1 \over r^{d-1}}{d \over dr} r^{d-1} {d \over dr}$ with $(r = |x|),$ 
a power series ansatz for $V(x)$ together with our choice of normalisation $V(0) = 0$ shows that
\begin{align}\label{e.3}
   V(x)|_{s=d-2}= \frac{d}{2} \sum_{n=0}^\infty \frac{a_n}{(n+\frac{d}{2})(n+1)} |x|^{2n+2}. 
\end{align}  
From the perspective of equilibrium measures, the equality part of the 
Euler--Lagrange condition \eqref{def of EL eqn} is already satisfied by the above computation. 
The simplest way to ensure that the inequality part of \eqref{def of EL eqn} is also satisfied is to choose the external potential $V(x)=+\infty$ for $|x|>1$. 
In this case, the potential \eqref{e.3} with a hard-wall constraint outside 
$|x|>1$ indeed gives rise to \eqref{def of eq msr mu ak gen d} as an equilibrium measure; see \cite{CFLV17}.
A natural question, which also arises in applications (see e.g.  \cite{ST97}), is whether defining the external potential $V(x)$ for all 
$x\in\mathbb{R}^d$ via \eqref{e.3} preserves the validity of the 
Euler--Lagrange conditions.
For a radially symmetric density in the Coulomb case, with $Q = \int_{|x| < 1} \mu(x) \, dx$ the mass of the measure, it is a celebrated result of Newton that for the outside region $|x| > 1$, $\int_{\mathbb R^d} g(x-y) \, d \mu_V(y) = Q g(x)$. Using this identity, one finds that the inequality part of the Euler--Lagrange condition \eqref{def of EL eqn} requires that, for $|x|>1$,
\begin{align}\label{e.4}
2 Q g(x) + V(x)|_{s = d - 2} \ge c = 2 \int_{\mathbb R^d} g(y) \, d \mu_V(y) .
\end{align} 

When the Coulomb condition $s=d-2$ is removed, the preceding arguments for 
deriving $V(x)$ on $|x|\le 1$ and for verifying the Euler--Lagrange conditions 
after extending $V(x)$ to all of $\mathbb{R}^d$ are no longer applicable. 
This leads to substantial additional difficulties in the study of general Riesz gases and partly explains why relatively few explicit examples are known; see \cite[Section~2]{CSW23} for several known cases and related 
conjectures.

In this setting, the computation of the external potential requires evaluating a genuinely $d$-dimensional integral. 
However, as we shall see, the assumption of radial symmetry reduces this problem to a one-dimensional integral, albeit with a seemingly complicated integrand. 
Moreover, in order to ensure the inequality part of the Euler--Lagrange 
conditions, one is naturally led to the problem of generalising \eqref{e.4}.

Our main results, which will be presented in detail in the following section, can be summarised as follows.
\begin{itemize}
    \item In Theorem~\ref{Thm_radial}, we formulate the general framework. 
    We show how to construct an external potential~$V$ for which the equality part of the Euler--Lagrange condition holds, and we reformulate the inequality part of the Euler--Lagrange condition as an explicit inequality involving the coefficients defining the equilibrium measure. 
    This general result is discussed in Subsection~\ref{S1.1}. 
    We provide two explicit examples for which the required inequalities can be verified. See Table~\ref{Table_summary} below for a summary. 
    \smallskip 
    \item The first example, presented in Theorem~\ref{Thm_eq msr power type}, concerns external potentials expressed in terms of Gauss hypergeometric functions ${}_2F_1$. 
    In this case, the corresponding equilibrium density is proportional to 
    $(1-|x|^2)^{\alpha}$ with $\alpha>-1$. This example is treated in Subsection~\ref{S1.2}.
    \smallskip 
    \item The second example, presented in Theorem~\ref{Thm_eq msr for power potential}, considers purely power-type external potentials of the form $|x|^{2p}$, where $p$ is a nonnegative integer. 
    In this setting, the density of the equilibrium measure admits an explicit representation in terms of a ${}_2F_1$ function. 
    This case is discussed in Subsection~\ref{S1.3}.
    \smallskip 
    \item Finally, in Theorem~\ref{Thm_Coulomb half space}, we study the Coulomb case with a half-space constraint. 
    By analysing the points at which the Euler--Lagrange inequality is violated, we derive a necessary condition under which the ensemble is fully confined to the boundary hyperplane. 
    This result is presented in Subsection~\ref{S1.4}.
\end{itemize}

At a technical level, the most challenging part of our analysis arises in the soft wall setting, where one must establish the inequality part of the Euler--Lagrange condition~\eqref{def of EL eqn}.
In contrast, the classical result of Riesz \cite{Riesz38} (see Corollary~\ref{Cor_power type with hard edges} below) for the equilibrium measure inside a unit ball in $\mathbb R^d$ with constant external potential makes a hard wall assumption that the potential is infinite outside of the ball. This makes the inequality part of the Euler-Lagrange condition redundant.

\section{Main results}

In this section, we present our main results. 

\subsection{Formulation for general case}\label{S1.1}

To handle the general setting, we first introduce a sequence of real numbers $\{a_n\}_{n=1}^\infty$ subject to the following conditions.

\begin{defn}
For $d \ge 1$, let $\{a_n\}_{n=1}^\infty$ be a sequence of real numbers satisfying
\begin{equation} \label{def of norm admissible an}
\sum_{n=0}^\infty \frac{a_n}{2n+d} = \frac{1}{d},
\end{equation}
and assume that, for every $0 \le x \le 1$, the series
\begin{equation} \label{def of seq conv for admissible an}
\sum_{n=0}^\infty a_n x^{2n} \qquad \textup{and} \qquad  \sum_{n=1}^\infty \frac{ \Gamma(\frac{s}{2}+n) \Gamma( \frac{s-d}{2}+n+1 )  }{ \Gamma(\frac{d}{2}+n) \, n! } \bigg( \sum_{k=0}^\infty \frac{ a_k}{ 2n+s-2k-d } \bigg) x^{2n} 
\end{equation}
are well defined (i.e. convergent), and that the first is non-negative.
Under these conditions, we say that the sequence $\{a_n\}_{n=1}^\infty$ is \textit{admissible}.
\end{defn}

Since $\mu(x)= \mu(|x|)$  in \eqref{def of eq msr mu ak gen d} for all $d \ge 1$, we have   
\begin{align}
\int_{ \R^d } \mu(x)\,dx = \frac{2 \pi^{ d/2 } }{ \Gamma(\frac{d}{2}) } \int_0^\infty  \mu (r)r^{d-1}\,dr; 
\end{align} 
note that on the LHS the notation $dx$ denotes the
$d$-dimensional Lebesgue measure, in keeping with the domain of integration. 
Thus the condition \eqref{def of norm admissible an} ensures that $\mu$ is, if non-negative, a probability measure, by integrating to unity. 

We formulate a class of rotationally symmetric potentials $V$ together with their associated equilibrium measure. Recall that the (rising) Pochhammer symbol is given by $(x)_n=\Gamma(x+n)/\Gamma(x)$. 

\begin{thm}[\textbf{Radially symmetric equilibrium measure}] \label{Thm_radial}
Let $d \ge 1$ and $s \in [d-2,d)$. 
For an admissible sequence $\{a_k\}_{k=1}^\infty$ and $x \in \R^d$ with $|x| \le 1$, define 
\begin{equation} \label{def of V gen d}
V(x) :=   \frac{ 2d   }{ s   }   \sum_{n=1}^\infty \frac{ (\frac{s}{2})_n  ( \frac{s-d}{2}+1 )_n  }{ (\frac{d}{2})_n \, n! } \bigg( \sum_{k=0}^\infty \frac{ a_k}{ 2n+s-2k-d } \bigg) |x|^{2n} . 
\end{equation}  
For $x \in \R^d$, let $\mu(x)$ be given by (\ref{def of eq msr mu ak gen d}).
Then we have the following. 
\begin{itemize}
    \item[\textup{(i)}] \textbf{\textup{(Euler-Lagrange equality)}} For any $|x|<1$, we have 
    \begin{equation} \label{EL eqn for hard wall}
\frac{2}{s} \int_{ \R^d } \frac{ d\mu(y) }{ |x-y|^s } +V(x) =  {2 d \over s} \sum_{k=0}^\infty {a_k \over 2k + d - s}. 
\end{equation} 
    Consequently, $\mu$ is the equilibrium measure associated with the potential 
    \begin{equation} \label{def of V hard gen}
V_{ \rm h }(x) := \begin{cases}
V(x) &\textup{if } |x| \le 1,
\smallskip
\\
+\infty &\textup{if } |x| >1. 
\end{cases}
\end{equation} 
    \item[\textup{(ii)}] \textbf{\textup{(Euler-Lagrange inequality)}} Suppose furthermore that $V$ is well defined for all $x \in \mathbb{R}^d$. 
Then $\mu$ is the equilibrium measure for $V$ if and only if the inequality 
\begin{align} \label{inequality for general}
  \frac{ 2d   }{ s   }  \sum_{n=0}^\infty \frac{ (\frac{s}{2})_n  ( \frac{s-d}{2}+1 )_n  }{  (\frac{d}{2})_n \, n! }   \bigg( \sum_{k=0}^\infty \frac{a_k}{2k+d+2n} \bigg) \, \frac{1}{|x|^{2n+s}}
+V(x) \ge {2 d \over s} \sum_{k=0}^\infty {a_k \over 2k + d - s}  
\end{align} 
holds for every $|x|\ge1$.  
\end{itemize}
\end{thm}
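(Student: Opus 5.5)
The plan is to compute the Riesz potential of each monomial piece $|y|^{2k}\mathbbm{1}_{\{|y|\le1\}}$ explicitly, both inside and outside the unit ball, and then sum over $k$.

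\textbf{Radial reduction.} For $|x|=r$, first integrate over the sphere $|y|=t$. The Gegenbauer/Funk--Hecke expansion gives
\[
\int_{S^{d-1}} |r\omega - t\eta|^{-s}\, d\sigma(\eta) = c_d \sum_{n\ge0} \frac{(\frac{s}{2})_n(\frac{s-d}{2}+1)_n}{(\frac{d}{2})_n\, n!}\, \frac{\min(r,t)^{2n}}{\max(r,t)^{2n+s}} .
\]
This is the hypergeometric expansion ${}_2F_1(\frac{s}{2},\frac{s-d}{2}+1;\frac d2;(\min/\max)^2)$ of the spherical mean of $|x-y|^{-s}$. It can be obtained from the generating function of Gegenbauer polynomials $C^{s/2}_m$ combined with orthogonality, or from the standard formula for the spherical average. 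Convergence at $r=t$ holds because $c-a-b = d-s-1 > -1$ for $s<d$.

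\textbf{Inside the ball.} Multiply by $\frac{d}{c_d} a_k t^{2k+d-1}$ and integrate $t$ over $[0,1]$, splitting at $t=r$. On $t<r$ one gets
\[
\sum_n \gamma_n \frac{a_k\, r^{2k+d-s}}{2n+2k+d}, \qquad \gamma_n := \frac{(\frac{s}{2})_n(\frac{s-d}{2}+1)_n}{(\frac{d}{2})_n\, n!}.
\]
On $t>r$ one gets
\[
\sum_n \gamma_n\, a_k\, r^{2n}\, \frac{1 - r^{2k+d-s-2n}}{2k+d-s-2n}.
\]
The terms carrying the non-even power $r^{2k+d-s}$ must cancel in total, by the identity
\[
\sum_n \gamma_n \Big(\frac{1}{2n+2k+d} - \frac{1}{2k+d-s-2n}\Big) = 0 .
\]
Expressed through ${}_3F_2$ at unit argument, this is exactly the kind of identity the abstract announces, and proving it is the main obstacle. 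I would attempt it via the beta integral: the two sums equal $\int_0^1 u^{k+d/2-1}\,{}_2F_1(\cdots;u)\,du$ and a reflected analogue. Both then reduce by the known closed form of the spherical mean to the statement that $\int_{|y|\le1}|x-y|^{-s}|y|^{2k}\,dy$ is real-analytic in $|x|^2$, i.e.\ has no $r^{2k+d-s}$ singular term. Alternatively, I would apply Thomae's transformation to the ${}_3F_2(1)$. What survives is the constant $n=0$ term $\frac{1}{2k+d-s}$ plus the series $\sum_{n\ge1}\gamma_n \frac{r^{2n}}{2k+d-s-2n}$. Multiplying by $2d/s$ and summing over $k$, this series is precisely $-V(x)$ by (\ref{def of V gen d}), using admissibility to interchange sums. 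This proves (\ref{EL eqn for hard wall}). With $V_{\rm h}=+\infty$ off the ball, the inequality part is vacuous, and Frostman's characterisation (\ref{def of EL eqn}) gives (i).

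\textbf{Outside the ball.} For $|x|\ge1$ every $t\le1\le r$, so only the first branch of the expansion appears. This gives
\[
\frac{2}{s}\int \frac{d\mu(y)}{|x-y|^s} = \frac{2d}{s}\sum_n \gamma_n \Big(\sum_k \frac{a_k}{2k+d+2n}\Big) r^{-2n-s},
\]
which is the left-hand part of (\ref{inequality for general}). The support of $\mu$ is the closed unit ball, assuming the density does not vanish on a subset of positive measure near the boundary; since the series is analytic and nonnegative, that is fine. Given this, the Euler--Lagrange inequality off the support is exactly (\ref{inequality for general}). Frostman's theorem, which characterises the minimiser by (\ref{def of EL eqn}) for $0<s<d$ (with $s=d-2$ the Coulomb case covered similarly), then gives the ``if and only if'' in (ii).
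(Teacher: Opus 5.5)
Your architecture matches the paper's. You write the spherical mean as $\max(r,t)^{-s}\,{}_2F_1(\frac s2,\frac{s-d}{2}+1;\frac d2;\min(r,t)^2/\max(r,t)^2)$, which the paper derives via Funk--Hecke and a quadratic transformation. You then integrate term by term with a split at $t=r$, cancel the non-even powers, and expand directly for $|x|\ge1$. The genuine gap is the cancellation identity
\[
c_k:=\sum_{n\ge0}\gamma_n\Big(\frac{1}{2n+2k+d}-\frac{1}{2k+d-s-2n}\Big)=0,
\]
which you name as the main obstacle but never prove. Saying it ``reduces'' to $\int_{|y|\le1}|x-y|^{-s}|y|^{2k}\,dy$ having no $|x|^{2k+d-s}$ term only restates what your expansion computes. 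It proves nothing until that absence is established independently, and you give no argument for the asserted real-analyticity in $|x|^2$. The Thomae route is speculative; the authors remark that they could not match this ${}_3F_2(1)$ identity to the Thomae class. Without $c_k=0$, the interior potential carries the extra term $\frac{2d}{s}\sum_k a_k c_k|x|^{2k+d-s}$, so constancy in (i) does not follow.

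The paper proves the identity by induction on $k$. It applies the $k$-th finite difference $\frac{1}{k!}\sum_{\ell=0}^k(-1)^\ell\binom{k}{\ell}(\cdot)$ to both halves, using $\frac{1}{k!}\sum_{\ell=0}^k(-1)^\ell\binom{k}{\ell}\frac{1}{w+\ell}=\frac{\Gamma(w)}{\Gamma(w+k+1)}$, which turns each half into a Gauss sum ${}_2F_1(\cdot\,;1)$ with equal closed forms. The base case $k=0$ is Gauss summation plus the reflection formula.

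Your own reduction can instead be closed cheaply, and that would be a genuinely different proof. Smoothness at the origin suffices; you do not need analyticity. Take $\phi\in C_c^\infty(\mathbb{R}^d)$ with $\phi=1$ on $|y|\le\frac12$ and support in $|y|<\frac34$.
\begin{itemize}
\item The potential of $|y|^{2k}\phi(y)$ is $C^\infty$, since derivatives fall on the density and $|z|^{-s}$ is locally integrable.
\item The potential of $|y|^{2k}(1-\phi(y))\mathbbm{1}_{\{|y|\le1\}}$ is $C^\infty$ on $|x|<\frac14$, since the kernel is smooth there.
\item Since $|\gamma_n|\asymp n^{s-d}$, the regular part $\sum_n\gamma_n|x|^{2n}/(2k+d-s-2n)$ is a power series in $|x|^2$ with radius $1$.
\end{itemize}
Hence $c_k|x|^{2k+d-s}$ is $C^\infty$ near $0$. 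Because $2k+d-s\in(2k,2k+2)$ is not an even integer, this forces $c_k=0$.

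A minor point: the spherical-mean series diverges at $t=r$ when $s\ge d-1$, since its terms are of order $n^{s-d}$. What you actually need is integrability of the $|r-t|^{d-s-1}$ (or logarithmic) singularity. Term-by-term integration is then justified by Tonelli, since $\gamma_n$ has fixed sign for $n\ge1$.
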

 
In the definition of \eqref{def of V hard gen}, the hard wall is placed at the ``natural'' boundary of the support of the equilibrium measure.
    Such a regime is commonly referred to as ``the soft edge meeting the hard edge''; see e.g. \cite{CK08,AKM19,AKMW20,CW25,CFLV17,CFLV19}.

Note that by \eqref{def of energy functional} and \eqref{def of EL eqn}, we have  
\begin{align} \label{energy in terms of Robin}
I_{V}[ \mu_V ]  = \frac{c}{2} +\frac12 \int_{ \R^d } V(x)\,d\mu_V(x). 
\end{align}
As an immediate consequence of \eqref{energy in terms of Robin}, in the two situations described in Theorem~\ref{Thm_radial}(i) and (ii), the associated energy admits an explicit expression: 
\begin{align}
 \label{energy in main result}
I_V[\mu]= I_{V_{ \rm h }}[\mu]  = \frac{ d }{ s }    \sum_{k=0}^\infty \frac{a_k}{ 2k+d-s } +\frac12 \int_{ |x|<1 } V(x)\,d\mu(x). 
\end{align}

\begin{rem}[Coulomb gases $s=d-2$]
Note that as $s \to d-2$, we have
\begin{align*}
V(x) \sim \frac{ d }{ \Gamma( \frac{s-d+2}{2} )  }   \sum_{n=0}^\infty \frac{ 1 }{ (n+\frac{d}{2}) \, (n+1) } \bigg( \sum_{k=0}^\infty \frac{ a_k}{ 2n+2+s-2k-d } \delta_{k,n} \bigg) |x|^{2n+2} . 
\end{align*}
Noting that $\Gamma( \frac{s-d+2}{2} ) (2n+2+s-2k-d)  \delta_{k,n} \to 2$ in this limit, it follows that $V(x)|_{s=d-2}$ is given by (\ref{e.3}).
Indeed, since 
$$
\Delta \Big( |x|^{2n+2} \Big) = 2(n+1)(2n+d) |x|^{2n}, 
$$
where $\Delta$ is the Laplacian, we have the Poisson equation (\ref{e.3}).
Also, the terms in (\ref{inequality for general}) in this limit can be identified with the corresponding terms in (\ref{e.4}) with $Q=1$.  
\end{rem}

\begin{rem}[Connection to the short-range regime $s \to d$] 
We now take the formal limit $s \to d$, which gives a connection to the short range regime. 
In the limit $s \to d$, we have  
\begin{align*}
V(x) \sim   2   \sum_{n=1}^\infty  \bigg( \sum_{k=0}^\infty \frac{ a_k}{ 2n+s-2k-d } \delta_{k,n} \bigg) |x|^{2n} \sim -\frac{2}{d-s} \sum_{n=1}^\infty a_n |x|^{2n} = \frac{2}{d-s} \bigg( a_0- \sum_{n=0}^\infty a_n |x|^{2n} \bigg). 
\end{align*}
We write  
\begin{align} \label{def of short range lim V}
\widetilde{V}(x) = \lim_{s\to d} \frac{d-s}{2} V(x). 
\end{align}
Then it follows that  
\begin{align}
\mu(x) |_{s=d} =  a_0 -  \widetilde{V}(x) . 
\end{align}  
This is consistent with the explicit formula for the equilibrium measure in the short-range case. More precisely, it follows from \cite{ADKKMMS19,HLSS18} that, in the short-range regime, the equilibrium measure is given by a Thomas--Fermi type density of the form $(c-\widetilde{V}(x))^{d/s}$. 
We note that, by convention, in the short-range regime $s>d$, the prefactor $N$ in front of the potential $V$ in \eqref{def of Hamiltonian} is replaced by $N^{s/d}$. In the critical case $s=d$, however, the definition of the prefactor requires an additional $\log N$ correction; see \cite{HS04,HS05,KS98} as well as Appendix~C of the arXiv version of \cite{ADKKMMS19}. Such logarithmic corrections are consistent with the need for the renormalisation appearing in \eqref{def of short range lim V}.
\end{rem}

\begin{rem}[Reduction via a hypergeometric identity]
In the derivation of the potential \eqref{def of V gen d}---equivalently, in the derivation of \eqref{EL eqn for hard wall}---a significant simplification arises from Lemma~\ref{Lem_algebraic equal}; see the proof of  Proposition~\ref{Prop_evaulation of the EL conditions}. 
The identity in Lemma~\ref{Lem_algebraic equal} can be expressed in terms of generalised hypergeometric functions (see e.g. \cite[Chapter~16]{NIST}) as follows:
\begin{equation} \label{Id_3F2}
 \frac{ 1 }{ 2k+d  }  {}_3F_2(  1+\tfrac{s-d}{2}, \tfrac{d}{2}+k, \tfrac{s}{2};  \tfrac{d}{2}, \tfrac{d}{2}+k+1; 1  ) = \frac{ 1 }{  2k+ d-s   }  {}_3F_2( 1+\tfrac{s-d}{2}, \tfrac{s-d}{2}-k, \tfrac{s}{2};  \tfrac{d}{2}, \tfrac{s-d}{2}+1-k; 1  ),
\end{equation}
valid for $k \in \mathbb Z_{\ge 0}$ and general $s,d$ such that $s \in (d-2,d) $ (this condition being sufficient for the underlying sums to be convergent). A large class of such hypergeometric identities may be viewed in the framework of classical Thomae relations and their extensions; see, for instance, \cite{KR06,KR03}. That said, we have not been able to match (\ref{Id_3F2}) in this class.
\end{rem}

In the following subsections, we present some prominent examples of the equilibrium measures; see Table~\ref{Table_summary} for the summary.

\begin{table}[h!]
    \centering
    \begin{tabular}{ | c | c | c | c | } 
        \hline
        \cellcolor{gray!25}  & \cellcolor{gray!10} \parbox[c]{2.5cm}{\centering \vspace{.5\baselineskip} $a_k$ \vspace{.5\baselineskip} }  & \cellcolor{gray!10} \parbox[c]{4cm}{\centering \vspace{.5\baselineskip} $V(x)$ \vspace{.5\baselineskip}}    & \cellcolor{gray!10} \parbox[c]{4cm}{\centering \vspace{.5\baselineskip} $\mu(x)$ \vspace{.5\baselineskip}} 
        \\
        \hline
        \cellcolor{gray!10} \parbox[c]{3.5cm}{\centering \vspace{.5\baselineskip} Power-type equilibrium measure, cf.  Theorem.~\ref{Thm_eq msr power type} 
        \vspace{.5\baselineskip} } 
        & \eqref{def of ak for power type eq msr} 
        &     $\sum_{k=1}^{2m+1} b_k  |x|^{2k}$, cf.\eqref{def of V poly soft}
   & $  (1-|x|^2)^{ \frac{s-d}{2}+2m+1 }  $
         \\
        \hline
        \cellcolor{gray!10}  \parbox[c]{3.5cm}{\centering \vspace{.5\baselineskip}  Power-type \\  external potential, \\ cf.  Theorem~\ref{Thm_eq msr for power potential} \vspace{.5\baselineskip} }    &  \eqref{def of ak for power type potential}  &  $ |x|^{2p}$  & $  {}_2F_1( \tfrac{d-s}{2}, \tfrac{d-s-2p}{2}; \tfrac{d-s+2-2p}{2}; |x|^2 )$  \\
        \hline
    \end{tabular}
    \caption{The table summarises the examples, listing both the potential 
and the associated equilibrium measure, up to multiplicative constants.} \label{Table_summary}
\end{table}

\subsection{Power type equilibrium measure}\label{S1.2}

As a first prominent example, we consider a class of polynomial potentials whose associated equilibrium measure is of power type. To proceed, let us introduce the corresponding sequence $\{a_k\}$.
We first recall that the hypergeometric function ${}_2 F_1$ is defined by the Gauss series 
\begin{equation} \label{def of 2F1 Gauss series}
	{}_2 F_1(a,b;c;z):=\frac{\Gamma(c)}{\Gamma(a)\Gamma(b)} \sum_{s=0}^\infty  \frac{\Gamma(a+s) \Gamma(b+s) }{ \Gamma(c+s) s! } z^s, \qquad (|z|<1)
\end{equation}
and by analytic continuation elsewhere. Note that if $a=-m$ with $m=0,1,2,\dots$, the hypergeometric function becomes the polynomial 
\begin{equation} \label{def of 2F1 Gauss series poly case}
	{}_2 F_1(-m,b;c;z):=\frac{\Gamma(c)}{\Gamma(b)} \sum_{s=0}^m  (-1)^s\frac{ \Gamma(b+s) }{ \Gamma(c+s)  } \binom{m}{s}z^s. 
\end{equation}

\begin{lem}[\textbf{Admissible sequence for the power type equilibrium measure}] \label{Lem_ak for power type eq msr}
Let $d \ge 1$ and $s \in [d-2,d)$. Define  
\begin{equation} \label{def of ak for power type eq msr}
a_k =  \frac{ \Gamma(\alpha+1+\frac{d}{2}) }{ \Gamma(\alpha+1)\Gamma(\frac{d}{2}+1) } \,  \frac{ \Gamma(-\alpha+k) }{ \Gamma(-\alpha) k! } , \qquad \alpha>-1.
\end{equation} 
(We will see in (\ref{def of power type eq msr}) that this gives rise to a well defined non-negative density.)
Then the potential \eqref{def of V gen d} associated with the corresponding density (\ref{def of eq msr mu ak gen d}) is given by 
\begin{equation} \label{def of V power}
V(x) =-\frac{d}{s} \,  \frac{ \Gamma(\alpha+1+\frac{d}{2}) }{  \Gamma(\frac{d}{2}+1) }   \frac{ \Gamma(\frac{d-s}{2})  }{ \Gamma(\alpha+1+\frac{d-s}{2}) } 
  \Big (   {}_2F_1(\tfrac{s}{2},\tfrac{s-d}{2}-\alpha;\tfrac{d}{2};|x|^2) - 1    \Big )  . 
\end{equation} 
Furthermore, if $\alpha=\frac{s-d}{2} +n$ with $n=0,1,2,\dots$, the potential $V$ in \eqref{def of V power} becomes a polynomial
\begin{align}  \label{def of V power poly}
\begin{split} 
V(x) =   \frac{ \Gamma(\frac{s}{2} +n+1) \Gamma(\frac{d-s}{2})   }{ \Gamma(\frac{s}{2}+1) }   \sum_{k=1}^n   \frac{ (-1)^{k+1}  }{k!(n-k)!} \frac{ \Gamma(k+\frac{s}{2}) }{   \Gamma(k+\frac{d}{2}) } |x|^{2k} . 
\end{split}
\end{align}
\end{lem}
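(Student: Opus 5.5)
The plan is to evaluate the inner sum $\sum_k a_k/(2n+s-2k-d)$ in closed form, insert it into \eqref{def of V gen d}, and recognise the resulting series in $|x|^2$ as a ${}_2F_1$.

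First I check what density the $a_k$ produce. By the binomial series, $\sum_k \frac{\Gamma(-\alpha+k)}{\Gamma(-\alpha)k!}r^{2k}=(1-r^2)^\alpha$. Hence $\mu$ is proportional to $(1-|x|^2)^\alpha$ on the ball, which is non-negative. The prefactor $\Gamma(\alpha+1+\frac d2)/(\Gamma(\alpha+1)\Gamma(\frac d2+1))$ gives \eqref{def of norm admissible an}, since $d\int_0^1(1-r^2)^\alpha r^{d-1}\,dr=\frac d2 B(\frac d2,\alpha+1)$.

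The key step is the inner sum. I write it as an integral:
\[
\sum_k \frac{a_k}{2k+d-s-2n}=\int_0^1 (1-r^2)^\alpha r^{d-s-2n-1}\,dr\cdot C_\alpha ,
\qquad C_\alpha=\frac{\Gamma(\alpha+1+\frac d2)}{\Gamma(\alpha+1)\Gamma(\frac d2+1)} .
\]
This is convergent for $n=0$ because $d-s>0$. For $n\ge1$ the exponent is too negative, so I take the analytic continuation in the parameter; equivalently I sum the Gauss-type series $\frac{1}{-(2n+s-d)}\,{}_2F_1(-\alpha,\frac{d-s}{2}-n;\frac{d-s}{2}-n+1;1)$ by Gauss's theorem, which converges since $\alpha>-1$. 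Either route gives
\[
\sum_{k}\frac{a_k}{2n+s-2k-d}=-\frac{C_\alpha}{2}\,\frac{\Gamma(\frac{d-s}{2}-n)\Gamma(\alpha+1)}{\Gamma(\alpha+1+\frac{d-s}{2}-n)} .
\]
I plan to justify the continuation by noting that both sides are meromorphic in $s$ and agree for $s$ in a range where the integral converges (e.g.\ formally $s<d-2n$). I expect this to be the main obstacle. The safer route is to apply Gauss's summation directly with $c-a-b=\alpha+1>0$, which avoids continuation entirely.

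Next I simplify the $n$-dependence using reflection-type identities:
\[
\frac{\Gamma(\frac{d-s}{2}-n)}{\Gamma(\alpha+1+\frac{d-s}{2}-n)}=\frac{\Gamma(\frac{d-s}{2})}{\Gamma(\alpha+1+\frac{d-s}{2})}\cdot\frac{(\frac{s-d}{2}-\alpha)_n}{(1+\frac{s-d}{2})_n} ,
\]
using $\Gamma(z-n)=\Gamma(z)(-1)^n/(1-z)_n$. The factor $(1+\frac{s-d}{2})_n$ then cancels the one in \eqref{def of V gen d}, leaving
\[
V=-\frac{d}{s}\,C_\alpha\Gamma(\alpha+1)\frac{\Gamma(\frac{d-s}{2})}{\Gamma(\alpha+1+\frac{d-s}{2})}\sum_{n\ge1}\frac{(\frac s2)_n(\frac{s-d}{2}-\alpha)_n}{(\frac d2)_n n!}|x|^{2n}.
\]
This is exactly \eqref{def of V power}, since the sum equals ${}_2F_1-1$. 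Convergence of the second series in \eqref{def of seq conv for admissible an} on $[0,1]$ follows because $c-a-b=\alpha+1+\frac{d-s}{2}-\frac s2+\ldots$; more precisely, the Gauss series at $1$ has excess $\frac d2-\frac s2-\frac{s-d}{2}+\alpha=d-s+\alpha>0$ when $\alpha>-1$ and $s<d$, possibly after a short check. This completes admissibility.

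For the polynomial case, set $\alpha=\frac{s-d}{2}+n_0$. Then $\frac{s-d}{2}-\alpha=-n_0$ and the ${}_2F_1$ terminates by \eqref{def of 2F1 Gauss series poly case}. I then simplify the prefactor: $\Gamma(\alpha+1+\frac{d-s}{2})=n_0!$ and $\Gamma(\alpha+1+\frac d2)=\Gamma(\frac s2+n_0+1)$. Writing $(\frac s2)_k/(\frac d2)_k=\frac{\Gamma(k+\frac s2)\Gamma(\frac d2)}{\Gamma(\frac s2)\Gamma(k+\frac d2)}$ and $(-n_0)_k/n_0!=(-1)^k/(n_0-k)!$, and using $\frac d2\Gamma(\frac d2)=\Gamma(\frac d2+1)$ and $s\Gamma(\frac s2)=2\Gamma(\frac s2+1)$, the constants collapse to \eqref{def of V power poly}. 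The sign $(-1)^{k+1}$ comes from the overall minus sign.
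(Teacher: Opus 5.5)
Your proof is correct and follows the paper's route: you evaluate $\sum_k a_k/(2n+s-2k-d)$ in closed form, cancel $(1+\frac{s-d}{2})_n$, recognise ${}_2F_1(\frac{s}{2},\frac{s-d}{2}-\alpha;\frac{d}{2};|x|^2)-1$, and use termination when $\alpha=\frac{s-d}{2}+n$. Your direct Gauss summation (excess $\alpha+1>0$) is a cleaner justification than the paper's Euler beta integral, which diverges for $n\ge 1$ and is only meaningful by continuation.

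One aside should be dropped, though. The excess $d-s+\alpha$ at $|x|=1$ need not be positive: for example $s=d-\frac12$, $\alpha=-\frac{9}{10}$ gives $-\frac25$, and $(1-x^2)^\alpha$ itself blows up at $x=1$ when $\alpha<0$. So your claim of convergence on the closed ball is false in general. The lemma only needs $|x|<1$, where the series does converge.
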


As an immediate consequence of Theorem~\ref{Thm_radial} (i) and Lemma~\ref{Lem_ak for power type eq msr}, we have the following corollary.

\begin{cor}[\textbf{Power type equilibrium measure with hard edges}] \label{Cor_power type with hard edges}
Let $d \geq 1$ and $s \in [d-2,d)$, and assume $\alpha > -1$. 
Define $V_{\mathrm{h}}$ as in \eqref{def of V hard gen}, where $V$ is given by \eqref{def of V power}. Then the equilibrium measure associated with $V_{ \rm h }$ is given by
\begin{equation}  \label{def of power type eq msr}
 d\mu(x) = \frac{ \Gamma(\alpha+1+\frac{d}{2}) }{ \Gamma(\alpha+1)\Gamma(\frac{d}{2}+1) } \,  (1-|x|^2)^{\alpha}  \cdot  \mathbbm{1}_{ \{ |x| \le 1 \} } \, \frac{ \Gamma(\frac{d}{2}+1) }{\pi^{ d/2 }}\,dx.    
\end{equation}   
\end{cor}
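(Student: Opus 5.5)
The plan is to derive the corollary directly from Theorem~\ref{Thm_radial}(i) and Lemma~\ref{Lem_ak for power type eq msr}. Part (i) of the theorem states that, for any admissible sequence $\{a_k\}$, the density \eqref{def of eq msr mu ak gen d} is the equilibrium measure for $V_{\rm h}$ when $V$ is given by \eqref{def of V gen d}. The lemma identifies that $V$ with \eqref{def of V power} for the choice \eqref{def of ak for power type eq msr}. So two things remain to be shown: the sequence \eqref{def of ak for power type eq msr} is admissible, and the resulting density \eqref{def of eq msr mu ak gen d} coincides with \eqref{def of power type eq msr}.

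\textbf{Identifying the density.} By the binomial series,
\[
\sum_{k\ge0} \frac{\Gamma(-\alpha+k)}{\Gamma(-\alpha)k!} r^{2k} = (1-r^2)^{\alpha}, \qquad 0\le r<1 .
\]
Combining this with $d/c_d = d\,\Gamma(\tfrac d2)/(2\pi^{d/2}) = \Gamma(\tfrac d2+1)/\pi^{d/2}$ turns \eqref{def of eq msr mu ak gen d} into exactly \eqref{def of power type eq msr}. Since $\alpha>-1$, this density is nonnegative and integrable on the ball. At $r=1$ the series is nonnegative: it converges to $0$ when $\alpha>0$, equals $1$ when $\alpha=0$, and diverges to $+\infty$ when $\alpha<0$. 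In that last case the density is interpreted as the integrable function $(1-r^2)^\alpha$, so the boundary sphere, being a null set, does not matter.

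\textbf{Admissibility.} For the normalisation \eqref{def of norm admissible an}, I will compute $\sum_k a_k/(2k+d)$ as $\int_0^1 r^{d-1}(1-r^2)^\alpha\,dr$ times the prefactor. That integral is the Beta integral $\tfrac12 B(\tfrac d2,\alpha+1)$, and the product then simplifies to $1/d$; interchanging sum and integral is justified by Abel's theorem or monotone convergence.

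The second series in \eqref{def of seq conv for admissible an} is, up to a constant, the series defining $V$. Its convergence on $[0,1]$ is already implicit in Lemma~\ref{Lem_ak for power type eq msr}, since there the inner sum $\sum_k a_k/(2n+s-2k-d)$ is summed in closed form, producing the ${}_2F_1$ in \eqref{def of V power}. That ${}_2F_1$ has parameters $(\tfrac s2, \tfrac{s-d}{2}-\alpha; \tfrac d2)$, so $c-a-b = \tfrac d2 - s + \tfrac d2 + \alpha = d - s + \alpha$. This is positive because $s<d$ and $\alpha>-1$ give $d-s+\alpha > -1 + (d-s)$. I expect this to be the only delicate point. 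If $d-s+\alpha\le 0$, the endpoint $|x|=1$ needs care, but equality in Theorem~\ref{Thm_radial}(i) is only required for $|x|<1$, and the value of $V_{\rm h}$ on the sphere is irrelevant for the equilibrium problem because the sphere has measure zero. With these checks in place, Theorem~\ref{Thm_radial}(i) applies and gives the corollary.
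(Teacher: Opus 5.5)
\textbf{Verdict.} Your route is the paper's. The binomial expansion turns \eqref{def of eq msr mu ak gen d} into \eqref{def of power type eq msr}, Lemma~\ref{Lem_ak for power type eq msr} identifies $V$, and Theorem~\ref{Thm_radial}(i) finishes. The paper does not check admissibility at all, so your Beta-integral verification of \eqref{def of norm admissible an} is a correct addition.

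\textbf{The step that fails.} You claim that $c-a-b=d-s+\alpha$ is positive, but $\alpha>-1$ only gives $d-s+\alpha>d-s-1$. That settles the case $s\le d-1$, not $s\in(d-1,d)$.

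For $s\in(d-1,d)$ and $-1<\alpha\le s-d$ (e.g.\ $d=1$, $s=\tfrac12$, $\alpha=-0.9$), your fallback fails, because the trouble is not confined to the null set $\{|x|=1\}$. There $a=\tfrac s2>0$ and $b=\tfrac{s-d}{2}-\alpha\ge\tfrac{d-s}{2}>0$. So the ${}_2F_1$ in \eqref{def of V power} has positive coefficients and blows up as $|x|\to1$, like $(1-|x|^2)^{d-s+\alpha}$, or logarithmically when $\alpha=s-d$. With the negative prefactor, $V(x)\to-\infty$ at the sphere. Three consequences follow:
\begin{itemize}
\item the second series in \eqref{def of seq conv for admissible an} diverges at $x=1$;
\item the sequence is therefore not admissible, so Theorem~\ref{Thm_radial}(i) does not apply as stated;
\item the Frostman characterisation behind that theorem assumes a lower semicontinuous potential bounded below, so it is unavailable.
\end{itemize}

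\textbf{What can be salvaged.} If $2\alpha+d-s>-1$ you can still conclude directly. On the open ball $V=c-\tfrac2s U^\mu$, where $U^\mu(x)=\int|x-y|^{-s}\,d\mu(y)$. Measures of finite Riesz energy do not charge the sphere, which has zero $s$-capacity when $s>d-1$. Writing $\|\sigma\|_s^2$ for the Riesz $s$-energy, every such $\sigma$ satisfies $I_{V_{\rm h}}[\sigma]=c+\tfrac1s\big(\|\sigma-\mu\|_s^2-\|\mu\|_s^2\big)$, so $\mu$ is the unique minimiser.

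\textbf{Where the statement itself breaks.} For $-1<\alpha\le\tfrac{s-d-1}{2}$, one has $U^\mu\asymp(1-|x|^2)^{d-s+\alpha}$ near the sphere, so $\|\mu\|_s^2=\infty$. Normalised restrictions of $\mu$ to $\{|x|<1-\epsilon\}$ then drive $I_{V_{\rm h}}$ to $-\infty$, and no equilibrium measure exists. The paper's two-line proof passes over this too, so in that range the corollary as stated overreaches. As written, your argument is complete only when $d-s+\alpha>0$, which is automatic for $s\le d-1$.
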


\begin{proof}
We see that \eqref{def of power type eq msr} follows from (\ref{def of eq msr mu ak gen d}), \eqref{def of ak for power type eq msr} and the binomial expansion 
\begin{equation} \label{eq for binomial expansion}
(1-z)^{\alpha}= \sum_{k=0}^\infty \frac{ \Gamma(-\alpha+k) }{ \Gamma(-\alpha) k! } z^k, \qquad (|z|<1). 
\end{equation}
Then the corollary immediately follows from Theorem~\ref{Thm_radial} (i). 
\end{proof}

\begin{rem} We discuss some important special cases of Corollary~\ref{Cor_power type with hard edges}.
\smallskip 
\\ 
1.~(Box potential)
For the special case when $\alpha= \frac{s-d}{2}$, the potential in Corollary~\ref{Cor_power type with hard edges} reduces to the so-called \emph{box potential}, namely $V_{ \rm h }$ vanishes inside the unit ball and equals $+\infty$ outside. For this box potential, the associated equilibrium measure is given by 
\begin{equation}
d\mu_{ \rm box }(x) =  \frac{ \Gamma(\frac{s}{2}+1 )   }{ \Gamma(\frac{s-d}{2} +1)\Gamma(\frac{d}{2}+1) }  \, (1-|x|^2)^{ \frac{s-d}{2} }  \cdot  \mathbbm{1}_{ \{ |x| \le 1 \} } \, \frac{ \Gamma(\frac{d}{2}+1) }{\pi^{ d/2 }}\,dx,  
\end{equation}
as read off from (\ref{def of power type eq msr}).
This result goes back to Riesz \cite{Riesz38}; see also 
\cite[Theorem~1.1]{CSW22} and \cite{Be85}. 
We also mention that for $-2< s \le d-2$, the equilibrium measure for the box potential is known to become singular and concentrates uniformly on the boundary of the unit ball. For $s=d-1$ this result can be deduced by projecting from  a uniformly charged surface of a $d+1$-dimensional ball with respect to the Coulomb potential in dimension $d+1$; see  \cite[Proposition~4.2]{BF25a}.
\smallskip 
\\
2.~(Quadratic potential)
From (\ref{def of V power poly}), for $n=1$ the potential $V$ is quadratic, while (\ref{def of power type eq msr}) gives that the density is proportional to $(1 - |x|^2)^{1+(s-d)/2}$. It is of interest to compare this to the density profile for $d$-dimensional spinless fermions in a harmonic trap, in the scaled large $N$ limit. For the latter, one has the Thomas--Fermi density which is proportional to $(1 - |x|^2)^{d/2}$, where $|x|$ is measured in units of the maximum shell label; see e.g.~\cite{Ca06}. For $s \in [d-2,d)$ as required by Lemma \ref{Lem_ak for power type eq msr} and Corollary \ref{Cor_power type with hard edges}, the exponents of the densities only coincide for $d=1$, $s=0$. In fact then there is an equivalence at finite $N$ between the squared ground state wave function of the Fermi gas and the Boltzmann factor of the log-gas system with coupling $\beta = 2$; see e.g.~\cite{DLMS19}.

\end{rem}
  
Note that if we consider the potential $V$ in \eqref{def of V power} with soft edges, instead of $V_{\mathrm{h}}$ as in Corollary~\ref{Cor_power type with hard edges}, not every choice of $\alpha > -1$ yields a valid potential. In particular, the inequality \eqref{inequality for general} fails for arbitrary $\alpha > -1$. For instance, in \eqref{def of V power poly}, when $n$ is even the leading term of the polynomial is negative, which is not admissible for defining the model, in the sense that the second Euler-Lagrange condition cannot be satisfied. On the other hand, when $n$ is odd the potential is well defined and the associated equilibrium measure can be constructed. We formalise this in the following theorem.

\begin{thm}[\textbf{Power type equilibrium measure with soft edges}] \label{Thm_eq msr power type}
Let $d \ge 1$ and $s \in [d-2,d)$.  For $m \in \mathbb{Z}_{ \ge 0 }$, let   
\begin{align}
\begin{split}  \label{def of V poly soft}
V(x) =   \sum_{k=1}^{2m+1} b_k  |x|^{2k} ,   
\qquad b_k  =   \frac{  \Gamma(\frac{d-s}{2}) \Gamma(2m+2+\frac{s}{2} ) (-1)^{k+1}  }{\Gamma(k+\frac{d}{2})  k!(2m+1-k)!  }  \frac{ \Gamma(\frac{s}{2}+k) }{ \Gamma(\frac{s}{2}+1) } .  
\end{split}
\end{align}  
Then the associated equilibrium measure $\mu$ is given by \eqref{def of power type eq msr}, where 
\begin{equation} \label{def of alpha for power}
\alpha= \frac{s-d}{2}+2m+1. 
\end{equation}
Furthermore, the associated energy $I_V[\mu]$ is given by 
\begin{align}
\begin{split}  \label{energy for power type eq msr}
I_V[\mu]&=    \frac{ \Gamma(\frac{d-s}{2})  \Gamma(2m+2+\frac{s}{2} ) }{  \Gamma(\frac{d}{2})   }  \bigg( \frac{1}{(2m)!}\frac{1}{s} + \frac12 \sum_{k=1}^{2m+1}  \frac{ (-1)^{k+1}  }{  k!(2m+1-k)!  }  \frac{ \Gamma(\frac{s}{2}+k) \Gamma(\frac{s}{2}+2m+2) }{ \Gamma(\frac{s}{2}+1)\Gamma(\frac{s}{2}+k+2m+2) }  \bigg). 
\end{split}
\end{align}
\end{thm}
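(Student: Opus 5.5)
The plan is to apply Theorem~\ref{Thm_radial}(ii) with the sequence $\{a_k\}$ from \eqref{def of ak for power type eq msr} and $\alpha=\frac{s-d}{2}+2m+1$. Lemma~\ref{Lem_ak for power type eq msr} then identifies the potential \eqref{def of V gen d} with the polynomial \eqref{def of V power poly} for $n=2m+1$, whose coefficients agree with $b_k$ in \eqref{def of V poly soft}. A polynomial is defined on all of $\mathbb R^d$, so the hypothesis of part~(ii) holds. Admissibility follows from Corollary~\ref{Cor_power type with hard edges}: the density is nonnegative and normalised, since $\alpha>-1$ because $s\ge d-2$.

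It therefore remains to verify \eqref{inequality for general} for $|x|\ge1$. I would write $r=|x|^2$ and define
\[
F(r)=\frac{2}{s}\int\frac{d\mu(y)}{|x-y|^s}+V(x)-c,
\]
where $c$ is the right-hand side of \eqref{inequality for general}. By Theorem~\ref{Thm_radial}(i), $F$ vanishes for $r\le1$. I would argue that $F$ extends continuously with $F(1)=0$ and then show $F'(r)\ge0$ for $r>1$. For the outer potential, the series in \eqref{inequality for general} can be summed in closed form. With the binomial $a_k$, one has $\sum_k a_k/(2k+d+2n)\propto \Gamma(\frac d2+n)/\Gamma(\alpha+1+\frac d2+n)$ by the Beta integral. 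The outer potential thus becomes a multiple of
\[
r^{-s/2}\,{}_2F_1\bigl(\tfrac s2,\tfrac{s-d}{2}+1;\alpha+1+\tfrac d2;1/r\bigr).
\]
Differentiating in $r$ and adding $V'(r)$ gives an explicit expression.

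The key step is to show that the derivative of the outer potential plus $V'$ is nonnegative. I would try to represent the radial derivative of the Riesz potential of $\mu$ outside the ball as an Euler-type integral, essentially $-\int_0^1 t^{\dots}(1-t)^{\alpha+\dots}(r-t)^{-\frac s2-1}\,dt$. The polynomial $V'$ should be the Taylor part of the analytic continuation of the inner ${}_2F_1$ from \eqref{def of V power}. Then $F'(r)$ should equal an integral remainder with a sign determined by $(-1)^{n+1}$ for $n=2m+1$. Concretely, I would use the connection formula relating ${}_2F_1(\cdot;r)$ to functions of $1/r$. The difference between the inner analytic continuation and the outer expression is then a single term of the form $C\,(r-1)^{\alpha+1-\frac{d-s}{2}}\times(\text{positive})=C\,(r-1)^{2m+2}\times(\text{positive})$. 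With $n$ odd, this is nonnegative and vanishes to order $2m+2$ at $r=1$, giving the inequality. Conversely, even $n$ would produce the wrong sign, consistent with the preceding remark. Confirming that $C>0$ via the Gamma factors is the main obstacle, and I would check it against the $s\to d-2$ limit, where \eqref{e.4} holds.

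Finally, the energy formula \eqref{energy for power type eq msr} follows from \eqref{energy in main result}. First, the sum $\frac ds\sum_k a_k/(2k+d-s)$ is computed by the Beta integral $\int_0^1(1-t)^\alpha t^{\frac{d-s}{2}-1}dt$. Second, $\frac12\int V\,d\mu=\frac12\sum_k b_k\int|x|^{2k}d\mu$, where each moment is again a Beta integral, yielding the ratio $\Gamma(\frac s2+2m+2)/\Gamma(\frac s2+k+2m+2)$ after simplification with $\alpha=\frac{s-d}{2}+2m+1$.
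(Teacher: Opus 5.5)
Your reduction is the paper's: Theorem~\ref{Thm_radial}(ii) with the binomial $a_k$, and the identification of $V$ with \eqref{def of V power poly} at $n=2m+1$. The Beta-integral summation that makes the outer potential a multiple of $\rho^{-s}\,{}_2F_1(\frac s2,\frac{s-d}{2}+1;\frac s2+2m+2;\rho^{-2})$, with $\rho=|x|$, also matches, as does the energy computation. Your structural guess for the key step is correct too: for $\rho>1$, the outer potential minus the polynomial continuation $c-V$ is a single connection-formula remainder times a positive function.

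\textbf{The gap.} That identity is the entire content of the theorem, and you do not establish it. You only describe what it ``should'' be and leave its sign as the main obstacle, and the specifics you give are wrong.
\begin{itemize}
\item The remainder is not $C(r-1)^{2m+2}$; note $\alpha+1-\frac{d-s}{2}=2m+2+s-d$. Write the outer function as ${}_2F_1(A,B;E;\rho^{-2})$ with $A=\frac s2$, $B=\frac{s-d}{2}+1$, $E=\frac s2+2m+2$. The singular exponent is $E-A-B=2m+1+\frac{d-s}{2}$, which is non-integral for $s\in(d-2,d)$.
\item The parity of $n$ does not enter through a power of $r-1$. It enters through $\sin(\pi(E-A-B))$, which for general $n$ (where $E=\frac s2+n+1$) equals $(-1)^n\sin(\frac{d-s}{2}\pi)$.
\item The $z\mapsto 1/z$ connection formula you name does not produce the identity. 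Applied to the terminating ${}_2F_1(\frac s2,-2m-1;\frac d2;\rho^2)$, the term that would generate the outer function carries the factor $1/\Gamma(-2m-1)=0$. Applied to the outer function, its companion term is a ${}_2F_1$ at argument $\rho^2>1$, on the branch cut, from which no sign can be read.
\item The check at $s\to d-2$ is only a consistency check, and the $F'\ge 0$ route is not carried out.
\end{itemize}

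\textbf{What is missing} is Lemma~\ref{Lem_hypergeometric inequal}. Apply instead the $z\mapsto 1-z$ connection formula (NIST 15.8.4) to the outer function at $z=\rho^{-2}$. By NIST 15.8.7 combined with Euler's transformation, the first resulting piece is exactly $\frac{\Gamma(\frac{d-s}{2})}{(2m+1)!}\,{}_2\mathbf F_1(\frac s2,-2m-1;\frac d2;\rho^2)$ (regularised ${}_2F_1$). This is $c-V$ divided by $\frac2s\Gamma(\frac s2+2m+2)$. Hence the difference of the two sides of \eqref{inequality for general} equals
\[
\frac{2}{s}\,\Gamma(\tfrac s2+2m+2)\,\rho^{-s}\,\frac{\pi}{\sin(\frac{d-s}{2}\pi)}\,\frac{(1-\rho^{-2})^{2m+1+\frac{d-s}{2}}}{\Gamma(\frac s2)\Gamma(\frac{s-d}{2}+1)}\;{}_2\mathbf F_1\bigl(2m+2,2m+1+\tfrac d2;2m+2+\tfrac{d-s}{2};1-\rho^{-2}\bigr).
\]
This is nonnegative for the following reasons:
\begin{itemize}
\item $\frac{2}{s\,\Gamma(s/2)}=\frac{1}{\Gamma(\frac s2+1)}>0$;
\item $\sin(\frac{d-s}{2}\pi)>0$ and $\Gamma(\frac{s-d}{2}+1)>0$;
\item the last factor is positive by its Euler integral representation, since its parameters satisfy $2m+2+\frac{d-s}{2}>2m+2>0$.
\end{itemize}
Once this identity is available, the sign of the constant is immediate. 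Proving the identity is the step your proposal does not supply.
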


The first few polynomials in \eqref{def of V poly soft} are given as follows. 

\begin{itemize}
    \item For $m=0$, 
    \begin{equation} \label{V purely quadratic}
V(x)|_{m=0} =     \frac{  \Gamma(\frac{d-s}{2}) \Gamma(2+\frac{s}{2} ) }{  \Gamma(\frac{d}{2}+1) } |x|^2. 
\end{equation}
    \item For $m=1$, 
\begin{align}
\begin{split}
V(x)|_{m=1}&=      \frac{ \Gamma(\frac{d-s}{2}) \Gamma(4+\frac{s}{2} ) }{ \Gamma(\frac{d}{2}+1) }  \frac{   (s+2)(s+4) }{6(d+2)(d+4)  } 
\Big(  |x|^6-\frac{3(d+4)}{ s+4 }|x|^4+\frac{3(d+2)(d+4)}{ (s+2)(s+4) }|x|^2 \Big) . 
\end{split}
\end{align} 
\end{itemize}

\begin{rem}
For the potential $V$ in \eqref{def of V power}, the equality part of the Euler-Lagrange condition \eqref{EL eqn for hard wall} can be written 
\begin{equation}\label{1.34}
\int_{ |x|<1 } \frac{ (1-|x|^2)^{\alpha} }{ |x-y|^s }\,dx =  \pi^{d/2}  \frac{ \Gamma(\frac{d-s}{2}) \Gamma(\alpha+1) }{ \Gamma( \frac{d}{2} ) \Gamma(\alpha+1+\frac{d-s}{2}) }   {}_2F_1(\tfrac{s}{2},\tfrac{s-d}{2}-\alpha;\tfrac{d}{2};|y|^2), 
 \end{equation}
 where $d \ge 1$ and $|y|<1$.
This identity was also presented in \cite[Lemma~2.4]{GCO23}, based on the theory of fractional differential operators \cite[Corollary~4]{DKK17}. For the case $d=1$; see \cite[Eq.(2.3)]{GCO22} and \cite[Eq.(9)]{ADKKMMS19}. See also \cite{GP23} and the references therein for further examples of such identities.
\end{rem}

\begin{rem}[Expression in terms of the Jacobi polynomial]
Recall the hypergeometric expression of the Jacobi polynomial \cite[Eq.(15.9.1)]{NIST}
\begin{equation}
P_n^{ (a,b) }(x) = \frac{ \Gamma(n+a+1) }{ \Gamma(a+1) \,n! } {}_2F_1( -n,n+a+b+1; a+1; \tfrac{1-x}{2} ). 
\end{equation} 
By using this, one can observe that the potential $V$ in \eqref{def of V power poly} can be expressed as   
\begin{equation}
V(x)= - \frac{2}{s}   \frac{   \Gamma(\frac{d-s}{2})  \Gamma(2m+2+\frac{s}{2} ) }{  \Gamma(2m+1+\frac{d}{2}) }    \bigg( P_{ 2m+1 }^{ (a,b) }(1-2|x|^2) - \frac{ \Gamma(2m+1+\frac{d}{2}) }{ (2m+1)! \Gamma( \frac{d}{2} ) }  \bigg),  
\end{equation}
where 
\begin{equation}
 a= \frac{d}{2}-1, \qquad b=-2m-1+\frac{s-d}{2}.
\end{equation}  
\end{rem}

\begin{rem}[Consistent check with \cite{ADKKMMS19} for the one dimensional Riesz gas]
For the case $d=1$ and $m=0$, the purely quadratic potential \eqref{V purely quadratic} takes the form 
\begin{align}
\begin{split} \label{V purely quadratic d=1}
V(x) =    \frac{ 2\, \Gamma( \frac{1-s}{2}) \Gamma( 2+\frac{s}{2}) }{ \sqrt{\pi}  } \, x^2 . 
\end{split}
\end{align} 
The associated limiting measure is then given by \eqref{def of power type eq msr}, supported on $[-1,1]$, with exponent $\alpha=\tfrac{s+1}{2}$.

We now compare this with the results of \cite{ADKKMMS19}, where the one-dimensional Riesz gas with Hamiltonian
\begin{equation}
\frac12 \sum_{j=1}^N x_j^2 + \frac{J \textup{sgn}(s)}{ 2} \sum_{ j\not=k } \frac{1}{|x_j-x_k|^s} =  \frac{J \textup{sgn}(s) s} {2}\bigg(   \frac{1}{J \textup{sgn}(s) s} \sum_{j=1}^N x_j^2 + \frac{1}{s} \sum_{ j\not=k } \frac{1}{|x_j-x_k|^s}  \bigg)
\end{equation}
was studied. Let $B(x,y)$ denote the gamma function evaluation of the Euler beta integral. In the large-$N$ limit, the limiting density was shown to be
\begin{equation}
\ell_s^{-1} F_s( y/\ell_s ), \qquad F_s(z)= \frac{1}{ B( \gamma_s+1, \gamma_s+1 ) } \Big( \frac14-z^2\Big)^{ \gamma_s }.
\end{equation}
Here, for $-2<s<1$, 
\begin{equation}
\gamma_s=  \frac{s+1}{2}, \qquad 
\ell_s = \Big( \frac{ J |s| \pi (s+1) }{ \sin( \frac{\pi}{2}(s+1) )  B(\frac{s+3}{2},\frac{s+3}{2}) } \Big)^{ \frac{1}{s+2} }. 
\end{equation} 
Note that $\gamma_s$ coincides with $\alpha=\tfrac{s+1}{2}$.  
Furthermore, to enforce $\ell_s=2$ so that the limiting support is $[-1,1]$, we choose
\begin{equation}
J= \frac{ \sin( \frac{\pi}{2}(s+1) )  B(\frac{s+3}{2},\frac{s+3}{2}) }{  |s| \pi (s+1) } 2^{s+2}.
\end{equation}
Then since
\begin{align*}
\frac{1}{ J \textup{sgn}(s) s}= \frac{\pi}{  \sin( \frac{s+1}{2}\pi )  } \frac{  |s|   (s+1) }{  B(\frac{s+3}{2},\frac{s+3}{2}) 2^{s+2} \textup{sgn}(s) s }  =   \frac{\pi   }{ \sin( \frac{1-s}{2} \pi ) } \frac{ 2\, \Gamma( 2+\frac{s}{2}) }{ \sqrt{\pi} \Gamma( \frac{1+s}{2}) },
\end{align*}
one can see that the resulting quadratic potential is consistent with \eqref{V purely quadratic d=1}.
\end{rem}

\subsection{Power type external potential}\label{S1.3}

As a second prominent example, we consider the purely power type external potential. As before, we begin with introducing the associated sequence.

\begin{lem}[\textbf{Admissible sequence for the power type external potential}] \label{Lem_ak for power type potential}
Let $d \ge 1$ and $s \in [d-2,d)$. For $p\in \mathbb{Z}_{ \ge 1 }$, let 
\begin{equation} \label{def of ak for power type potential}
a_k= - \frac{ \sin(\pi \frac{d-s}{2})  }{\pi} \frac{\Gamma(1+\frac{s}{2})}{ \Gamma( \frac{d}{2} ) } \frac{2p+s}{d}\, \frac{ \Gamma(k+\frac{d-s}{2}) }{ (k+\frac{d-s}{2}-p)k! } . 
\end{equation}
(We will see in (\ref{def of power type potential eq msr}) that this gives rise to a well defined non-negative density.)
Then the potential \eqref{def of V gen d} associated with this sequence is given by \begin{equation} \label{def of V for power type}
V(x) =    \frac{ \Gamma(\frac{d-s}{2})\Gamma(\frac{s}{2}+p)   }{ \Gamma(\frac{d}{2}+p)  } \, \frac{2p+s}{ 2p  }       |x|^{2p} . 
\end{equation}
\end{lem}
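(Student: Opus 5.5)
The plan is to compute the inner sums
\[
S_n := \sum_{k=0}^\infty \frac{a_k}{2n+s-2k-d}, \qquad n \ge 1,
\]
appearing in \eqref{def of V gen d} in closed form. We will show that $S_n=0$ for every $n\neq p$, and that $S_p$ produces exactly the coefficient in \eqref{def of V for power type}. Throughout write $\beta=\frac{d-s}{2}\in(0,1]$ and $C=-\frac{\sin(\pi\beta)}{\pi}\frac{\Gamma(1+\frac s2)}{\Gamma(\frac d2)}\frac{2p+s}{d}$. Since $2n+s-2k-d=-2(k+\beta-n)$, we have
\[
S_n=-\frac{C}{2}\sum_{k\ge0}\frac{\Gamma(k+\beta)}{k!\,(k+\beta-p)(k+\beta-n)} .
\]
The basic tool is the one-parameter family
\[
T(\gamma):=\sum_{k\ge0}\frac{\Gamma(k+\beta)}{k!\,(k+\gamma)}=\frac{\Gamma(\beta)}{\gamma}\,{}_2F_1(\beta,\gamma;\gamma+1;1).
\]
The series converges because its terms are $O(k^{\beta-2})$ and $\beta<2$ (for $\beta=1$ the terms are $O(k^{-1})$ and a limiting argument in $s$ is needed; I would treat $s>d-2$ first and pass to the limit). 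Gauss's summation theorem, applicable since $c-a-b=1-\beta>0$ for $\beta<1$, gives
\[
T(\gamma)=\frac{\Gamma(\beta)}{\gamma}\frac{\Gamma(\gamma+1)\Gamma(1-\beta)}{\Gamma(\gamma+1-\beta)}=\Gamma(\beta)\Gamma(1-\beta)\frac{\Gamma(\gamma)}{\Gamma(\gamma+1-\beta)} .
\]

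For $n\ne p$, both $n,p\ge1$, I use partial fractions:
\[
\frac{1}{(k+\beta-p)(k+\beta-n)}=\frac{1}{n-p}\Big(\frac{1}{k+\beta-n}-\frac{1}{k+\beta-p}\Big).
\]
This gives $S_n\propto T(\beta-n)-T(\beta-p)$. By the formula above, $T(\beta-m)$ contains the factor $1/\Gamma(1-m)$, which vanishes for every integer $m\ge1$. Hence $S_n=0$ for all $n\ge1$ with $n\neq p$, and the series \eqref{def of V gen d} collapses to the single monomial $|x|^{2p}$. This also shows that the second series in \eqref{def of seq conv for admissible an} trivially converges for all $x$, so $V$ is defined on all of $\mathbb R^d$, as Theorem~\ref{Thm_radial}(ii) will later require.

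For $n=p$ I need $\sum_k \Gamma(k+\beta)/(k!\,(k+\beta-p)^2)=-T'(\beta-p)$. Differentiating the closed form at $\gamma=\beta-p+\varepsilon$, the only surviving contribution comes from
\[
\frac{1}{\Gamma(1-p+\varepsilon)}=(-1)^{p-1}(p-1)!\,\varepsilon+O(\varepsilon^2),
\]
so
\[
T'(\beta-p)=\Gamma(\beta)\Gamma(1-\beta)\Gamma(\beta-p)(-1)^{p-1}(p-1)! .
\]
I would then insert this into \eqref{def of V gen d} with the prefactor $\frac{2d}{s}\frac{(s/2)_p(1-\beta)_p}{(d/2)_p\,p!}$. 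The simplification uses $\Gamma(\beta)\Gamma(1-\beta)=\pi/\sin\pi\beta$, which cancels the sine in $C$, and the reflection identity $\Gamma(\beta-p)\Gamma(1-\beta+p)=\pi/\sin(\pi(\beta-p))=(-1)^p\pi/\sin\pi\beta$, which combines $\Gamma(\beta-p)$ with $(1-\beta)_p=\Gamma(1-\beta+p)/\Gamma(1-\beta)$ to leave $\Gamma(\beta)$. The remaining signs should cancel, and the factors $\frac{1}{s}\Gamma(1+\frac s2)(\frac s2)_p/\Gamma(\frac d2)(\frac d2)_p$ reduce to $\frac12\Gamma(\frac s2+p)/\Gamma(\frac d2+p)$. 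Together with $(p-1)!/p!=1/p$, this should yield $\Gamma(\frac{d-s}{2})\Gamma(\frac s2+p)\frac{2p+s}{2p}/\Gamma(\frac d2+p)$.

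The main obstacle is justifying the analytic manipulations: Gauss's theorem at the boundary value $\gamma=\beta-m$ where $\Gamma(\gamma+1-\beta)$ has a pole, and differentiating term by term under the sum. I would handle both by working with $\gamma$ in a punctured neighbourhood of $\beta-p$, where everything is analytic and the series converges locally uniformly since its terms are $O(k^{\beta-2})$ uniformly there, and then taking limits. The boundary case $s=d-2$ (i.e.\ $\beta=1$) would be obtained by continuity in $s$. The final step is the bookkeeping of signs and Gamma factors, which is routine but error-prone.
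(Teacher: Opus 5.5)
Your proposal is correct and follows essentially the paper's route: partial fractions in $k$, evaluation of each piece by Gauss's summation theorem (the paper's identity \eqref{2.32}), the vanishing of $1/\Gamma(1-m)$ for integers $m\ge1$ to kill every $n\neq p$, and the reflection formula to reduce the $n=p$ coefficient to the stated one. Your derivative $T'(\beta-p)$ is the same quantity the paper obtains as the limit $n\to p$ in \eqref{2.32}, and your sign and Gamma-factor bookkeeping checks out. One small slip: the terms of $T(\gamma)$ are $O(k^{\beta-2})$, so they are summable only for $\beta<1$ (not $\beta<2$), and that is exactly why $s=d-2$ has to be reached by a limit, as you propose.
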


The purely power type potential \eqref{def of V for power type} has been extensively investigated for log gases. For $d=1$, it is commonly referred to as the \emph{Freud potential} (see e.g. \cite{CKM23}), while for $d=2$ it is known as the \emph{Mittag--Leffler potential} (see e.g. \cite{Ch22,Ch23,ACCL24,Be25}).
However, for general Riesz gases, the purely power-type potential has not been extensively investigated, and we formulate its equilibrium measure in the following theorem.

\begin{thm}[\textbf{Power type external potential with soft edges}] \label{Thm_eq msr for power potential} Let $d \ge 1$ and $s \in [d-2,d)$. Let  $p\in \mathbb{Z}_{ \ge 1 }$. Then the equilibrium measure $\mu$ associated with the purely power type potential $V$ in \eqref{def of V for power type} is given by 
\begin{equation} \label{def of power type potential eq msr}
d\mu(x) =  \frac{\Gamma(1+\tfrac{s}{2})  } { \Gamma(1+\frac{s-d}{2}) \Gamma(\frac{d}{2}+1) }   \frac{ 2p+s }{ 2p-d+s } \, {}_2F_1( \tfrac{d-s}{2}, \tfrac{d-s}{2}-p; \tfrac{d-s}{2}+1-p; |x|^2 ) \cdot  \mathbbm{1}_{ \{ |x| \le 1 \} } \,  \frac{ \Gamma(\frac{d}{2}+1) }{\pi^{ d/2 }}\,dx   .
\end{equation} 
Furthermore, we have 
\begin{equation} \label{energy for power type potential}
I_V[\mu]=  \frac{ \Gamma( \tfrac{d-s}{2} ) \Gamma(\frac{s}{2})}{ \Gamma( \frac{d}{2} ) } \, \frac{(2p+s)^2}{ 2 p(4p+s)  }  . 
\end{equation}
\end{thm}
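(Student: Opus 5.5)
We will prove Theorem~\ref{Thm_eq msr for power potential} by combining Lemma~\ref{Lem_ak for power type potential} with Theorem~\ref{Thm_radial}(ii), and then computing the energy from \eqref{energy in main result}.

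\textbf{Step 1: identify the density.} The sequence $a_k$ in \eqref{def of ak for power type potential} is summed into closed form. Writing $\frac{1}{k+\frac{d-s}{2}-p}=\frac{(\frac{d-s}{2}-p)_k}{(\frac{d-s}{2}-p)(\frac{d-s}{2}+1-p)_k}$ and $\Gamma(k+\frac{d-s}{2})=\Gamma(\frac{d-s}{2})(\frac{d-s}{2})_k$, the series $\sum_k a_k|x|^{2k}$ becomes a constant multiple of ${}_2F_1(\tfrac{d-s}{2},\tfrac{d-s}{2}-p;\tfrac{d-s}{2}+1-p;|x|^2)$. The reflection formula $\Gamma(\frac{d-s}{2})\sin(\pi\frac{d-s}{2})/\pi=1/\Gamma(1+\frac{s-d}{2})$ then matches the prefactor in \eqref{def of power type potential eq msr}.

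\textbf{Step 2: admissibility.} We must check three things.
\begin{itemize}
\item[(a)] Normalisation \eqref{def of norm admissible an}. This is presumably already part of Lemma~\ref{Lem_ak for power type potential}; otherwise it reduces to a Gauss-type ${}_3F_2$ (or ${}_2F_1$ at $1$) evaluation.
\item[(b)] Convergence on $[0,1]$. Since $a_k\sim C k^{-(s-d)/2-2}$, and $(s-d)/2+2>1$ for $s>d-2$, the series converges at $x=1$.
\item[(c)] Non-negativity of the density. For $k>p-\frac{d-s}{2}$ we have $a_k>0$. The finitely many early coefficients are negative, but they carry the overall minus sign.
\end{itemize}
For (c) the cleanest route is Euler's transformation ${}_2F_1(a,b;c;z)=(1-z)^{c-a-b}{}_2F_1(c-a,c-b;c;z)$, here with $c-a-b=1-\frac{d-s}{2}+p>0$. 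Alternatively, one can use the integral representation $\frac{1}{k+\frac{d-s}{2}-p}\cdot(\text{sign})$ to exhibit positivity directly. Failing that, we would show the density is decreasing in $|x|$ and positive at $|x|=1$, using the Gauss sum value at $z=1$.

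\textbf{Step 3: Euler--Lagrange inequality.} By Lemma~\ref{Lem_ak for power type potential}, the potential \eqref{def of V gen d} equals the monomial \eqref{def of V for power type}, which is defined on all of $\R^d$. Theorem~\ref{Thm_radial}(ii) then reduces the claim to \eqref{inequality for general} for $r=|x|\ge1$. Set
\[
F(r)=\frac{2d}{s}\sum_n c_n A_n r^{-2n-s}+Cr^{2p},
\]
where $c_n=\frac{(\frac{s}{2})_n(\frac{s-d}{2}+1)_n}{(\frac{d}{2})_n n!}$ and $A_n=\sum_k\frac{a_k}{2k+d+2n}$. By part (i) and continuity, $F(1)$ equals the Robin constant. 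It therefore suffices to show $F'(r)\ge0$ for $r\ge1$; since $r^{-2n-s}$ is decreasing, it is enough to show $r\,F'(r)\ge0$.

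To get there, compute $A_n$ in closed form by partial fractions in $k$. Each $A_n$ becomes a ${}_3F_2$ at $1$, which simplifies via the factor $\frac{1}{k+\frac{d-s}{2}-p}$ to gamma ratios. The expected outcome is that the exterior potential $\frac{2}{s}\int|x-y|^{-s}\,d\mu(y)$ has derivative $-(2p+s)\times(\text{positive})\,r^{-s-1}\times(\text{something}\le \frac{\text{const}}{r^{2p}})$, dominated by $2pCr^{2p-1}$. We expect to show that $r^{s+2p}\cdot(\text{exterior force})$ is increasing in $r$ and equals the interior balance at $r=1$, where the forces match because the derivative of the EL equality extends continuously.

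This last step, establishing $F'\ge0$ for all $r>1$ rather than only near $r=1$, is the main obstacle. Our first attempt will be to write the exterior potential's derivative as a single ${}_2F_1$ in $1/r^2$ with nonnegative coefficients after subtracting the $n$-independent leading behaviour. If that fails, we would use a monotonicity argument based on positivity of $\mu$ and the superharmonicity of $|x|^{-s}$ for $s\ge d-2$.

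\textbf{Step 4: energy.} Finally, \eqref{energy for power type potential} follows from \eqref{energy in main result}. Here $\sum_k \frac{a_k}{2k+d-s}$ and $\int|x|^{2p}\,d\mu=\frac{d}{c_d}c_d\sum_k\frac{a_k}{2k+2p+d}$ are both evaluated by partial fractions against $\frac{1}{k+\frac{d-s}{2}-p}$ and Gauss's summation theorem, followed by simplification with the reflection formula.
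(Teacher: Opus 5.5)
Your proposal has a genuine gap at Step~3, which is the heart of the theorem. The inequality \eqref{inequality for general} for $|x|>1$ is essentially the only content beyond Theorem~\ref{Thm_radial} and Lemma~\ref{Lem_ak for power type potential}, and you leave it as an ``expected outcome''.

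The shape you predict is also wrong. Carrying out your partial-fraction computation gives
\[
A_n=\sum_{k\ge0}\frac{a_k}{2k+d+2n}=\frac{(2p+s)\,\Gamma(1+\frac s2)}{d\,\Gamma(\frac d2)}\,\frac{\Gamma(n+\frac d2)}{(2n+2p+s)\,\Gamma(n+1+\frac s2)} .
\]
Its $n=0$ case is the normalisation $A_0=1/d$. The exterior term then collapses to $E(r)=(2p+s)\sum_{n\ge0}\frac{(1-\frac{d-s}{2})_n}{n!\,(n+\frac s2)(2n+2p+s)}\,r^{-2n-s}$, and
\[
-E'(r)=2(2p+s)\,r^{-s-1}\int_0^1u^{2p+s-1}(1-u^2/r^2)^{\frac{d-s}{2}-1}\,du .
\]
The $r$-dependent factor here decreases to $\frac{1}{2p+s}$; it is not $O(r^{-2p})$ as you expect. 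Your fallback rests on a false premise. For $s>d-2$ the kernel $g(x)=\frac1s|x|^{-s}$ satisfies $\Delta g=(s+2-d)|x|^{-s-2}>0$ away from the origin, so it is subharmonic there, not superharmonic. In any case, no qualitative principle can give the sharp comparison with the specific constant in $V$, which becomes an equality at $r=1$.

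The missing idea is the binomial-series/Beta-integral representation used in the paper (Lemma~\ref{Lem_hypergeometric inequal for power type potential}):
\begin{itemize}
\item write $\frac{2p}{(n+\frac s2)(2n+2p+s)}=2\int_0^1u^{2n+s-1}(1-u^{2p})\,du$;
\item sum over $n$ to get $(1-u^2/r^2)^{\frac{d-s}{2}-1}$;
\item substitute $u=rv$ and note that $1-(rv)^{2p}<0$ for $v>1/r$, so extending the $v$-integral to $[0,1]$ only lowers it;
\item evaluate the resulting Beta integrals, which give, up to a positive constant, exactly $c-V$.
\end{itemize}
The paper needs a regularised variant of this when $s\le 0$, $d=1$.

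Your own derivative framework closes even more directly with the same representation. By the Beta integral, $V'(r)=2(2p+s)r^{2p-1}\int_0^1 v^{2p+s-1}(1-v^2)^{\frac{d-s}{2}-1}dv$, and therefore
\[
F'(r)=2(2p+s)\,r^{2p-1}\int_{1/r}^{1}v^{2p+s-1}(1-v^2)^{\frac{d-s}{2}-1}\,dv\;\ge\;0,\qquad r\ge1 .
\]
Together with $F(1)=c$ this finishes the proof uniformly in the sign of $s$. As submitted, however, this step is absent.

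Step~2(c) is also wrong as written, though it can be repaired.
\begin{itemize}
\item The signs of the $a_k$ are reversed. Since $\sin(\pi\frac{d-s}{2})>0$, $a_k<0$ for $k>p-\frac{d-s}{2}$, and the finitely many early coefficients are positive.
\item In Euler's transformation, $c-a-b=1-\frac{d-s}{2}$, not $1-\frac{d-s}{2}+p$.
\end{itemize}
With the correct exponent, ${}_2F_1(\frac{d-s}{2},\frac{d-s}{2}-p;\frac{d-s}{2}+1-p;z)=(1-z)^{1-\frac{d-s}{2}}\,{}_2F_1(1-p,1;\frac{d-s}{2}+1-p;z)$. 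The last factor is a polynomial of degree $p-1$ with coefficients $(1-p)_j/(\frac{d-s}{2}+1-p)_j>0$, because both Pochhammer symbols are products of $j$ negative factors. This proves positivity of the density, a point the paper only asserts.

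Steps~1 and~4 are fine. Using $\int|x|^{2p}\,d\mu=d\,A_p$ for the energy is a good shortcut.
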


See Figure~\ref{Fig_density of power type potential} for the graphs of $\mu(|x|)$. 

\begin{rem}[Purely quadratic case]
We consider the case $p=1$. Then \eqref{def of V for power type} is given by 
\begin{equation}
V(x)|_{p=1} =  \frac{ \Gamma(\frac{d-s}{2})  \Gamma(\frac{s}{2}+2)   }{ \Gamma(\frac{d}{2}+1)  } \,       |x|^{2}. 
\end{equation}
This coincides with the special case \eqref{V purely quadratic} of the previous example. Note that in the one-dimensional case $d=1$, this purely quadratic potential has been extensively studied in \cite{ADKKMMS19}, where all regimes with $s > -2$ have been analysed, including both the long-range $(s<1)$ and short-range $(s>1)$ interaction regimes. 
We also note that for the log gas case, the energy \eqref{energy for power type potential} becomes 
\begin{align}
\lim_{s\to0}\Big(I_V[\mu]|_{p=1}-\frac{1}{s}\Big)= \begin{cases}
\frac{3}{4}+\log 2  &\textup{if } d=1,
\smallskip 
\\
\frac34 &\textup{if } d=2. 
\end{cases} 
\end{align}
This corresponds to the weighted energy of the semi-circle and circular law, respectively. 

Although Theorem~\ref{Thm_eq msr for power potential} is, in principle, valid for $s \ge d-2$, the explicit formula for the equilibrium measure in \eqref{def of power type potential eq msr} remains well defined for $p=1$ even in the extended regime $s = d-3$. 
In this case one finds
\begin{equation}
 \label{sdm3}
d\mu(x) = \frac{\Gamma( \frac{1+d}{2} ) }{\pi^{(1+d)/2}} \frac{1}{\sqrt{1-x^2}} \,dx \;,
\end{equation} 
which coincides exactly with the result obtained in \cite{CSW22}. If one sets $s=d-3$ and consider the case $p>1$ we observe the following phenomenon. For instance if one sets $p=2$, the formula in (\ref{def of power type potential eq msr}), with $s=d-3$, yields
\begin{equation}
 \label{sdm3_p2}
d\mu(x) = \frac{1}{\pi^{(1+d)/2}} \frac{d+1}{2} \Gamma\Big( \frac{d-1}{2}\Big) \frac{2x^2-1}{\sqrt{1-x^2}} \,dx \;.
\end{equation} 
Clearly, this density is nonnegative only for $|x|\in[\sqrt{2}/2,1]$.
Although the expression in \eqref{sdm3_p2} is likely not the correct equilibrium profile---since, in particular, it is not properly normalised over $|x|\in[\sqrt{2}/2,1]$---its qualitative behaviour is reminiscent of the observation made in \cite[Section~2.1.3]{CSW23}. There, numerical simulations in the case $s=d-3$ and $p>1$ suggest that the equilibrium measure is not supported on the entire unit ball, but rather concentrates on a spherical shell. 
\end{rem}

\begin{rem}[Freud potential for the one-dimensional log gas $d=1, s=0$]
For $d=1$ and $s=0$, the potential \eqref{def of V for power type} takes the form  
\begin{equation}
V(x) |_{d=1,s=0}= \frac{ \sqrt{\pi}\, \Gamma( p ) }{ \Gamma(p+\frac12) }  x^{2p}.
\end{equation}
Then the associated equilibrium measure is supported on $[-1,1]$ with density
\begin{equation}
\frac{ 2p }{ \pi  } |x|^{2p-1} \int_1^{ 1/|x| } \frac{ u^{2p-1} }{ \sqrt{u^2-1} }\,du ; 
\end{equation}
see e.g. \cite[Chapter IV.5]{ST97} and \cite[Eq.(1.9)]{CKM23}. 
For an application to random matrix theory, specifically to products of random matrices, we refer the reader to \cite{Fo14}.
Here, compared to \cite[Eq.(1.9)]{CKM23}, we take a proper normalisation so that the support becomes $[-1,1]$.
This density can be expressed in terms of the incomplete beta function as  
\begin{equation} 
-\frac{p}{\pi} |x|^{2p-1} B_{x^2}(\tfrac12-p,\tfrac12) = \frac{1}{\pi} \frac{2p}{2p-1} {}_2F_1( \tfrac12,\tfrac12-p; \tfrac32-p;x^2 ), 
\end{equation}
where we have used \cite[Eq.(8.17.7)]{NIST}. Therefore, one can notice that this density is a special case of \eqref{def of power type potential eq msr}.  
See also \cite{Fo14} for an application of Freud weight in the context of product of random matrices. 
\end{rem}

\begin{rem}[Edge behaviour]
By applying the Gauss summation formula \eqref{def of Gauss sum}, one can observe that for \( d-2 < s < d \), the density \eqref{def of power type potential eq msr} vanishes at the edge \( |x| = 1 \). 
Moreover, using the differentiation formula for the hypergeometric function 
(see, for example, \cite[Eq.(15.5.1)]{NIST}), we obtain
\begin{equation}
\mu(|x|)=O\Big( (1-|x|^2)^{ 1- \frac{d-s}{2} } \Big), \qquad \textup{as }|x| \to 1. 
\end{equation}
In the special case \( d=1 \) and \( s=0 \), this reproduces the classical 
square-root decay at the boundary for one-dimensional log gases. 
On the other hand, for the Coulomb case \( s = d-2 \), the density exhibits 
a discontinuity at the edge (the circular law provides a well-known example of this situation).
\end{rem}

\begin{figure}[t]
	\begin{subfigure}{0.32\textwidth}
		\begin{center}	
			\includegraphics[width=\textwidth]{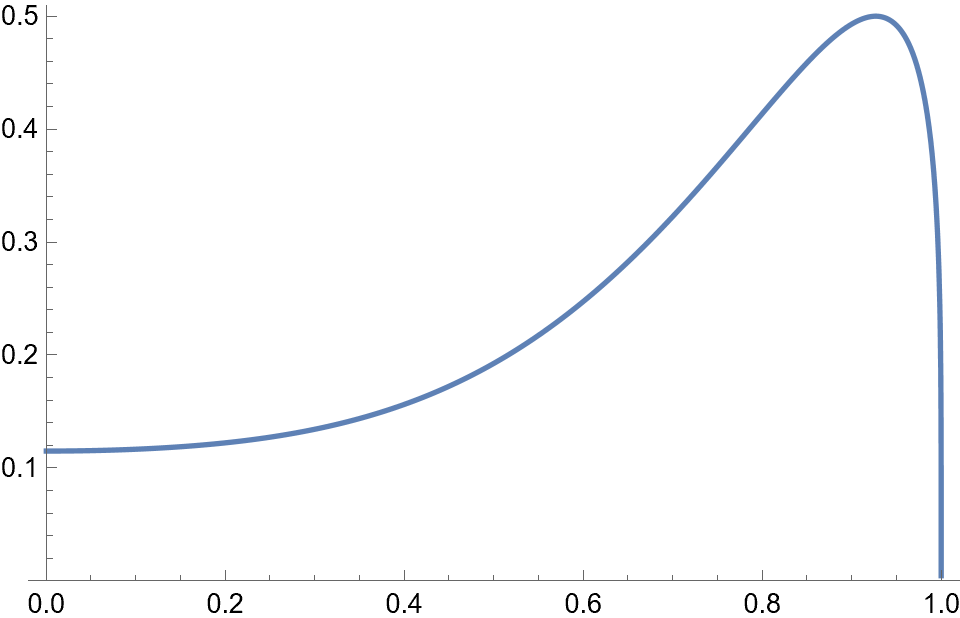}
		\end{center}
		\subcaption{$s=1/2$}
	\end{subfigure}	
		\begin{subfigure}{0.32\textwidth}
		\begin{center}	
			\includegraphics[width=\textwidth]{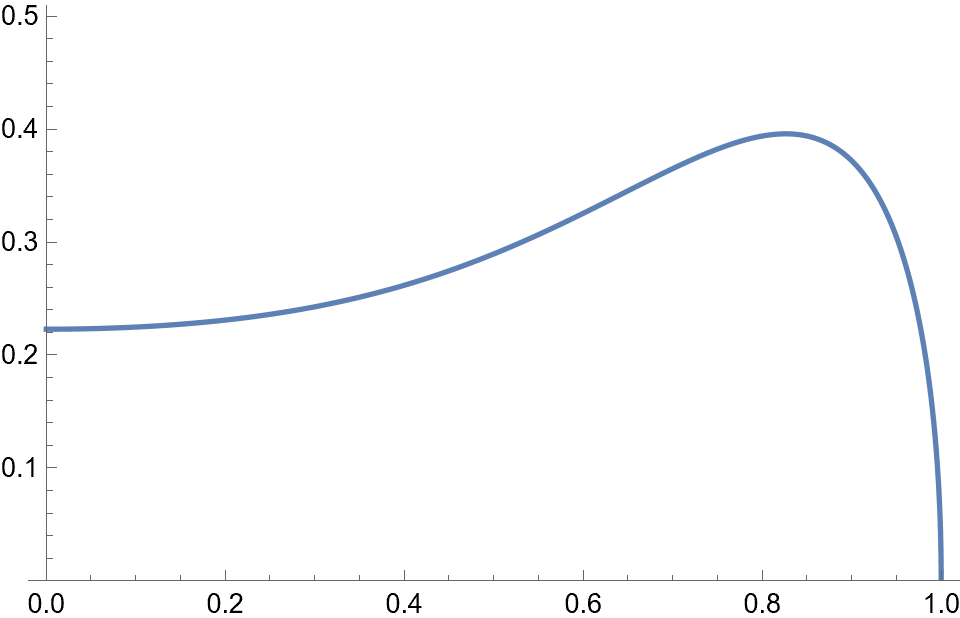}
		\end{center}
		\subcaption{$s=1$}
	\end{subfigure}
    	\begin{subfigure}{0.32\textwidth}
		\begin{center}	
			\includegraphics[width=\textwidth]{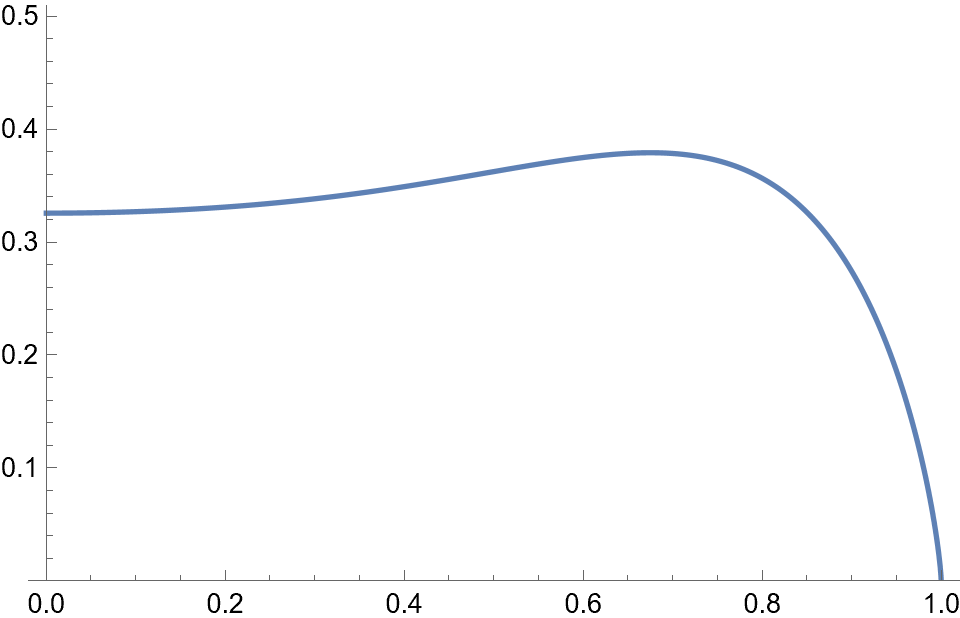}
		\end{center}
		\subcaption{$s=3/2$}
	\end{subfigure}	
	\caption{Graphs of the density $|x| \to \mu(|x|)$ in \eqref{def of power type potential eq msr}, where $d=2$ and $p=3$.} \label{Fig_density of power type potential} 
\end{figure}

\subsection{A constrained half plane equilibrium problem in the Coulomb case}\label{S1.4} 
In this subsection, we present another application of Theorem~\ref{Thm_radial} for Coulomb gases with a half-space constraint.

We begin by discussing the two-dimensional case.
Let $x = (x_0,x_1) \in \mathbb R^2$. For the two-dimensional Coulomb potential $- \log |x|$, consider the problem of finding the equilibrium measure when the external potential is given by
\begin{equation}\label{V1}
V({x}) = \begin{cases}
|{x}|^2, &\textup{if } x_0 > a,
\smallskip 
\\
+\infty, &\textup{otherwise}. 
\end{cases}
\end{equation}
In words, there is a hard wall to the left of $x_0 = a$, while to the right there is a confining harmonic potential. For $a=-\infty$ the former has no effect, and we know from 
(\ref{def of eq msr mu ak gen d}) and (\ref{e.3}) with $a_0=1$, $a_1=a_2=\cdots=0$ and $d=2$  that the equilibrium measure is the constant density ${1 \over \pi}$ supported on the unit disk. Moreover, the fact that the case $a=-\infty$ corresponds to soft wall boundary conditions tells us that this remains the equilibrium measure for all $a <-1$. In \cite{ASZ14} it was established that for $a \in (-1,\sqrt{2})$ that the equilibrium measure decomposes as a singular component supported on $x_0 = a$ and an absolutely continuous  component supported in $\mathbb R^2$. Moreover, for $a \ge \sqrt{2}$, the equilibrium density is fully supported on $x_0=a$
with the semi-circle functional form
\begin{equation}\label{V2}
\sigma(x) = \delta(x_0-a) {\sqrt{2 - x_1^2} \over \pi}
\mathbbm 1_{\{|x_1| \le \sqrt{2}\}}.
\end{equation}
 
We now consider the $(d+1)$-dimensional Riesz gas with parameter $s$, specialised to the Coulomb case so that
$s=d-1$.   
Let $\hat{x}=(x_0,x_1,\dots,x_d)$, 
and with $a \in \R$ consider the potential
\begin{equation}\label{3.1}
V(\hat{x}) = \begin{cases}
|\hat{x}|^2, &\textup{if } x_0 > a,
\smallskip 
\\
+\infty, &\textup{otherwise}. 
\end{cases}
\end{equation} 
Note that if $a=-\infty$, thus there is no constraint, then the equilibrium measure is uniform on the unit sphere in $\R^{d+1}$, 
 \begin{equation}  \label{def of eq msr without constriant}
d\mu_{V}(\hat{x})|_{a=-\infty} = \frac{ \Gamma(\frac{d+3}{2}) }{\pi^{ (d+1)/2 }}\, \mathbbm{1}_{ \{ |\hat{x}| \le 1 \} } \,d\hat{x}; 
\end{equation} 
this follows from the case $a_0=1$, $a_1=a_2=\cdots=0$ of (\ref{def of eq msr mu ak gen d}) and (\ref{e.3}). 

For \( d = 0 \) and \( d = 1 \), an interesting critical phenomenon was obtained in \cite{DKMSS17, ASZ14}. As described in the case $d=1$ in Section \ref{S1.4}, this is characterised by a critical value \( a_{\mathrm{cri}} = a_{\mathrm{cri}}(d) \) separating two of three possible regimes. We revisit those here. 
\begin{itemize}
    \item[(i)] \textbf{No effective hard wall}: \( a < -1 \). In this regime, the hard-wall constraint has no effective influence; consequently, the equilibrium measure coincides with that in \eqref{def of eq msr without constriant} for the unconstrained case.
    \smallskip
    \item[(ii)] \textbf{Partially effective hard wall}: \( a \in (-1, a_{\mathrm{cri}}) \). The equilibrium measure consists of an absolutely continuous component supported in \( \mathbb{R}^{d+1} \), together with a singular component supported on the codimension-one set \( \{a\} \times \mathbb{R}^d \).
    \smallskip
    \item[(iii)] \textbf{Fully effective hard wall}: \( a > a_{\mathrm{cri}} \). The equilibrium measure becomes completely confined to \( \{a\} \times \mathbb{R}^d \).
\end{itemize}

To be more precise the following was shown.
\begin{itemize}
    \item In one-dimension $d+1=1$, it was shown in \cite{DKMSS17} that the threshold is at $a_{ \rm cri }(0)=1$, and for $a>1$, the measure is purely Dirac measure 
    \begin{equation}
    d\mu_V(x)= \delta_a(dx_0). 
    \end{equation} 
    \item In two-dimension $d+1=2$, it was shown in \cite{ASZ14} that the threshold is at $a_{ \rm cri }(1)=\sqrt{2}$. For $a>\sqrt{2}$, the measure is purely one-dimensional support 
    with corresponding density (\ref{V2}). (See also Remark~\ref{Rem_elliptic half space} for an extension to the elliptic potential interpolating between the one- and two-dimensional regimes.)
\end{itemize} 
Such a phenomenon is a special feature of the Coulomb case. For the general Riesz gas, it might be that no fully effective hard wall arises. For instance, the log gas in one-dimension, the resulting measure is absolutely continuous with respect to the one-dimensional measure. Many works have addressed this situation; as an incomplete listing see \cite{DM06,DM08,MNSV09,BDG01,KC10,RKC12,MV09,FW12,MS14}.
One should note for the Coulomb case,  the effect of the half plane constraint is not that of a balayage measure \cite{AR17, Ad18, Ch23a} (see also the recent review \cite[Section 5]{BF25a}), where all the displaced charge density due to the barrier constraint accumulates on the boundary.
The latter would hold,
 for example, if the condition $x_0 > a$ in (\ref{3.1}) was changed to $|\hat{x}| > a$. 

We can establish a partial analogue for higher dimensions.

 \begin{thm}[\textbf{Coulomb gases constrained on the half space}] \label{Thm_Coulomb half space}
Let $d=0,1,2,\dots$ and define 
\begin{equation} \label{acrit}
a_{\rm cri} \equiv a_{\rm cri}(d):= (d+1) \Big( \frac{ \Gamma(\frac{d}{2}+1)  }{ \sqrt{\pi} \, \Gamma(\frac{d+3}{2})   } \Big)^{ \frac{d}{d+1} }.
\end{equation}
Then we have the following.
\begin{itemize}
    \item[\textup{(i)}] The equilibrium measure is supported in $\R^d$ only if $a \ge a_{ \rm cri }$.  
    \smallskip 
     \item[\textup{(ii)}] Suppose that the equilibrium measure is supported in $\R^d$. 
    Introduce the notation $\hat{x}=(x_0,x) \in \R^{d+1}$ with $x=(x_1,\dots,x_d) \in \R^d$. Then the equilibrium measure is given by 
      \begin{equation} 
   d\mu_V(\hat{x})= \delta_{ a }(dx_0)\, d\mu_W(x), 
    \end{equation} 
    where $\hat{x}=(x_0,x) \in \R^d$ with $x=(x_1,\dots,x_d) \in \R^d$, and 
\begin{equation} \label{def of eq msr in a Rd}
d\mu_W(x)= \frac{2R}{\pi}   \,  (1-|x/R|^2)^{ \frac12 }  \cdot  \mathbbm{1}_{ \{ |x| \le R \} } \, \frac{ \Gamma(\frac{d}{2}+1) }{\pi^{ d/2 }} \,dx,  \qquad    R= \Big( \frac{ \sqrt{\pi} \, \Gamma(\frac{d+3}{2})   }{ \Gamma(\frac{d}{2}+1)  } \Big)^{ \frac{1}{d+1} }.  
\end{equation}
Furthermore, for any $\beta>0$, we have 
\begin{equation} \label{LD Prob in main}
\lim_{N \to \infty} \frac{ \log   \mathbb{P}[x_0 \ge a] }{ N^{ \frac{d+3}{d+1} } } = -\frac{\beta}{2} C(a;d),
\end{equation}
where  
\begin{equation}\label{Cad}
C(a;d)= a^2+ \Big( \frac{ \sqrt{\pi} \, \Gamma(\frac{d+3}{2})   }{ \Gamma(\frac{d}{2}+1)  } \Big)^{ \frac{2}{d+1} }\frac{d(d+1)}{(d-1)(d+3)}-  \frac{(d+1)^2}{(d-1)(d+3)} . 
\end{equation} 
\end{itemize} 
\end{thm}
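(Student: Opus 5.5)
The plan is to reduce the $(d+1)$-dimensional Coulomb problem to a $d$-dimensional Riesz problem with $s=d-1$, and then use Theorem~\ref{Thm_radial} and Corollary~\ref{Cor_power type with hard edges} (via Lemma~\ref{Lem_ak for power type eq msr}) on the hyperplane. Suppose the equilibrium measure is $\delta_a(dx_0)\,d\mu_W(x)$. For $\hat{x},\hat{y}$ both on $\{x_0=a\}$ the Coulomb kernel in $\R^{d+1}$ is $\frac{1}{d-1}|x-y|^{-(d-1)}$, i.e.\ the Riesz kernel $g$ in $\R^d$ with $s=d-1\in[d-2,d)$. The external field restricted to the hyperplane is $a^2+|x|^2$. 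Hence $\mu_W$ must be the $d$-dimensional $s=d-1$ equilibrium measure for $W(x)=|x|^2$; the constant $a^2$ only shifts the Robin constant.

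\textbf{Step 1: identify $\mu_W$.} By Theorem~\ref{Thm_eq msr power type} with $m=0$ (equivalently Theorem~\ref{Thm_eq msr for power potential} with $p=1$), the potential $\frac{\Gamma(\frac12)\Gamma(\frac{d+3}{2})}{\Gamma(\frac d2+1)}|x|^2=R^{d+1}|x|^2$ has equilibrium density on the unit ball proportional to $(1-|x|^2)^{1/2}$, since $\alpha=\frac{s-d}{2}+1=\frac12$. The scaling $x\mapsto x/R$ multiplies the interaction by $R^{-(d-1)}$ and the quadratic field by $R^2$. Choosing $R^{d+1}=\sqrt\pi\,\Gamma(\frac{d+3}{2})/\Gamma(\frac d2+1)$ therefore converts $|x|^2$ into the normalised field, and rescaling \eqref{def of power type eq msr} gives \eqref{def of eq msr in a Rd}. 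The factor $\frac{2R}{\pi}$ should be checked against the normalisation $\Gamma(\frac52)/(\Gamma(\frac32)\Gamma(\frac d2+1))\cdot R^{-d}$. This proves (ii) once support on the hyperplane is assumed, because uniqueness of the minimiser restricted to hyperplane measures pins down $\mu_W$.

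\textbf{Step 2: necessary condition (i).} The full $(d+1)$-dimensional Euler--Lagrange inequality requires
\[
U(x_0,x):=2\int g_{d+1}(\hat x-\hat y)\,d\sigma(\hat y)+|\hat x|^2\ge c \qquad\text{for } x_0>a,
\]
with equality on the support. I would test this at points $\hat x=(a+\epsilon,0)$ off the hyperplane and require $\partial_{x_0}U\ge0$ at $x_0=a^+$, which is necessary. By symmetry of the single layer, the normal derivative of the potential just off the sheet equals $-\frac{c_{d+1}}{2}$ times the density, up to constant conventions; this is the jump relation for the Coulomb single layer, with half the jump on each side. The derivative of $|\hat x|^2$ is $2a$. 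The binding point is where the density is maximal, namely $x=0$. This gives $2a\ge c_{d+1}\mu_W(0)\cdot(\text{const})$, and with $\mu_W(0)=\frac{2R}{\pi}\frac{\Gamma(\frac d2+1)}{\pi^{d/2}}$ it should simplify to $a\ge(d+1)R^{-d}=a_{\rm cri}$. As a sanity check, $d=1$ gives $a_{\rm cri}=2\cdot(\frac{1}{2})^{1/2}\cdot\ldots=\sqrt2$, matching \cite{ASZ14}. The main obstacle is getting the constants in this jump relation exactly right, including the $\frac{2}{s}$ versus $g$ normalisation and $c_{d+1}$.

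\textbf{Step 3: large deviation.} Using the standard LDP at speed $N^{2-s/(d+1)}$ with $s=d-1$ (so $N^{(d+3)/(d+1)}$), write
\[
C(a;d)=I_V[\sigma_a]-I_V[\mu_{V}]_{a=-\infty}.
\]
The first term is $a^2+I_W[\mu_W]$, where $I_W[\mu_W]$ is \eqref{energy for power type potential} with $p=1$, $s=d-1$, rescaled by $R$. The unconstrained energy $I$ for the uniform ball comes from \eqref{energy in main result} with $a_0=1$ and $s=d-1$ in dimension $d+1$. The remaining step is routine gamma-function simplification to reach \eqref{Cad}.
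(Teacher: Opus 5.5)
Your proposal is correct. For (ii) and the large-deviation constant you follow the paper: reduce to the $s=d-1$ Riesz problem on the hyperplane, take the $p=1$/$m=0$ measure on the unit ball, rescale by $R$, and subtract the uniform-ball energy in $\R^{d+1}$. The normalisation $\frac{2R}{\pi}$ does check out, but the constant to compare against is $\Gamma(\frac{d+3}{2})/(\Gamma(\frac32)\pi^{d/2})\,R^{-d}$, not one involving $\Gamma(\frac52)$.

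The genuine difference is in (i). The paper uses Funk--Hecke to write
$F(t,0)=\frac{2}{d-1}t^{-(d-1)}{}_2F_1(\frac{d-1}{2},\frac d2;\frac{d+3}{2};-R^2/t^2)+(a+t)^2$.
A connection formula then gives $F(t,0)-F(0,0)=t\,f(t)$ with $f(0)=2(a-a_{\rm cri})$.

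You obtain the same number from the single-layer jump relation, and your constants are exact. With $\Delta g=-c_{d+1}\delta$ one has $\partial_t\Phi(a+t,x)=-\frac{c_{d+1}}{2}(P_t*\mu_W)(x)$, where $P_t$ is the half-space Poisson kernel. Hence $\partial_tU(a+t,0)\to 2a-c_{d+1}\mu_W(0)$ as $t\to0^+$, and
\[
c_{d+1}\,\mu_W(0)=\frac{4R\,\Gamma(\frac d2+1)}{\sqrt{\pi}\,\Gamma(\frac{d+1}{2})}=2(d+1)R^{-d}=2a_{\rm cri}.
\]
The mean value theorem then gives $U(a+t,0)<c$ for small $t>0$ whenever $a<a_{\rm cri}$.

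Your route avoids the Funk--Hecke reduction and all hypergeometric identities. It also explains $a_{\rm cri}$ as a balance between the confining force $2a$ and the normal field of the charged sheet at its densest point; in $d=1$ it reduces to the paper's $2a-2\im\sqrt{x^2-2}$ at $t=0$.

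What it cannot give is the paper's global statement that $F(t,0)\ge F(0,0)$ for all $t\ge0$ once $a\ge a_{\rm cri}$. That needs the explicit formula and the monotonicity of $f$. It is not required for (i), but it is the ingredient the paper uses toward the converse in Conjecture~\ref{Cj1}.
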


The values of $C(a,d)$ for the first few values of $d$ are 
\begin{align}
C(a;0)=a^2+\frac{1}{3},
\qquad
 C(a;1) =a^2+ \frac{\log 2}{2}, \qquad
  C(a;2) =  a^2  +\frac{5}{9} \Big( \frac{(6\pi)^{3/2}}{6}-1 \Big). 
\end{align} 
Note that in (\ref{Cad}), there is a removable singularity at $d=1$, since $\lim_{d\to 1} C(a,d) = C(a,1)$ exists. Also, it is easy to check that $C(a;d)>a^2$. 
The leading asymptotic behaviour of $\mathbb{P}[ x_0 >a  ]$ in (\ref{LD Prob in main}) is similar in spirit to many works using the so-called Coulomb gas method; see e.g. \cite{MNSV11} for a result of this type relating to the case $d=1$ and $a=0$, and the review \cite{Fo14a} for an historical introduction and development.
 
As a conjecture, we propose the following, as is known from
\cite{ASZ14,DKMSS17} for $d=0,1$.

\begin{conj}\label{Cj1}
The equilibrium measure is supported in $\R^d$ if and only if $a \ge a_{ \rm cri }$.  
\end{conj}

At present, we are not aware of a rigorous proof of this statement in full
generality. Nevertheless, in Appendix~\ref{Appendix_d=3} we establish
Conjecture~\ref{Cj1} in the special case \( d=3 \).

\begin{rem} \label{Rem_conjecture inequality}
By Lemma~\ref{Lem_integral ev for half space constraint}, the conjecture reduces to proving that, for $a \ge a_{\mathrm{cri}}$, $F(t,x)\ge F(0,0)$ for any $t \ge 0$ and $x \in \R^d$, where $F(t,x)$ is given by \eqref{def of F(t,x)}. 
Moreover, Lemma~\ref{Lem_EL for the vertical} shows that $F(t,0) \ge F(0,0)$ for all $t \ge 0$. Hence, it suffices to establish that $F(t,x) \ge F(t,0)$.  
To be more concrete, for given $t \ge 0$, we introduce
\begin{equation} \label{def of G(x)}
G(x):= \mathcal{G}(x) + x^2 , 
\end{equation} 
where 
\begin{equation}
\mathcal{G}(x):= \frac{4d}{d-1} \frac{1}{\pi}  \int_0^1 \Big( t^2+(x+r)^2  \Big)^{ -\frac{d-1}{2} } {}_2F_1\Big( \frac{d-1}{2}, \frac{d-1}{2}; d-1;  \frac{ 4 x r }{ t^2+( x+r )^2 }\Big) (1-r^2)^{\frac12} r^{d-1}\,dr. 
\end{equation}
With this notation, the conjecture follows once we show that $G(x) \ge G(0)$ for any $x \ge 0$. 

This property, while seemingly simple, appears to be rather delicate to verify directly.
In fact, one readily observes that $\mathcal{G}(x)-\mathcal{G}(0) \le 0$ for $x \ge 0$. Consequently, in order to establish the non-negativity of $G(x)-G(0)$, it is necessary to show that the quadratic term $x^2$ compensates for this negative contribution. Figure~\ref{Fig_graph of increasing G} illustrates the behaviour of the function $x \mapsto G(x)-G(0)$ for several values of $d$ and $t$.

We further remark that, as a consequence of
Theorem~\ref{Thm_eq msr for power potential} (or equivalently
Theorem~\ref{Thm_eq msr power type}) applied to the quadratic potential $|x|^2$, the case $t=0$ can be treated explicitly.
More precisely, when $t=0$ one has $G(x)=0$ if $x\le 1$ and $G(x) \ge 0$ if $x \ge 1$; see Figure~\ref{Fig_graph of increasing G} (A).  

Let us also note that, in the special case \( d=3 \), the function \( G(x) \)
admits an explicit representation,
\begin{equation}
G(x)|_{d=3} = 3t^2+\frac32 + \frac{1}{x}  \re \Big[(x+it)^2-1)^{\frac32}\Big] ; 
\end{equation}
see \eqref{G(x) for d=3}. Using this, we show in
Appendix~\ref{Appendix_d=3} that \( G(x) \) is an increasing function of \( x \) when \( d=3 \).
\end{rem}

\begin{figure}[t]
	\begin{subfigure}{0.32\textwidth}
		\begin{center}	
			\includegraphics[width=\textwidth]{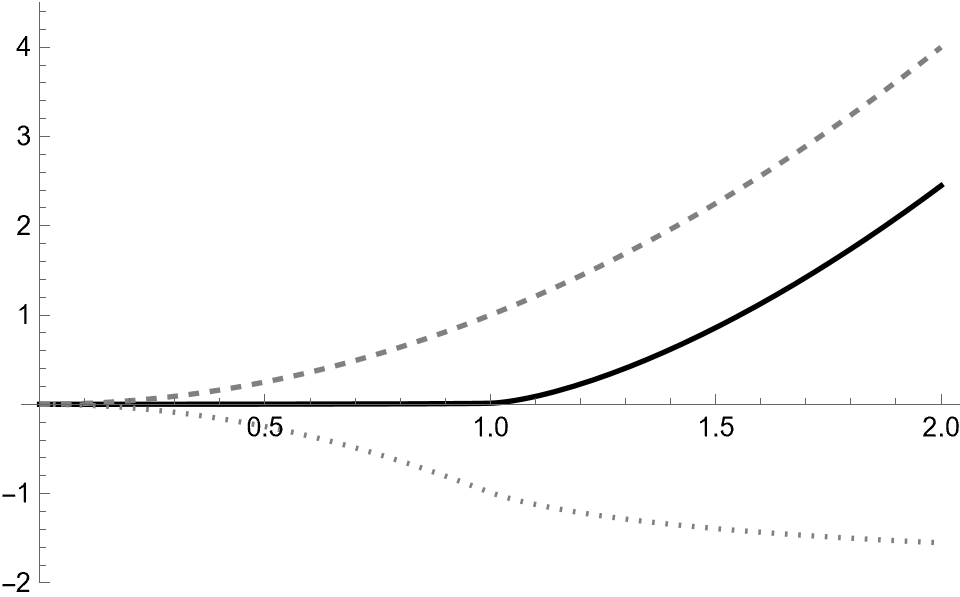}
		\end{center}
		\subcaption{$d=2,t=0$}
	\end{subfigure}	
		\begin{subfigure}{0.32\textwidth}
		\begin{center}	
			\includegraphics[width=\textwidth]{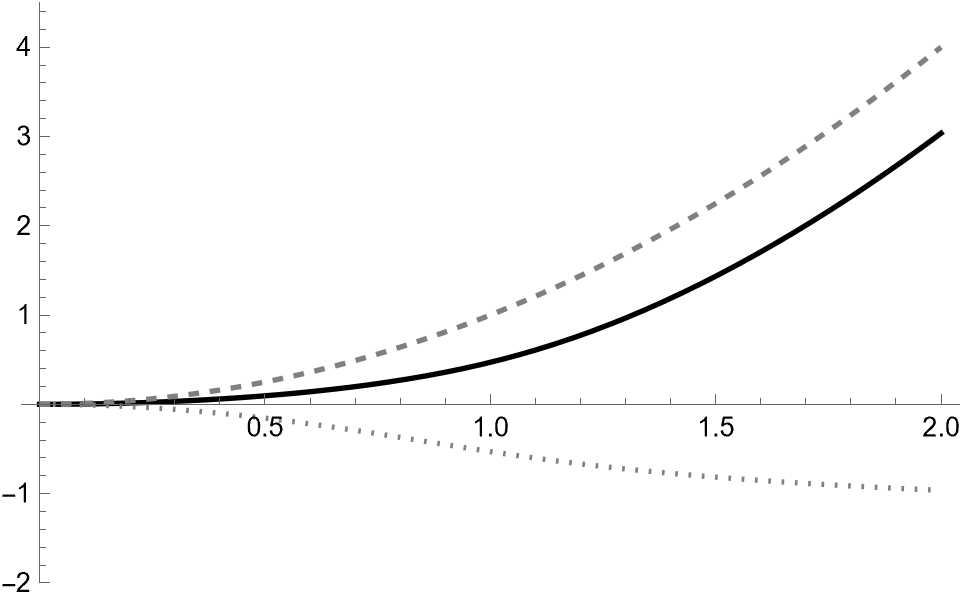}
		\end{center}
		\subcaption{$d=2,t=0.3$}
	\end{subfigure}
    	\begin{subfigure}{0.32\textwidth}
		\begin{center}	
			\includegraphics[width=\textwidth]{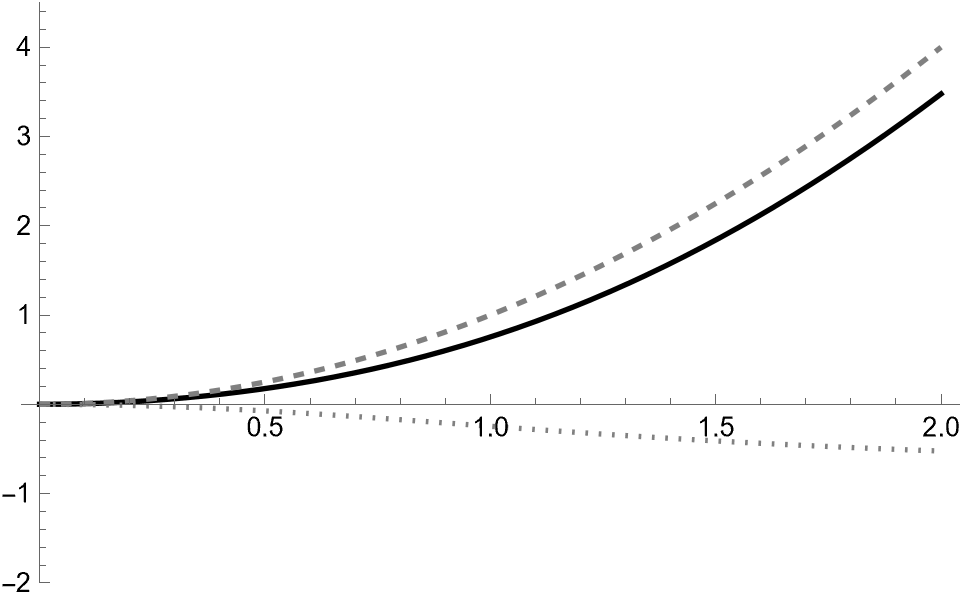}
		\end{center}
		\subcaption{$d=2,t=0.7$}
	\end{subfigure}	
	\caption{Graphs of the density $x \mapsto G(x)-G(0)$.}  \label{Fig_graph of increasing G}
\end{figure}

\begin{rem}[The limit of large dimension $d$] In this limit one finds from (\ref{acrit}) that 
\begin{equation}
a_{\rm cri}(d) = \sqrt{\frac{2d}{\pi}} + \frac{1}{\sqrt{2\pi}} \frac{\log(d)}{\sqrt{{d}}}+O(d^{-1/2}), 
\end{equation}
while from (\ref{Cad}) one has 
\begin{equation}
C(a,d) = a^2 + \frac{\log(d)}{d} + O(d^{-1}). 
\end{equation} 
\end{rem}

\medskip

\subsection*{Acknowledgments} SSB was supported by the National Research Foundation of Korea grants (RS-2023-00301976, RS-2025-00516909).
Funding support to PJF for this research was through the Australian Research Council Discovery Project grant DP250102552. SNM and GS acknowledge support from ANR Grant No. ANR-23-CE30-0020-01 EDIPS. 
This collaboration was begun during the MATRIX program “Log-gases in Caeli Australi”, held in Creswick, Victoria, Australia during the first half of August 2025. We thank the organisers for facilitating this, and for the stimulation that the program provided.

\section{Proofs of Theorems~\ref{Thm_radial}, \ref{Thm_eq msr power type}, and \ref{Thm_eq msr for power potential}}\label{S2}

This section is devoted to the proofs of Theorems~\ref{Thm_radial}, \ref{Thm_eq msr power type}, and \ref{Thm_eq msr for power potential}. Each proof is presented in a separate subsection below.

\subsection{Proof of Theorem~\ref{Thm_radial}}

In this subsection, we prove Theorem~\ref{Thm_radial}.  It is sufficient to consider the case $d-2<s \le d$, since the Coulomb gas case $s=d-2$ follows from consideration of the Poisson equation, as previously discussed.

For the probability measure $\mu$ given by \eqref{def of eq msr mu ak gen d}, let $f$ be the radial density, i.e.  
\begin{equation} \label{def of radial density ak}
f(r) =  d \sum_{k=0}^\infty a_k r^{2k+d-1}. 
\end{equation}
Notice here that by \eqref{def of norm admissible an}, we have $\int_0^1 f(r)\,dr=1$. 

The overall strategy for the proof of Theorem~\ref{Thm_radial} is summarised as follows.

\begin{itemize}
    \item In Lemma~\ref{Lem_angular integral gen}, using rotation invariance, we rewrite the potential 
    $
    \int |x-y|^{-s}\, d\mu(x)
    $
    originally given as a $d$-dimensional integral, in terms of a one-dimensional real integral involving the radial density \eqref{def of radial density ak}. Here, the key input is the Funk-Hecke formula \eqref{eq for Funk-Hecke}. As a consequence, the expression contains a certain hypergeometric function; see \eqref{2.2}. 
    \smallskip 
   \item In Lemma~\ref{Lem_integral expansion gen}, by applying the Gauss series expansion of the hypergeometric function \eqref{def of 2F1 Gauss series} together with the quadratic transformation \eqref{quad trans 2F1}, we express the integral from the previous step as a series involving the moments of the radial density. Up to this step, we exclude the case $d=1$ and assume $d \ge 2$. 
\smallskip 

\item In Proposition~\ref{Prop_evaulation of the EL conditions}, we evaluate this series using the expansion \eqref{def of radial density ak}. A crucial step here is an algebraic identity (Lemma~\ref{Lem_algebraic equal}) which allows us to significantly simplify the resulting expression. This proposition also applies when $d=1$, in which case the result follows directly without the previous reduction steps. Consequently, for general $d \ge 1$ and $s \in (d-2,d)$, we can explicitly rewrite the Euler--Lagrange equations in the form stated in Theorem~\ref{Thm_radial}.
\end{itemize}
We begin with the first step.  

\begin{lem}  \label{Lem_angular integral gen}
Let $d \ge 2$ and $s \in (d-2,d)$. For the probability measure $\mu$ in \eqref{def of eq msr mu ak gen d}, let $f$ be the associated radial density \eqref{def of radial density ak}. 
Let $y \in \mathbb{R}^{d}$. Then we have 
\begin{align} \label{2.2}
\int \frac{ d\mu(x) }{ |x-y|^s } =  \int_{ 0 }^1    \frac{1}{(|y|+r)^s}    \, {}_2F_1\Big(  \frac{s}{2}, \frac{d-1}{2} \, ; \, d-1 \, ; \,   \frac{4r|y|}{(|y|+r)^2} \Big)\, f(r) \,dr. 
\end{align}
Consequently, the Euler-Lagrange variational conditions \eqref{def of EL eqn} are given by 
\begin{equation}
\frac{2}{s} \int_{ 0 }^1    \frac{1}{(|y|+r)^s}    \, {}_2F_1\Big(  \frac{s}{2}, \frac{d-1}{2} \, ; \, d-1 \, ; \,   \frac{4r|y|}{(|y|+r)^2} \Big)\, f(r) \,dr +V(y)  \begin{cases}
= c & \textup{if } |y| \le 1,
\smallskip 
\\
\ge c & \textup{if } |y| > 1. 
\end{cases}
\end{equation}
\end{lem}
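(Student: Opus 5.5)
We pass to polar coordinates $x=r\omega$ with $\omega\in S^{d-1}$. Since $\mu$ is radial,
\[
\int \frac{d\mu(x)}{|x-y|^s}=\int_0^1 \bigg(\frac{1}{c_d}\int_{S^{d-1}}\frac{d\omega}{|r\omega-y|^s}\bigg) f(r)\,dr .
\]
This holds because $d\mu(x)=\frac{d}{c_d}\sum_k a_k r^{2k}\,r^{d-1}dr\,d\omega = \frac{1}{c_d} f(r)\,dr\,d\omega$. So the task reduces to computing the normalised spherical average
\[
A(r,|y|):=\frac{1}{c_d}\int_{S^{d-1}}|r\omega-y|^{-s}\,d\omega .
\]
By rotation invariance we may take $y=|y|e_1$, so that $|r\omega-y|^2=r^2+|y|^2-2r|y|t$ with $t=\omega\cdot e_1$.

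The Funk--Hecke formula reduces the average to a one-dimensional integral:
\[
A=\frac{c_{d-1}}{c_d}\int_{-1}^1 (r^2+|y|^2-2r|y|t)^{-s/2}(1-t^2)^{\frac{d-3}{2}}\,dt .
\]
Here $c_{d-1}/c_d=\Gamma(\frac d2)/(\sqrt\pi\,\Gamma(\frac{d-1}{2}))$, and this requires $d\ge2$.

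To evaluate the integral, set $t=1-2u$, $u\in[0,1]$. Then
\[
r^2+|y|^2-2r|y|t=(r+|y|)^2\bigl(1-zu\bigr),\qquad z=\frac{4r|y|}{(r+|y|)^2}\in[0,1],
\]
and $(1-t^2)^{\frac{d-3}{2}}dt=2^{d-2}u^{\frac{d-3}{2}}(1-u)^{\frac{d-3}{2}}du$. The integral becomes $(r+|y|)^{-s}2^{d-2}\int_0^1 u^{b-1}(1-u)^{c-b-1}(1-zu)^{-s/2}du$ with $b=\frac{d-1}{2}$ and $c=d-1$. Euler's integral representation identifies this as
\[
B\big(\tfrac{d-1}{2},\tfrac{d-1}{2}\big)\,{}_2F_1\big(\tfrac s2,\tfrac{d-1}{2};d-1;z\big).
\]
It remains to check that the constants collapse to $1$, i.e.
\[
\frac{\Gamma(\frac d2)}{\sqrt\pi\,\Gamma(\frac{d-1}{2})}\,2^{d-2}\,\frac{\Gamma(\frac{d-1}{2})^2}{\Gamma(d-1)}=1 .
\]
This follows from the Legendre duplication formula $\Gamma(d-1)=2^{d-2}\Gamma(\frac{d-1}{2})\Gamma(\frac d2)/\sqrt\pi$. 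As a sanity check, at $y=0$ we get $z=0$, and the formula gives $r^{-s}$, which is correct. This yields \eqref{2.2}.

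The main obstacle is justifying convergence and the Fubini swap at the boundary value $z=1$, which occurs when $r=|y|\le1$. There ${}_2F_1(\frac s2,\frac{d-1}{2};d-1;1)$ converges precisely when $c-a-b=\frac{d-1-s}{2}>0$. For $s\in[d-1,d)$ the hypergeometric function instead has an integrable singularity, of order $|1-z|^{(d-1-s)/2}$ or logarithmic. Since $1-z=(r-|y|)^2/(r+|y|)^2$, this gives a singularity of order $|r-|y||^{d-1-s}$, which is integrable in $r$ because $s<d$. Tonelli handles the interchange, since all integrands are non-negative: $f\ge0$ by admissibility.

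Finally, the Euler--Lagrange statement follows by substituting \eqref{2.2} into \eqref{def of EL eqn} with $g(x)=\frac1s|x|^{-s}$. The support of $\mu$ is the closed unit ball, because the density is nonnegative and given by the admissible series. Hence equality holds for $|y|\le1$ and the inequality for $|y|>1$.
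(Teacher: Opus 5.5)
Your argument is correct and is essentially the paper's proof. Funk--Hecke reduces the spherical average to a one-dimensional integral, and the substitution $u=\cos^2(\theta/2)$ (the paper's half-angle step) produces $(r+|y|)^{-s}\int_0^1 u^{\frac{d-3}{2}}(1-u)^{\frac{d-3}{2}}(1-zu)^{-s/2}\,du$. The constants then collapse by the duplication formula. The only difference is that you quote Euler's integral representation, whereas the paper expands $(1-zu)^{-s/2}$ binomially and integrates termwise, which is the same computation.

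One small slip: for $r^2+|y|^2-2r|y|t=(r+|y|)^2(1-zu)$ you need $t=2u-1$, not $t=1-2u$ (which gives $1-z(1-u)$). This is harmless, because the weight is symmetric under $u\mapsto 1-u$.
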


\begin{proof}
First, we recall the Funk-Hecke formula; see e.g. \cite[Appendix A.2]{CSW22}. For $d \ge 2$, let $\sigma_{S_1}$ be the uniform probability measure on the unit centered sphere $S_1= \{ x\in\R^d: |x|=1 \}. $ Then for $ z \in \R^d $ with $|z|=1$, we have 
\begin{equation} \label{eq for Funk-Hecke}
\int_{ S_1 } g(z \cdot x) \,d\sigma_{S_1}(x) = \frac{ \Gamma( \frac{d}{2} ) }{ \sqrt{\pi}\, \Gamma( \frac{d-1}{2} ) } \int_{-1}^1 g(t) (1-t^2)^{ \frac{d-3}{2} }\,dt. 
\end{equation}
By applying this, we have 
\begin{align}
\begin{split} \label{2.8}
\int \frac{ d\mu(x) }{ |x-y|^s } & =\frac{ \Gamma(\frac{d}{2}+1) }{\pi^{ d/2 }} \frac{1}{d} \int_0^1 \bigg( \int_{ S_1 } \Big( |y|^2+r^2-2|y|r \frac{y}{|y|} \cdot u \Big)^{ -\frac{s}{2} } \,du \bigg) f(r)\,dr
\\
&=  \frac{ 1  }{ \sqrt{\pi} }  \frac{  \Gamma(\frac{d}{2} ) }{   \Gamma( \frac{d-1}{2} ) }    \int_0^1 \bigg( \int_{-1}^1   (1-t^2)^{ \frac{d-3}{2} } \Big( |y|^2+r^2-2|y|r t \Big)^{ -\frac{s}{2} }   \,dt \bigg) f(r)\,dr.   
\end{split}
\end{align}
Here, we have used $|S_1|= 2\pi^{d/2}/\Gamma(d/2)$. 
Note that 
\begin{align*}
&\quad \int_{-1}^1   (1-t^2)^{ \frac{d-3}{2} } \Big( |y|^2+r^2-2|y|r t \Big)^{ -\frac{s}{2} }   \,dt  = \int_0^{\pi}  \sin^{ d-2 }(\theta) \Big( |y|^2+r^2-2|y|r \cos(\theta) \Big)^{ -\frac{s}{2} }   \,d\theta 
 \\
 &= 2^{d-2}\int_0^{\pi} \sin^{ d-2 }(\theta/2)\cos^{d-2}(\theta/2) \Big( (|y|+r)^2-4|y|r \cos^2(\theta/2) \Big)^{ -\frac{s}{2} }    \,d\theta 
 \\
 &= \frac{ 2^{d-1} }{ (|y|+r)^s } \int_0^{\pi/2}  \sin^{ d-2 }(\theta)\cos^{d-2}(\theta) \Big( 1-\frac{4|y|r}{ (|y|+r)^2 } \cos^2(\theta) \Big)^{ -\frac{s}{2} }    \,d\theta . 
\end{align*}
Since 
$$
\frac{4|y|r}{(|y|+r)^2}<1, \qquad \textup{for } r \in (0,1), 
$$
it follows from the binomial expansion \eqref{eq for binomial expansion} that
\begin{align*}
&\quad  \int_0^{\pi/2}  \sin^{ d-2 }(\theta)\cos^{d-2}(\theta) \Big( 1-x \cos^2(\theta) \Big)^{ -\frac{s}{2} }    \,d\theta  =  \sum_{k=0}^\infty \frac{ \Gamma(\frac{s}{2}+k) }{ \Gamma(\frac{s}{2}) k! }x^k   \int_0^{\pi/2}  \sin^{d-2}(\theta)\cos^{2k+d-2}(\theta) \, d\theta   
 \\
 &= \frac12 \sum_{k=0}^\infty \frac{ \Gamma(\frac{s}{2}+k) }{ \Gamma(\frac{s}{2}) k! }  \frac{ \Gamma( \frac{d-1}{2}  ) \Gamma( k+\frac{d-1}{2} ) }{  \Gamma(k+d-1)  } x^k  = \frac{ \Gamma( \frac{d-1}{2} )^2 }{ 2\Gamma(d-1) } {}_2F_1\Big( \frac{d-1}{2}, \frac{s}{2}; d-1; x\Big). 
\end{align*}
Here, we have used  
\begin{align*}
\int_0^{\pi/2} \sin^n(\theta) \cos^m(\theta)\,d\theta = \frac12 \frac{ \Gamma(\frac{n+1}{2}) \Gamma(\frac{m+1}{2} ) }{ \Gamma( \frac{m+n}{2}+1 ) }, \qquad (n,m>-1).  
\end{align*} 
Then we obtain  
\begin{align} \label{2.9}
 \int_{-1}^1   (1-t^2)^{ \frac{d-3}{2} } \Big( |y|^2+r^2-2|y|r t \Big)^{ -\frac{s}{2} }   \,dt  =  \frac{ 2^{d-2} }{ (|y|+r)^s }    \frac{ \Gamma( \frac{d-1}{2} )^2 }{  \Gamma(d-1) } {}_2F_1\Big( \frac{d-1}{2}, \frac{s}{2}; d-1;  \frac{4|y|r}{ (|y|+r)^2 }\Big).  
\end{align}
Combining \eqref{2.8} and \eqref{2.9}, the lemma follows. 
\end{proof}

\begin{rem} 
For the special case $d=2,s=1$, the hypergeometric functions can be expressed 
in terms of the complete elliptic integral
\begin{equation}
K(z) := \int_0^{\pi/2} \frac{1}{ \sqrt{ 1-z \sin^2\theta } } \,d\theta = \frac{\pi}{2} {}_2F_1( \tfrac12, \tfrac12; 1; z ). 
\end{equation} 
\end{rem}

Next, we evaluate the integral in \eqref{2.2}.

\begin{lem} \label{Lem_integral expansion gen}
For $|y|>1$, we have 
\begin{align}
\begin{split}\label{2.8a}
& \quad \int_{ 0 }^1    \frac{1}{(|y|+r)^s}    \, {}_2F_1\Big(  \frac{s}{2}, \frac{d-1}{2} \, ; \, d-1 \, ; \,   \frac{4r|y|}{(|y|+r)^2} \Big)\, f(r) \,dr 
\\
& =   \frac{ \Gamma( \frac{d}{2} ) }{ \Gamma(\frac{s}{2}) \Gamma( \frac{s-d+2}{2} )  }    \sum_{n=0}^\infty \frac{ \Gamma(\frac{s}{2}+n) \Gamma( \frac{s-d+2}{2}+n )  }{ \Gamma(\frac{d}{2}+n) \, n! } \int_0^1   r^{ 2n }  f(r)\,dr\, \frac{1}{|y|^{2n+s}}. 
\end{split}
\end{align}
For $|y|<1$, we have 
\begin{align}
\begin{split}
&\quad \int_{ 0 }^1    \frac{1}{(|y|+r)^s}    \, {}_2F_1\Big(  \frac{s}{2}, \frac{d-1}{2} \, ; \, d-1 \, ; \,   \frac{4r|y|}{(|y|+r)^2} \Big)\, f(r) \,dr 
\\
&=  \frac{ \Gamma( \frac{d}{2} ) }{ \Gamma(\frac{s}{2}) \Gamma( \frac{s-d+2}{2} )  }    \sum_{n=0}^\infty \frac{ \Gamma(\frac{s}{2}+n) \Gamma( \frac{s-d+2}{2}+n )  }{ \Gamma(\frac{d}{2}+n) \, n! } \int_0^{|y|}  r^{ 2n }  f(r)\,dr\, \frac{1}{|y|^{2n+s}}
\\
&\quad +  \frac{ \Gamma( \frac{d}{2} ) }{ \Gamma(\frac{s}{2}) \Gamma( \frac{s-d+2}{2} )  } \sum_{n=0}^\infty \frac{ \Gamma(\frac{s}{2}+n) \Gamma( \frac{s-d+2}{2}+n )  }{ \Gamma(\frac{d}{2}+n) \, n! }  \int_{|y|}^{1} \frac{1}{r^{2n+s}} f(r) \,dr \, |y|^{2n}. 
\end{split}
\end{align}
\end{lem}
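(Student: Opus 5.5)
The plan is to use a quadratic transformation to rewrite the integrand of \eqref{2.2} as a Gauss series in the ratio of the smaller to the larger of $r$ and $|y|$, and then integrate term by term. The transformation we need is
\begin{equation*}
{}_2F_1\Big(a,b;2b;\frac{4z}{(1+z)^2}\Big)=(1+z)^{2a}\,{}_2F_1\big(a,a-b+\tfrac12;b+\tfrac12;z^2\big),\qquad |z|<1,
\end{equation*}
applied with $a=\frac{s}{2}$ and $b=\frac{d-1}{2}$, so that $2b=d-1$, $a-b+\frac12=\frac{s-d+2}{2}$ and $b+\frac12=\frac d2$.

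For $|y|>1$ and $r\in(0,1)$ we set $z=r/|y|<1$. Since $\frac{4r|y|}{(|y|+r)^2}=\frac{4z}{(1+z)^2}$ and $(|y|+r)^{-s}=|y|^{-s}(1+z)^{-s}$, the prefactor $(1+z)^{s}$ coming from the transformation exactly cancels $(1+z)^{-s}$. The integrand therefore becomes
\begin{equation*}
|y|^{-s}\,{}_2F_1\big(\tfrac s2,\tfrac{s-d+2}{2};\tfrac d2;r^2/|y|^2\big).
\end{equation*}
Expanding this by the Gauss series \eqref{def of 2F1 Gauss series} produces precisely the coefficients $\frac{\Gamma(\frac d2)}{\Gamma(\frac s2)\Gamma(\frac{s-d+2}{2})}\frac{\Gamma(\frac s2+n)\Gamma(\frac{s-d+2}{2}+n)}{\Gamma(\frac d2+n)n!}$ multiplying $r^{2n}|y|^{-2n-s}$. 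For $|y|>1$ the series converges uniformly on $[0,1]$, because $r^2/|y|^2\le |y|^{-2}<1$, and $f$ is integrable. Interchanging sum and integral then gives \eqref{2.8a}.

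For $|y|<1$ we split $\int_0^1=\int_0^{|y|}+\int_{|y|}^1$. On the first piece we proceed exactly as above with $z=r/|y|$. On the second piece we use the symmetric choice $z=|y|/r$: then $(|y|+r)^{-s}=r^{-s}(1+z)^{-s}$ and the argument $\frac{4r|y|}{(|y|+r)^2}$ is symmetric in $r$ and $|y|$. This yields the terms $r^{-2n-s}|y|^{2n}$ with the same coefficients, which is the stated formula.

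The main obstacle is justifying the termwise integration near $r=|y|$, where $z\to1$ and the series is no longer uniformly convergent. Here we use $s<d$. Then $c-a-b=\frac d2-\frac s2-\frac{s-d+2}{2}=d-s-1$, so the coefficients behave like $n^{s-d}$. When $d-s>1$ the series converges absolutely at $z=1$, and dominated convergence applies directly since all terms are nonnegative (with $s>d-2$, all the Gamma arguments are positive). When $d-s\le1$ the terms are still nonnegative, so monotone convergence (Tonelli) permits the interchange, with the singularity of the ${}_2F_1$ at $r=|y|$ integrable against $f$. In either case positivity of the terms lets us invoke Tonelli, provided we check finiteness, which is guaranteed because the left-hand side is a finite Riesz potential of a bounded density ($s<d$).
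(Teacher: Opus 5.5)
Your proposal is correct and follows essentially the same route as the paper: the quadratic transformation with $a=\frac{s}{2}$, $b=\frac{d-1}{2}$, then the Gauss series and termwise integration, with the integral split at $r=|y|$ and the reciprocal variable used on the outer piece. You parametrize directly by $z=r/|y|$ or $z=|y|/r$, where the paper substitutes $r=|y|t$ and then $t\mapsto 1/u$, and your Tonelli argument justifies the interchange of sum and integral, a step the paper leaves implicit.
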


\begin{proof}
For simplicity we write $w=|y|.$
By the change of variables $r \mapsto w t$, we have 
\begin{align*}
&\quad  \int_{ 0 }^1    \frac{1}{(w+r)^s}    \, {}_2F_1\Big(  \frac{s}{2}, \frac{d-1}{2} \, ; \, d-1 \, ; \,   \frac{4rw}{(w+r)^2} \Big)\, f(r) \,dr 
\\
& =  \frac{1}{w^{s-1}}  \int_0^{1/w}  \frac{1}{(1+ t)^s}    \, {}_2F_1\Big(  \frac{s}{2}, \frac{d-1}{2} \, ; \, d-1 \, ; \,   \frac{4t}{(1+ t)^2} \Big)\, f(wt)  \,dt.
\end{align*}
We use the quadratic transform: for $0\le z \le 1$,
\begin{equation} \label{quad trans 2F1}
{}_2F_1 \Big( a,b,2b, \frac{4z}{(1+z)^2} \Big) = (1+z)^{2a} {}_2F_1\Big( a,a+\frac12-b, b+\frac12, z^2 \Big); 
\end{equation}
see \cite[Eq.2.11(5)]{Bateman53}. 

By applying \eqref{quad trans 2F1}, for $w>1$, we have 
\begin{align*}
&\quad \frac{1}{w^{s-1}}  \int_0^{1/w}  \frac{1}{(1+ t)^s}    \, {}_2F_1\Big(  \frac{s}{2}, \frac{d-1}2 \, ; \, d-1 \, ; \,   \frac{4t}{(1+ t)^2} \Big)\, f(wt)  \,dt 
\\
&= \frac{1}{w^{s-1}}  \int_0^{1/w}  {}_2F_1\Big(  \frac{s}{2}, \frac{s-d+2}{2} \, ; \, \frac{d}{2} \, ; \,   t^2 \Big)\, f(wt)  \,dt
= \frac{1}{w^{s}}  \int_0^{1}  {}_2F_1\Big(  \frac{s}{2}, \frac{s-d+2}{2} \, ; \, \frac{d}{2}\, ; \,   \frac{r^2}{w^2} \Big)\, f(r)  \,dr 
\\
&=   \frac{ \Gamma( \frac{d}{2} ) }{ \Gamma(\frac{s}{2}) \Gamma( \frac{s-d+2}{2} )  }    \sum_{n=0}^\infty \frac{ \Gamma(\frac{s}{2}+n) \Gamma( \frac{s-d+2}{2}+n )  }{ \Gamma(\frac{d}{2}+n) \, n! } \int_0^1   r^{ 2n }  f(r)\,dr\, \frac{1}{w^{2n+s}}. 
\end{align*}
Here, we have used the definition \eqref{def of 2F1 Gauss series}.  

In contrast to the case $w>1$, for $w<1$ the application of the quadratic transformation \eqref{quad trans 2F1} requires splitting the integral:
\begin{align*}
&\quad \frac{1}{w^{s-1}}  \int_0^{1/w}  \frac{1}{(1+ t)^s}    \, {}_2F_1\Big(  \frac{s}{2}, \frac{d-1}{2} \, ; \, d-1 \, ; \,   \frac{4t}{(1+ t)^2} \Big)\, f(wt)  \,dt = \RN{1}+\RN{2},  
\end{align*}
where 
\begin{align*}
\RN{1} & =  \frac{1}{w^{s-1}}  \int_0^{1}  \frac{1}{(1+ t)^s}    \, {}_2F_1\Big(  \frac{s}{2}, \frac{d-1}{2} \, ; \, d-1 \, ; \,   \frac{4t}{(1+ t)^2} \Big)\, f(wt)  \,dt, 
\\
\RN{2} & =  \frac{1}{w^{s-1}}  \int_1^{1/w}   \frac{1}{(1+ t)^s}    \, {}_2F_1\Big(  \frac{s}{2}, \frac{d-1}{2} \, ; \, d-1 \, ; \,   \frac{4t}{(1+ t)^2} \Big)\, f(wt)  \,dt. 
\end{align*}
The first term can be handled as before, which leads to 
\begin{align*}
\RN{1} =   \frac{ \Gamma( \frac{d}{2} ) }{ \Gamma(\frac{s}{2}) \Gamma( \frac{s-d+2}{2} )  }    \sum_{n=0}^\infty \frac{ \Gamma(\frac{s}{2}+n) \Gamma( \frac{s-d+2}{2}+n )  }{ \Gamma(\frac{d}{2}+n) \, n! } \int_0^w  r^{ 2n }  f(r)\,dr\, \frac{1}{w^{2n+s}}.  
\end{align*}
For the second term, by first applying the change of variable $t \mapsto 1/u$ and then \eqref{quad trans 2F1}, we have 
\begin{align*}
\RN{2} &= \frac{1}{w^{s-1}}  \int_{w}^1  \frac{ u^{s-2} }{(1+ u)^s}    \, {}_2F_1\Big(  \frac{s}{2}, \frac{d-1}{2} \, ; \, d-1 \, ; \,   \frac{4u}{(1+ u)^2} \Big)\, f(w/u)  \,du
\\
&= \frac{1}{w^{s-1}} \int_w^1 u^{s-2}  {}_2F_1\Big(  \frac{s}{2}, \frac{s}{2} \, ; \, 1 \, ; \,   u^2 \Big)  f(w/u)\,du 
\\
&=  \frac{1}{w^{s-1}} \int_1^{1/w} \frac{1}{ t^{s} }  {}_2F_1\Big(  \frac{s}{2}, \frac{s-d+2}{2} \, ; \, \frac{d}{2} \, ; \,   \frac{1}{t^2} \Big)  f(wt)\,dt
=    \int_w^{1} \frac{ 1 }{ r^{s} }  {}_2F_1\Big(  \frac{s}{2}, \frac{s-d+2}{2} \, ; \, \frac{d}{2} \, ; \,   \frac{w^2}{r^2} \Big)  f(r)\,dr. 
\end{align*}
Thus it follows that 
\begin{align*}
\RN{2} = \frac{ \Gamma( \frac{d}{2} ) }{ \Gamma(\frac{s}{2}) \Gamma( \frac{s-d+2}{2} )  } \sum_{n=0}^\infty \frac{ \Gamma(\frac{s}{2}+n) \Gamma( \frac{s-d+2}{2}+n )  }{ \Gamma(\frac{d}{2}+n) \, n! }  \int_w^{1} \frac{1}{r^{2n+s}} f(r) \,dr \, w^{2n}. 
\end{align*}
Combining all of the above, we obtain the desired result. 
\end{proof}

\begin{rem}
In the limit $s \to d - 2$, $d > 2$, the RHS of (\ref{2.8a}) simplifies to $|y|^{-s}$, which is thus the evaluation of the LHS of (\ref{2.2}) for $|y| > 1$. This is consistent with Newton's theorem as noted above (\ref{e.4}).
\end{rem}

Next, we present an algebraic identity. 

\begin{lem} \label{Lem_algebraic equal}
Let $d \ge 1$ and $s \in (d-2,d)$. Then for any $k=0,1,\dots,$   
\begin{equation}
 \sum_{n=0}^\infty \frac{ \Gamma(\frac{s}{2}+n) \Gamma( \frac{s-d}{2}+n+1 )  }{ \Gamma(\frac{d}{2}+n) \, n! } \Big( \frac{1}{2n+2k+d}+\frac{1}{ 2n-2k+s-d } \Big) =0. 
\end{equation}
\end{lem}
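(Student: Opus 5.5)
The plan is to reduce each of the two series in Lemma~\ref{Lem_algebraic equal} to a finite linear combination of Gauss sums ${}_2F_1(\cdot,\cdot;\cdot;1)$, so that the claim becomes a finite identity between Gamma ratios. We write $b=\frac{s-d}{2}$ and $c_n=\frac{\Gamma(\frac s2+n)\Gamma(b+n+1)}{\Gamma(\frac d2+n)n!}$.

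\emph{Step 1: convergence.} First we record that $c_n\asymp n^{s-d}$. Since $s<d$, each of the two series, with summands $c_n/(2n+2k+d)$ and $c_n/(2n-2k+s-d)$, converges absolutely. Hence they may be treated separately. The denominator $2n-2k+s-d$ never vanishes: $s\in(d-2,d)$ makes $s-d\notin 2\mathbb Z$.

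\emph{Step 2: absorb the rational factors into Gamma functions.} In the second series, use $\Gamma(b+n+1)=\Gamma(b+n-k)\,(b+n-k)_{k+1}$. Dividing by $2(b+n-k)$ gives
\[
\frac{c_n}{2n-2k+s-d}=\frac12\,\frac{\Gamma(\frac s2+n)\Gamma(b-k+n)}{\Gamma(\frac d2+n)\,n!}\,(b-k+1+n)_k .
\]
We expand the polynomial $(b-k+1+n)_k$ in the basis $\{(b-k+n)_j\}_{j=0}^k$, i.e.\ as $\sum_j \lambda_j\,\Gamma(b-k+n+j)/\Gamma(b-k+n)$. The $\lambda_j$ are explicit connection coefficients; for instance, a Vandermonde/Chu identity gives them as simple binomial-type numbers. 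Each resulting series is a Gauss sum:
\[
\frac{\Gamma(\frac s2)\Gamma(b-k+j)}{\Gamma(\frac d2)}\,{}_2F_1(\tfrac s2,b-k+j;\tfrac d2;1)=\frac{\Gamma(\frac s2)\Gamma(b-k+j)\Gamma(d-s+k-j)}{\Gamma(d-\frac s2+k-j)\Gamma(\frac d2-b+k-j)} .
\]
This is valid since $\frac d2-\frac s2-(b-k+j)=d-s+k-j>0$ for $j\le k$. Similarly, in the first series write
\[
\frac{1}{\Gamma(\frac d2+n)(2n+2k+d)}=\frac{(\frac d2+n)_k}{2\,\Gamma(\frac d2+k+1+n)} .
\]
We expand $(\frac d2+n)_k$ in the basis $\{(b+1+n)_j\}_{j\le k}$. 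Each piece becomes $\sum_n \frac{\Gamma(\frac s2+n)\Gamma(b+1+j+n)}{n!\,\Gamma(\frac d2+k+1+n)}$, which by Gauss equals a Gamma ratio. Convergence holds because $\frac d2+k+1-\frac s2-(b+1+j)=d-s+k-j>0$.

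\emph{Step 3: the finite identity.} After Step 2 the statement reads $\sum_{j=0}^k(\text{Gamma ratio})_j^{(1)}+\sum_{j=0}^k(\text{Gamma ratio})_j^{(2)}=0$. Both sums are terminating hypergeometric series in $j$. Factoring out common Gamma factors, each should be expressible as a terminating ${}_3F_2(1)$ or ${}_2F_1(1)$ in $j$. I would first try to evaluate each in closed form by Chu--Vandermonde or Pfaff--Saalsch\"utz, and check that the two closed forms are negatives of each other. Reflection $\Gamma(x)\Gamma(1-x)=\pi/\sin\pi x$ is likely needed, since $\Gamma(b-k+j)$ has argument near the negative axis. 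As a sanity check I would first verify $k=0$ by hand. There the second series is a single Gauss sum, and the first is $\frac12\int_0^1x^{d/2-1}{}_2F_1(\frac s2,b+1;\frac d2;x)\,dx$ up to constants, which is again a Gauss sum. Equality at $k=0$ then amounts to a reflection identity.

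The main obstacle is Step 3: making the connection coefficients in Step 2 sufficiently clean that the $j$-sums are recognisably balanced and summable. If that fails, my fallback is a contour argument. This would integrate $\frac{\Gamma(\frac s2+w)\Gamma(b+1+w)}{\Gamma(\frac d2+w)\Gamma(1+w)}\big(\frac1{2w+2k+d}+\frac1{2w-2k+s-d}\big)\pi\cot(\pi w)$ over large circles. The decay of order $|w|^{s-d-1}$ makes the integral tend to zero. Then the residues at $w=n\ge0$ (the target sum) would be matched against the remaining poles. The pole at $w=-k-\frac d2$ is killed by $1/\Gamma(\frac d2+w)$. Hence one must show that the residue at $w=k-b$ cancels the residues along the two Gamma pole strings.
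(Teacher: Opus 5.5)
\textbf{Verdict: genuine gap at Step 3.} Steps 1 and 2 are fine, but the closing identity is not proved, and the route you propose for it cannot succeed.

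\textbf{Steps 1 and 2.} These are correct except for a slip in the Gauss evaluation for the second series. Its denominator should be $\Gamma(\frac{d-s}{2})\,\Gamma(d-\frac s2+k-j)$; your two factors coincide, since $\frac d2-b=d-\frac s2$, so $\Gamma(\frac{d-s}{2})$ was lost. With $\lambda_j=k!/j!$ and $\ell=k-j$, the second series becomes $\frac12\sum_{\ell=0}^k(-1)^\ell(-k)_\ell\,\frac{\Gamma(\frac s2)\Gamma(b-\ell)\Gamma(d-s+\ell)}{\Gamma(-b)\Gamma(d-\frac s2+\ell)}$.

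\textbf{Why Step 3 fails.} Step 3 carries the whole content of the lemma. In general neither series has a Gamma-product value by itself, so Chu--Vandermonde or Pfaff--Saalsch\"utz has nothing to return. Already for $k=1$,
\[
\sum_{n\ge 0}\frac{c_n}{2n+2+d}=\frac{\Gamma(\frac s2)\Gamma(1+b)\Gamma(d-s)}{2\,\Gamma(1-b)\Gamma(d-\frac s2)}\cdot\frac{s^2-3ds+2s+2d^2}{(d-s+2)(2d-s)} .
\]
The numerator is an irreducible quadratic in $(s,d)$, so this is not a ratio of Gamma functions with affine arguments. Concretely, your two finite sums are multiples of the non-balanced series ${}_3F_2(-k,d-s,1;1+\frac{d-s}{2},d-\frac s2;1)$ and ${}_3F_2(-k,d-s,d-\frac s2-1;\frac{d-s}{2}-k,d-\frac s2;1)$. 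Showing that these multiples cancel is exactly what has to be proved.

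\textbf{The contour fallback.} This does not close either. The residues along the Gamma pole strings are again infinite series of the same hypergeometric type; for generic parameters, the string $w=-\frac s2-m$ reproduces a multiple of the very series you want to show vanishes. You give no way to evaluate them.

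\textbf{The missing idea.} Expand the first series so that it lines up term by term with the second. Keep your treatment of the second series. For the first, instead of fixing the lower parameter at $\frac d2+k+1$, use
\[
\frac{1}{x+k}=\sum_{\ell=0}^k(-k)_\ell\,\frac{\Gamma(x)}{\Gamma(x+\ell+1)},\qquad x=n+\tfrac d2 .
\]
This follows from Chu--Vandermonde, or as the binomial inverse of $\frac{1}{k!}\sum_{\ell}(-1)^\ell\binom{k}{\ell}\frac{1}{x+\ell}=\frac{\Gamma(x)}{\Gamma(x+k+1)}$. Gauss's sum then turns the first series into $\frac12\sum_{\ell=0}^k(-k)_\ell\,\frac{\Gamma(\frac s2)\Gamma(1+b)\Gamma(d-s+\ell)}{\Gamma(\ell+1-b)\Gamma(d-\frac s2+\ell)}$. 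The $\ell$-th terms of the two finite sums now cancel individually, because $\Gamma(1+b)\Gamma(-b)=-\pi/\sin(\pi b)=-(-1)^\ell\,\Gamma(b-\ell)\Gamma(\ell+1-b)$.

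\textbf{Relation to the paper.} This is the paper's proof. There it is phrased as follows: the $k$-th finite differences in $k$ of the two series are single Gauss sums that agree by reflection, and induction on $k$ finishes. In the paper's display for the difference of $G$, the prefactor should be $(-1)^{k+1}$, not $(-1)^k$, consistent with its own formula for $G(0)$. With that sign, the reflection step goes through.
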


\begin{proof} 
 
For $k=0,1,\dots,$ we define 
\begin{align}
F(k)&:= \sum_{n=0}^\infty \frac{ \Gamma(\frac{s}{2}+n) \Gamma( \frac{s-d}{2}+n+1 )  }{ \Gamma(\frac{d}{2}+n) \, n! }  \frac{1}{n+k+\frac{d}{2}}, 
\\
G(k)&:= - \sum_{n=0}^\infty \frac{ \Gamma(\frac{s}{2}+n) \Gamma( \frac{s-d}{2}+n+1 )  }{ \Gamma(\frac{d}{2}+n) \, n! }  \frac{1}{ n+\frac{s-d}{2}-k }  . 
\end{align} 
We claim that $F(k)=G(k)$. 

Note first that for $k=0,$
\begin{align*}
F(0) & =  \sum_{n=0}^\infty \frac{ \Gamma(\frac{s}{2}+n) \Gamma( \frac{s-d+2}{2}+n )  }{ \Gamma(\frac{d}{2}+n+1) \, n! }
 = \frac{ \Gamma(1+\frac{s-d}{2})\Gamma(\frac{s}{2}) }{ \Gamma(1+\frac{d}{2}) } {}_2F_1(\tfrac{s}{2},1+\tfrac{s-d}{2}; 1+\tfrac{d}{2};1)
=    \frac{ \Gamma(\tfrac{s}{2})\Gamma(1+\tfrac{s-d}{2}) \Gamma(d-s) }{ \Gamma(d-\frac{s}{2})\Gamma(1-\frac{s-d}{2}) },
\end{align*}
where we have used the fact that for $c-a-b>0$, 
\begin{equation} \label{def of Gauss sum}
{}_2F_1(a,b;c;1)= \frac{ \Gamma(c)\Gamma(c-a-b) }{ \Gamma(c-a)\Gamma(c-b) }; 
\end{equation}
see e.g. \cite[Eq.(15.4.20)]{NIST}. 
Similarly, we have 
\begin{align*}
G(0) & =  -\sum_{n=0}^\infty \frac{ \Gamma(\frac{s}{2}+n) \Gamma( \frac{s-d}{2}+n )  }{ \Gamma(\frac{d}{2}+n) \, n! } 
=- \frac{ \Gamma( \frac{s}{2} )\Gamma( \frac{s-d}{2} ) }{ \Gamma(\frac{d}{2}) } {}_2F_1(\tfrac{s}{2},\tfrac{s-d}{2};\tfrac{d}{2};1)
= -  \frac{ \Gamma( \tfrac{s}{2} )    \Gamma( \tfrac{s-d}{2} )  \Gamma(d-s) }{ \Gamma(\frac{d-s}{2})\Gamma( d-\frac{s}{2} ) }.
\end{align*}
Then by the reflection identity of the gamma function, we have $F(0)=G(0).$

Next, we consider the case $k \ge 1$. We proceed by induction and assume that $F(\ell)=G(\ell)$ holds for $\ell=0,1,\dots,k-1$.
Note that 
\begin{align}
\begin{split} 
\frac{1}{k!} \sum_{\ell=0}^k (-1)^\ell \binom{k}{\ell} \frac{1}{x+\ell} &=  \frac{1}{k!} \int_0^1 t^{x-1} \bigg( \sum_{\ell=0}^k (-1)^\ell \binom{k}{\ell} t^\ell\bigg) \,dt = \frac{1}{k!}\int_0^1 t^{x-1}(1-t)^{k} = \frac{\Gamma(x)}{\Gamma(x+k+1)}. 
\end{split}
\end{align} 
Using this, note that 
\begin{align*}
\frac{1}{k!} \sum_{\ell=0}^k (-1)^\ell \binom{k}{\ell} F(\ell) &=    \sum_{n=0}^\infty \frac{ \Gamma(\frac{s}{2}+n) \Gamma( \frac{s-d+2}{2}+n )  }{ \Gamma(\frac{d}{2}+n) \, n! } \bigg(  \frac{1}{k!} \sum_{\ell=0}^k (-1)^\ell \binom{k}{\ell}  \frac{1}{n+\ell+\frac{d}{2}} \bigg) 
\\
&= \sum_{n=0}^\infty \frac{ \Gamma(\frac{s}{2}+n) \Gamma( \frac{s-d+2}{2}+n )  }{ \Gamma(\frac{d}{2}+ n+k+1) \, n! } 
\\
&= \frac{ \Gamma(1+\frac{s-d}{2})\Gamma(\frac{s}{2}) }{ \Gamma(1+k+\frac{d}{2}) } {}_2F_1(\tfrac{s}{2},1+\tfrac{s-d}{2}; 1+\tfrac{d}{2}+k;1)
=  \frac{ \Gamma(\tfrac{s}{2})\Gamma(1+\tfrac{s-d}{2})  \Gamma(k+d-s)  }{    \Gamma(k+d-\frac{s}{2}) \Gamma(k+1-\frac{s-d}{2})  }. 
\end{align*}
Similarly, it follows that 
\begin{align*}
\frac{1}{k!} \sum_{\ell=0}^k (-1)^\ell \binom{k}{\ell} G(\ell) &= -     \sum_{n=0}^\infty \frac{ \Gamma(\frac{s}{2}+n) \Gamma( \frac{s-d}{2}+n+1 )  }{ \Gamma(\frac{d}{2}+n) \, n! } \bigg( \frac{1}{k!} \sum_{\ell=0}^k (-1)^\ell \binom{k}{\ell} \frac{1}{ n+\frac{s-d}{2}-\ell } \bigg)
\\
&= (-1)^k\sum_{n=0}^\infty \frac{ \Gamma(\frac{s}{2}+n) \Gamma( \frac{s-d}{2}+n-k )  }{ \Gamma(\frac{d}{2}+n) \, n! } 
\\
&=  (-1)^k\frac{ \Gamma( \frac{s}{2} )\Gamma( \frac{s-d}{2}-k ) }{ \Gamma(\frac{d}{2}) } {}_2F_1(\tfrac{s}{2},\tfrac{s-d}{2}-k;\tfrac{d}{2};1) =  (-1)^k  \frac{ \Gamma( \tfrac{s}{2} )    \Gamma( \tfrac{s-d}{2}-k ) \Gamma(k+d-s) }{ \Gamma(\frac{d-s}{2})\Gamma( k+d-\frac{s}{2} ) }. 
\end{align*}
Therefore, we can see that
\begin{equation}
\sum_{\ell=0}^k (-1)^\ell \binom{k}{\ell} F(\ell)= \sum_{\ell=0}^k (-1)^\ell \binom{k}{\ell} G(\ell).
\end{equation}
Since $F(\ell)=G(\ell)$ for all $\ell=0,1,\dots,k-1$ due the induction hypothesis, we conclude that $F(k)=G(k)$, which completes the proof. 
\end{proof}

\begin{rem}
By using the generalised Euler integral \cite[Eq.(16.5.2)]{NIST}
\begin{equation}
\int_0^1 u^{\alpha-1} (1-u)^{ \gamma-\alpha-1 } {}_2F_1(a_1,a_2;b_1;tu)\,du= \frac{ \Gamma(\alpha)\Gamma(\gamma-\alpha) }{ \Gamma(\gamma) } {}_3F_2( a_1,a_2,\alpha; b_1,\gamma;t), 
\end{equation}
we have 
\begin{align*}
F(k) &= \frac{ \Gamma(\frac{s}{2})\Gamma(1+\frac{s-d}{2}) }{ (k+\frac{d}{2}) \Gamma(\frac{d}{2}) }  {}_3F_2(  k+\tfrac{d}{2}, 1+\tfrac{s-d}{2}, \tfrac{s}{2};  \tfrac{d}{2}, \tfrac{d}{2}+k+1; 1  ),
\\
G(k)&= \frac{ \Gamma(\frac{s}{2})\Gamma(1+\frac{s-d}{2}) }{ (k+\frac{d-s}{2}) \Gamma(\frac{d}{2}) }  {}_3F_2(  \tfrac{s-d}{2}-k, 1+\tfrac{s-d}{2}, \tfrac{s}{2};  \tfrac{d}{2}, \tfrac{s-d}{2}+1-k; 1  ). 
\end{align*}
Therefore, the identity in Lemma~\ref{Lem_algebraic equal} can be expressed as \eqref{Id_3F2}.  
\end{rem}

Now, we are ready to prove Theorem~\ref{Thm_radial}. Due to the Euler-Lagrange conditions \eqref{def of EL eqn}, this follows from the following proposition.

\begin{prop}  \label{Prop_evaulation of the EL conditions} Let $d \ge 1$ and $s \in (d-2,d).$
Suppose that $\mu$ is of the form \eqref{def of eq msr mu ak gen d}.  
For $|y|>1$, we have 
\begin{align}
\begin{split}
 \int_{ |x|<1 } \frac{ d\mu(x) }{ |x-y|^s } = \frac{ \Gamma( \frac{d}{2} ) }{ \Gamma(\frac{s}{2}) \Gamma( \frac{s-d+2}{2} )  }    \sum_{n=0}^\infty \frac{ \Gamma(\frac{s}{2}+n) \Gamma( \frac{s-d+2}{2}+n )  }{ \Gamma(\frac{d}{2}+n) \, n! }  \bigg( \sum_{k=0}^\infty \frac{d\,a_k}{2k+d+2n} \bigg) \, \frac{1}{|y|^{2n+s}}. 
\end{split}
\end{align} 
For $|y|<1$, we have 
\begin{equation}\label{2.20}
 \int_{ |x| <1 } \frac{ d\mu(x)  }{  |x-y|^s }=   \frac{ \Gamma( \frac{d}{2} ) }{ \Gamma(\frac{s}{2}) \Gamma( \frac{s-d+2}{2} )  }   \sum_{n=0}^\infty \frac{ \Gamma(\frac{s}{2}+n) \Gamma( \frac{s-d+2}{2}+n )  }{ \Gamma(\frac{d}{2}+n) \, n! } \bigg( \sum_{k=0}^\infty \frac{d\,a_k}{ 2k+d-2n-s } \bigg) |y|^{2n} .
\end{equation} 
Furthermore, the Robin constant $c$ in \eqref{def of EL eqn} is given by
\begin{equation}
c=    \frac{ 2d }{ s }    \sum_{k=0}^\infty \frac{a_k}{ 2k+d-s }  . 
\end{equation}  
\end{prop}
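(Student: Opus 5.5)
We treat $d\ge2$ first, using Lemma~\ref{Lem_angular integral gen} and Lemma~\ref{Lem_integral expansion gen}. For $|y|>1$ the claim follows directly: by \eqref{def of radial density ak}, the moments are
$$\int_0^1 r^{2n}f(r)\,dr=d\sum_{k}\frac{a_k}{2k+d+2n},$$
and we insert this into \eqref{2.8a}. Interchanging the sums is legitimate under the admissibility assumptions, since for $|y|>1$ the $n$-series converges geometrically in $|y|^{-2}$. For $d=1$ the Funk--Hecke reduction is not available, so we argue directly. Write
$$\int_{-1}^1 |x-y|^{-s}\mu(x)\,dx=\int_0^1\big(|r-w|^{-s}+(r+w)^{-s}\big)f(r)\,dr/\,\text{(normalisation)},$$
and expand $|r-w|^{-s}+(r+w)^{-s}$ binomially in $(r/w)^2$ or $(w/r)^2$. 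This gives the same ${}_2F_1(\frac s2,\frac{s+1}2;\frac12;\cdot)$ kernel, so the same two-region formula as Lemma~\ref{Lem_integral expansion gen} holds with $d=1$.

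For $|y|=w<1$ we use the split formula of Lemma~\ref{Lem_integral expansion gen}, write $c_n$ for the common coefficient $\Gamma(\frac s2+n)\Gamma(\frac{s-d+2}2+n)/(\Gamma(\frac d2+n)n!)$, and compute the truncated moments termwise:
$$\int_0^w r^{2n}f(r)\,dr=d\sum_k a_k\frac{w^{2k+d+2n}}{2k+d+2n},\qquad \int_w^1 r^{-2n-s}f(r)\,dr=d\sum_k a_k\frac{1-w^{2k+d-2n-s}}{2k+d-2n-s}.$$
Note that $2k+d-2n-s\neq0$ because $s\in(d-2,d)$ is not an integer shift of $d$ in the relevant sense; if it were, one would argue by continuity in $s$. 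After multiplying by the powers of $w$, the terms split into two families. The first is the family $\sum_n c_n\,\big(\sum_k \frac{d a_k}{2k+d-2n-s}\big)w^{2n}$, which is exactly the right-hand side of \eqref{2.20}. The second is the family of terms carrying $w^{2k+d-s}$, whose combined coefficient is
$$d\,a_k\,w^{2k+d-s}\sum_n c_n\Big(\frac1{2n+2k+d}+\frac1{2n-2k+s-d}\Big).$$
Lemma~\ref{Lem_algebraic equal} says this inner sum vanishes for every $k$, so all non-even-power contributions cancel and \eqref{2.20} follows. This cancellation is the heart of the argument.

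The main obstacle is justifying the rearrangement, that is, exchanging the $n$- and $k$-sums and regrouping the $w^{2k+d-s}$ terms across different $n$. Here $c_n\sim n^{s-d}$, so $\sum_n c_n/n$ converges because $s<d$, but convergence is only conditional-free near $w\to1$. The first thing to try is to do everything for fixed $w<1$ with $a_k$ of finite support (polynomial density), where all sums are absolutely convergent for each fixed $k$. Then pass to general admissible sequences using the convergence hypotheses in \eqref{def of seq conv for admissible an} together with dominated convergence or Abel summation. A fallback is analytic continuation in $s$ from a range where absolute convergence is clear.

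The Robin constant is then the value at $y=0$, i.e. the $n=0$ term of \eqref{2.20}. With $c_0=\Gamma(\frac s2)\Gamma(\frac{s-d+2}2)/\Gamma(\frac d2)$ the prefactor cancels, giving
$$\int|x|^{-s}d\mu=d\sum_k\frac{a_k}{2k+d-s}.$$
Multiplying by $2/s$ and using $V(0)=0$ gives
$$c=\frac{2d}{s}\sum_k \frac{a_k}{2k+d-s},$$
and comparing the remaining $w^{2n}$ coefficients identifies $V$ with \eqref{def of V gen d}.
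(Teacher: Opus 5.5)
Your proposal is correct and follows essentially the same route as the paper. You substitute the moment expansion of $f$ into the formulas of Lemma~\ref{Lem_integral expansion gen} (using a direct binomial expansion for $d=1$), cancel the $|y|^{2k+d-s}$ terms via Lemma~\ref{Lem_algebraic equal}, and read off the Robin constant from the $n=0$ term at $y=0$. Your remarks on justifying the interchange and regrouping of the $n$- and $k$-sums go beyond the paper, which performs these rearrangements formally.
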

\begin{proof}
By assumption, the radial density is given by \eqref{def of radial density ak}. Then the case $|y|>1$ follows immediately from Lemmas~\ref{Lem_angular integral gen} and \ref{Lem_integral expansion gen}. 

Next, we consider the case $|y| \in [0,1).$
Note that  by \eqref{def of radial density ak}, 
\begin{align*}
&\quad \sum_{n=0}^\infty \frac{ \Gamma(\frac{s}{2}+n) \Gamma( \frac{s-d+2}{2}+n )  }{ \Gamma(\frac{d}{2}+n) \, n! } \int_0^{|y|}  r^{ 2n }  f(r)\,dr\, \frac{1}{|y|^{2n+s}} 
\\
&=  \sum_{n=0}^\infty  \frac{ \Gamma(\frac{s}{2}+n) \Gamma( \frac{s-d+2}{2}+n )  }{ \Gamma(\frac{d}{2}+n) \, n! }  \sum_{k=0}^\infty  d\, a_k \int_0^{|y|}  r^{ 2n+2k+d-1 }  \,dr\, \frac{1}{|y|^{2n+s}} 
\\
&= \sum_{n=0}^\infty  \frac{ \Gamma(\frac{s}{2}+n) \Gamma( \frac{s-d+2}{2}+n )  }{ \Gamma(\frac{d}{2}+n) \, n! }  \sum_{k=0}^\infty \frac{ d\, a_k }{2n+2k+d} |y|^{ 2k+d-s } . 
\end{align*}
Also we have 
\begin{align*}
&\quad \sum_{n=0}^\infty \frac{ \Gamma(\frac{s}{2}+n) \Gamma( \frac{s-d+2}{2}+n )  }{ \Gamma(\frac{d}{2}+n) \, n! }  \int_{|y|}^{1} \frac{1}{r^{2n+s}} f(r) \,dr \, |y|^{2n}
\\
&= \sum_{n=0}^\infty \frac{ \Gamma(\frac{s}{2}+n) \Gamma( \frac{s-d+2}{2}+n )  }{ \Gamma(\frac{d}{2}+n) \, n! } \sum_{k=0}^\infty d\,a_k \int_{|y|}^1 r^{2k+d-1-2n-s} \,dr \, |y|^{2n}
\\
&= \sum_{n=0}^\infty \frac{ \Gamma(\frac{s}{2}+n) \Gamma( \frac{s-d+2}{2}+n )  }{ \Gamma(\frac{d}{2}+n) \, n! } \sum_{k=0}^\infty \frac{d\,a_k}{ 2k+d-2n-s } ( |y|^{2n}-|y|^{2k+d-s} ). 
\end{align*}
Combining these, we obtain
\begin{align*}
&\quad \sum_{n=0}^\infty \frac{ \Gamma(\frac{s}{2}+n) \Gamma( \frac{s-d+2}{2}+n )  }{ \Gamma(\frac{d}{2}+n) \, n! } \int_0^{|y|}  r^{ 2n }  f(r)\,dr\, \frac{1}{|y|^{2n+s}}  + \sum_{n=0}^\infty \frac{ \Gamma(\frac{s}{2}+n) \Gamma( \frac{s-d+2}{2}+n )  }{ \Gamma(\frac{d}{2}+n) \, n! }  \int_{|y|}^{1} \frac{1}{r^{2n+s}} f(r) \,dr \, |y|^{2n}
\\
&=  \sum_{n=0}^\infty \frac{ \Gamma(\frac{s}{2}+n) \Gamma( \frac{s-d+2}{2}+n )  }{ \Gamma(\frac{d}{2}+n) \, n! } \bigg( \sum_{k=0}^\infty \frac{d\,a_k}{ 2k+d-2n-s } \bigg) |y|^{2n} 
\\
&\quad + d\sum_{k=0}^\infty a_k \bigg( \sum_{n=0}^\infty \frac{ \Gamma(\frac{s}{2}+n) \Gamma( \frac{s-d+2}{2}+n )  }{ \Gamma(\frac{d}{2}+n) \, n! } \Big( \frac{1}{2n+2k+d}+\frac{1}{ 2n+s-2k-d } \Big)  \bigg) |y|^{2k+d-s}. 
\end{align*} 
Here, the second term vanishes due to Lemma~\ref{Lem_algebraic equal}. 
Then the resulting formula follows from Lemmas~\ref{Lem_angular integral gen} and \ref{Lem_integral expansion gen}.

We need a special treatment for the case $d=1$. Nevertheless, the required modification is minor, and the earlier reduction step is not required. For instance, when $|y|>1$ we have
\begin{align*}
\int_{ -1 }^1 \frac{d\mu(x)}{ |x-y|^s }  &= \frac12 |y|^{-s}\int_{ -1 }^1 \frac{  \sum_{k=0}^\infty a_k x^{2k}  }{ (1-x/y)^s }\,dx
= \frac12 |y|^{-s}\sum_{n=0}^\infty \frac{1}{y^n} \frac{ \Gamma(s+n) }{ \Gamma(s)\,n! } \int_{-1}^1 \sum_{k=0}^\infty a_k x^{2k+n} \,dx  
\\
&=  \frac12 |y|^{-s}\sum_{n=0}^\infty \frac{1}{y^{2n}} \frac{ \Gamma(s+2n) }{ \Gamma(s)\,(2n)! } \int_{-1}^1 \sum_{k=0}^\infty a_k x^{2(k+n)} \,dx  
= \sum_{n=0}^\infty \frac{ \Gamma(s+2n) }{ \Gamma(s)\,(2n)! } \bigg( \sum_{k=0}^\infty \frac{a_k}{2k+2n+1} \bigg) \frac{1}{|y|^{2n+s}} . 
\end{align*}
For $|y|<1$, we split the integral into two regions:
 \begin{align*}
\int_{ -1 }^1 \frac{d\mu(x)}{ |x-y|^s }  &= \frac12 |y|^{-s}\int_{ -1 }^1 \frac{  \sum_{k=0}^\infty a_k x^{2k}  }{ |1-x/y|^s }\,dx 
\\
&= \frac12 |y|^{-s}\int_{ (-|y|,|y|) } \frac{  \sum_{k=0}^\infty a_k x^{2k}  }{ |1-x/y|^s }\,dx   + \frac12 |y|^{-s}\int_{ (-1,1)\setminus (-|y|,|y|) } \frac{  \sum_{k=0}^\infty a_k x^{2k}  }{ |1-x/y|^s }\,dx .
\end{align*} 
The remaining computations follow in the same manner as above.

The evaluation of Robin's constant follows by setting $y=0$ in (\ref{2.20}).
\end{proof}

\begin{rem}
Making use of the identity of Lemma \ref{Lem_algebraic equal} we can check that both the integrals in Proposition \ref{Prop_evaulation of the EL conditions} agree if we set $|y|=1$. 
\end{rem}

\subsection{Proof of Theorem~\ref{Thm_eq msr power type}}

This subsection is devoted to the proof of Theorem~\ref{Thm_eq msr power type}.

We first show Lemma~\ref{Lem_ak for power type eq msr}. 

\begin{proof}[Proof of Lemma~\ref{Lem_ak for power type eq msr}]
Substituting (\ref{def of ak for power type eq msr}) in (\ref{def of V gen d}),
the first assertion follows from 
\begin{align}
\begin{split}
\label{ak sum v1}
\sum_{k=0}^\infty \frac{ a_k}{ 2n+s-2k-d } &= - \frac{ \Gamma(\alpha+1+\frac{d}{2}) }{  2 \Gamma(\frac{d}{2}+1) \Gamma(\alpha+1)}
\int_0^1 x^{ \frac{d-s}{2}-n -1} \Big ( \sum_{k=0}^\infty {\Gamma(-\alpha + k) \over\Gamma(-\alpha) k!} x^k \Big ) \, dx
\\
&= -  \frac{ \Gamma(\alpha+1+\frac{d}{2}) }{  \Gamma(\frac{d}{2}+1) }  \frac{   \Gamma( \frac{d-s}{2}-n ) }{2 \, \Gamma( \frac{d-s}{2}-n+\alpha+1  )}, 
\end{split}
\end{align}
where to obtain the second equality, the sum in the integral on the LHS has been evaluated according to binomial expansion formula (\ref{eq for binomial expansion}), the resulting integral has been evaluated as an example of the Euler beta integral, and finally the resulting sum is recognised from the hypergeometric function \eqref{def of 2F1 Gauss series}.
The polynomial potential \eqref{def of V power poly} follows from the polynomial expression of the hypergeometric function \cite[Eq.(15.2.4)]{NIST}. 
\end{proof}

We next show the following lemma, which will be used to establish the inequality (\ref{inequality for general}).

\begin{lem} \label{Lem_hypergeometric inequal}
Let $d \ge 1$ and $s \in [d-2,d)$. Then for any $m\in \mathbb{Z}_{ \ge 0 }$ and $x \ge 1$, we have 
\begin{equation}
 x^{-s}  {}_2F_1 ( \tfrac{s}{2},\tfrac{s-d+2}{2}; 2m+2+\tfrac{s}{2}; \tfrac{1}{x^2} ) -   \frac{ \Gamma(\frac{d-s}{2}) \Gamma(2m+2+\frac{s}{2}) }{ \Gamma(\frac{d}{2}) (2m+1)! }   {}_2F_1(\tfrac{s}{2},-2m-1;\tfrac{d}{2};x^2) \ge 0. 
\end{equation} 
\end{lem}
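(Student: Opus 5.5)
The plan is to show that the function
\[
H(x):= x^{-s}\,{}_2F_1\big(\tfrac{s}{2},\tfrac{s-d+2}{2};2m+2+\tfrac{s}{2};x^{-2}\big)-C_m\,{}_2F_1\big(\tfrac{s}{2},-2m-1;\tfrac{d}{2};x^2\big),
\qquad C_m:=\frac{\Gamma(\frac{d-s}{2})\Gamma(2m+2+\frac{s}{2})}{\Gamma(\frac{d}{2})(2m+1)!},
\]
vanishes at $x=1$ and is non-decreasing on $[1,\infty)$. Throughout I take $s\in(d-2,d)$; the Coulomb endpoint $s=d-2$ should follow by continuity, or directly via Newton's theorem as in \eqref{e.4}.

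\textbf{Step 1: $H(1)=0$.} For the first term I use the Gauss sum \eqref{def of Gauss sum}. Here $c-a-b=2m+1+\frac{d-s}{2}>0$, which gives
\[
\frac{\Gamma(2m+2+\frac{s}{2})\,\Gamma(2m+1+\frac{d-s}{2})}{(2m+1)!\,\Gamma(2m+1+\frac{d}{2})}.
\]
For the second term I use Chu--Vandermonde, ${}_2F_1(-n,b;c;1)=(c-b)_n/(c)_n$ with $n=2m+1$. This gives $(\frac{d-s}{2})_{2m+1}/(\frac d2)_{2m+1}$. Multiplying by $C_m$ reproduces exactly the same quantity, so $H(1)=0$. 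This is consistent with the remark after Proposition~\ref{Prop_evaulation of the EL conditions} that the inner and outer potentials agree at $|y|=1$. I would in fact identify $H$, up to a positive constant, with the difference between the left-hand side and the right-hand side of \eqref{inequality for general} for $\alpha=\frac{s-d}{2}+2m+1$. To do this, I would sum the $k$-series in \eqref{inequality for general} by the beta integral, as in \eqref{ak sum v1}.

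\textbf{Step 2: differentiate.} Write $u=x^2\ge1$ and $z=1/u$, and set $a=\frac s2$, $b=\frac{s-d+2}{2}\in(0,1)$, $c=2m+2+\frac s2$. The first term is $z^{a}{}_2F_1(a,b;c;z)$. By the rule $\frac{d}{dz}[z^aF(a,b;c;z)]=a z^{a-1}F(a+1,b;c;z)$, its $u$-derivative is
\[
-a\,u^{-a-1}{}_2F_1(a+1,b;c;1/u),
\]
which is negative with explicit size. The $u$-derivative of the polynomial term is
\[
C_m\,\frac{a(2m+1)}{d/2}\;{}_2F_1\big(a+1,-2m;\tfrac d2+1;u\big),
\]
taken with the appropriate sign. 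Hence $H'\ge0$ reduces to
\[
K(u):=\frac{2(2m+1)C_m}{d}\,{}_2F_1\big(a+1,-2m;\tfrac d2+1;u\big)-u^{-a-1}{}_2F_1\big(a+1,b;c;1/u\big)\ \ge 0,\qquad u\ge1,
\]
after fixing signs carefully. This is a family of the same shape as $H$: parameters shifted ($a\to a+1$, degree $2m+1\to 2m$), with an even-degree polynomial of positive leading coefficient. I would check $K(1)\ge0$ again by Gauss and Chu--Vandermonde. I would then try to iterate, differentiating once more and inducting on the polynomial degree, to show each successive derivative is non-negative at $u=1$ and eventually non-negative everywhere. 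The base case is a constant polynomial minus a decreasing positive function.

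\textbf{Main obstacle.} The iteration in Step 2 is where I expect difficulty, because the shifted parameters ($a+j$, $\frac d2+j$, $c$ fixed) may not keep the boundary values at $u=1$ non-negative. If it fails, I would instead use the Euler integral representation
\[
{}_2F_1(a,b;c;z)=\frac{\Gamma(c)}{\Gamma(b)\Gamma(c-b)}\int_0^1 t^{b-1}(1-t)^{c-b-1}(1-zt)^{-a}\,dt
\]
for the non-polynomial term. I would also write the polynomial ${}_2F_1(a,-2m-1;\frac d2;u)$ as the analytic continuation of the same Euler-type integral, which is the content of \eqref{1.34} extended beyond $|y|<1$. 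Then $H$ becomes a single integral against $t^{b-1}(1-t)^{c-b-1}$ whose integrand is $(1-t/u)^{-a}u^{-a}$ minus its continuation. Its positivity for $u\ge1$ would follow from the fact that the odd degree $2m+1$ forces the continued expression to be dominated termwise.
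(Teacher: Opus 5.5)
As written, the proposal does not prove the lemma. The entire content of the lemma is positivity for $x>1$, and you stop at ``try to iterate'' while doubting the signs at $u=1$. The fallback is not an argument either. The polynomial's Euler representation is against $t^{a-1}(1-t)^{\frac d2-a-1}$, not against the weight $t^{b-1}(1-t)^{c-b-1}$ you use for the first term, so there is no single integrand to compare. Nothing in it explains how odd degree produces ``termwise domination''.

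Your Step 2 does close, however, because the boundary values do not merely stay non-negative: they all vanish. Iterating your (correct) derivative formulas gives
\[
H^{(j)}(u)=(-1)^j(a)_j\Big[u^{-a-j}\,{}_2F_1\big(a+j,b;c;\tfrac1u\big)-\frac{C_m\,(2m+1)!}{(2m+1-j)!\,(\frac d2)_j}\,{}_2F_1\big(a+j,j-2m-1;\tfrac d2+j;u\big)\Big].
\]
For each $0\le j\le 2m+1$ the bracket vanishes at $u=1$ by the same computation as your Step 1. Gauss applies since $c-a-j-b=2m+1-j+\frac{d-s}{2}>0$, and Chu--Vandermonde gives $(\frac{d-s}{2})_{2m+1-j}/(\frac d2+j)_{2m+1-j}$. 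Both terms then equal $\Gamma(c)\Gamma(2m+1-j+\frac{d-s}{2})/\big((2m+1-j)!\,\Gamma(2m+1+\frac d2)\big)$.

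So bypass $K$ and go straight to $j=2m+1$. There the polynomial is the constant $1$, and $H^{(2m+1)}(u)=(a)_{2m+1}\big[\phi(1)-\phi(u)\big]$ with $\phi(u)=u^{-a-2m-1}{}_2F_1(a+2m+1,b;c;1/u)$. Since $b>0$, $\phi$ is a sum of positive multiples of negative powers of $u$, hence decreasing. Thus $H^{(2m+1)}\ge0$ on $[1,\infty)$. Taylor's formula at $u=1$ then gives $H(u)=\frac{1}{(2m)!}\int_1^u(u-v)^{2m}H^{(2m+1)}(v)\,dv\ge0$. This is legitimate because the Gauss series involved converge uniformly on $[0,1]$, so $H\in C^{2m+1}[1,\infty)$. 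Parity enters only through $(-1)^{2m+1}$: with an even-degree polynomial the same argument gives $H\le0$, matching the paper's remark about even $n$.

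The paper argues differently. It applies the $z\mapsto1-z$ connection formula to the first term at $z=x^{-2}$. Via NIST 15.8.7 and Euler's transformation, it identifies one of the two resulting pieces exactly with the polynomial term. Hence $H$ equals a constant (positive for $s>0$) times $x^{-s}(1-x^{-2})^{2m+1+\frac{d-s}{2}}\,{}_2F_1\big(2m+2,2m+1+\frac d2;2m+2+\frac{d-s}{2};1-x^{-2}\big)$, which is positive by the Euler integral. That route buys a closed form and the exact vanishing order at $x=1$, which explains why your first $2m+1$ derivatives vanish. Your completed route needs only Gauss, Chu--Vandermonde and the differentiation rule.

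One caveat applies to both arguments. You need $(a)_{2m+1}>0$ and the paper needs $\Gamma(\frac s2)>0$, i.e.\ $s>0$. For $d=1$ and $s<0$ the inequality actually reverses: for example $s=-1$, $m=0$ gives $H=-\frac12(x-1)^2$. That reversed inequality is exactly what Theorem~\ref{Thm_eq msr power type} needs there, since the prefactor $\frac{2d}{s}$ in \eqref{inequality for general} is then negative.
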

\begin{proof}  
Notice first that by \eqref{def of Gauss sum}, one can see that  equality holds for $x=1$. 
Recall the definition of the regularised hypergeometric function
\begin{equation}
{}_2\textup{\textbf{F}}_1(a,b;c;z):=\frac{1}{\Gamma(c)} {}_2F_1(a,b;c;z). 
\end{equation} 
Therefore, the desired inequality is equivalent to showing that for $x > 1$,  
\begin{align} \label{2.25}
 x^{-s}  {}_2 \textbf{\textup{F}} _1 ( \tfrac{s}{2},\tfrac{s-d}{2}+1; 2m+2+\tfrac{s}{2}; \tfrac{1}{x^2} ) -    \frac{ \Gamma(\frac{d-s}{2})  }{ (2m+1)! }   {}_2\textbf{\textup{F}}_1(\tfrac{s}{2},-2m-1;\tfrac{d}{2};x^2) \ge 0. 
\end{align} 

Note that for $z \in (0,1)$, by \cite[Eq.(15.8.4)]{NIST}, we have 
\begin{align*}
  \frac{ \sin( \pi(c-a-b) ) } { \pi }  {}_2\textbf{\textup{F}} _1(a,b;c;z)&=   \frac{1}{ \Gamma(c-a)\Gamma(c-b) }    {}_2\textbf{\textup{F}} _1(a,b;a+b-c+1;1-z)
   \\
   &\quad - \frac{ (1-z)^{c-a-b} }{ \Gamma(a)\Gamma(b) }     {}_2\textbf{\textup{F}} _1(c-a,c-b;c-a-b+1;1-z). 
\end{align*}
Using this, we obtain 
\begin{align}
\begin{split} \label{2.29}
&\quad x^{-s}  {}_2 \textbf{\textup{F}} _1 ( \tfrac{s}{2},\tfrac{s-d}{2}+1; 2m+2+\tfrac{s}{2}; \tfrac{1}{x^2} )  
\\
&=  x^{-s}  \frac{ \pi }{ \sin(\frac{s-d}{2}\pi) } \frac{1}{(2m+1)! \Gamma(2m+1+\frac{d}{2})} {}_2 \textbf{\textup{F}} _1 ( \tfrac{s}{2},\tfrac{s-d}{2}+1; -2m-\tfrac{d-s}{2}; 1-\tfrac{1}{x^2} )
\\
&\quad - x^{-s}  \frac{ \pi }{ \sin(\frac{s-d}{2}\pi) } \frac{ (1-\frac{1}{x^2})^{ 2m+1+\frac{d-s}{2} } }{ \Gamma(\tfrac{s}{2}) \Gamma(\tfrac{s-d}{2}+1 ) } {}_2 \textbf{\textup{F}} _1 ( 2m+2, 2m+1+\tfrac{d}{2}; 2m+2+\tfrac{d-s}{2}; 1-\tfrac{1}{x^2} ). 
\end{split}
\end{align}

Next, we claim that 
\begin{align} \label{2.28}
     {}_2\textbf{\textup{F}}_1(\tfrac{s}{2},-2m-1;\tfrac{d}{2};x^2)
= x^{-s}  \frac{ \pi }{ \sin(\frac{s-d}{2}\pi) } \frac{1}{   \Gamma(\tfrac{d-s}{2}) \Gamma(2m+1+\frac{d}{2})} {}_2 \textbf{\textup{F}} _1 ( \tfrac{s}{2},\tfrac{s-d}{2}+1; -2m-\tfrac{d-s}{2}; 1-\tfrac{1}{x^2} ). 
\end{align}
By using \cite[Eq.(15.8.7)]{NIST}, we have 
\begin{align*}
  {}_2\textbf{\textup{F}}_1(\tfrac{s}{2},-2m-1;\tfrac{d}{2};x^2) = \frac{\pi}{ \sin( \frac{s-d}{2}\pi ) }  \frac{ x^{4m+2} }{\Gamma(\frac{d-s}{2})\Gamma(2m+1+\frac{d}{2})}   {}_2 \textbf{\textup{F}} _1(  -2m-1, -2m-\tfrac{d}{2}; -2m+\tfrac{s-d}{2};1-\tfrac{1}{x^2} ). 
\end{align*}
Therefore, the claim reduces to 
\begin{align*}
x^{-s}  {}_2 \textbf{\textup{F}} _1 ( \tfrac{s}{2},\tfrac{s-d}{2}+1; -2m-\tfrac{d-s}{2}; 1-\tfrac{1}{x^2} )= x^{4m+2}  {}_2 \textbf{\textup{F}} _1(  -2m-1, -2m-\tfrac{d}{2}; -2m+\tfrac{s-d}{2};1-\tfrac{1}{x^2} ). 
\end{align*}
This follows from Euler's transformation \cite[Eq.(15.8.1)]{NIST}. 

By combining \eqref{2.29} and \eqref{2.28}, we have shown that 
\begin{align}
\begin{split}
&\quad x^{-s}  {}_2 \textbf{\textup{F}} _1 ( \tfrac{s}{2},\tfrac{s-d}{2}+1; 2m+2+\tfrac{s}{2}; \tfrac{1}{x^2} ) -    \frac{ \Gamma(\frac{d-s}{2})  }{ (2m+1)! }   {}_2\textbf{\textup{F}}_1(\tfrac{s}{2},-2m-1;\tfrac{d}{2};x^2)
\\
&= x^{-s}  \frac{ \pi }{ \sin(\frac{d-s}{2}\pi) } \frac{ (1-\frac{1}{x^2})^{ 2m+1+\frac{d-s}{2} } }{ \Gamma(\tfrac{s}{2}) \Gamma(\tfrac{s-d}{2}+1 ) } {}_2 \textbf{\textup{F}} _1 ( 2m+2, 2m+1+\tfrac{d}{2}; 2m+2+\tfrac{d-s}{2}; 1-\tfrac{1}{x^2} ). 
\end{split}
\end{align}
Since $s \in (d-2,d)$, we have $\sin(\frac{d-s}{2}\pi)>0$. Moreover, as $x>1$, to prove \eqref{2.25} it is equivalent to show that for $y \in (0,1)$,
\begin{equation} \label{2.27}
  {}_2 \textbf{\textup{F}} _1 ( 2m+1+\tfrac{d}{2},2m+2; 2m+2+\tfrac{d-s}{2}; y )\ge 0. 
\end{equation}
We use the integral representation: for $c>b>0,$
\begin{equation} \label{int rep for 2F1}
 {}_2 \textbf{\textup{F}} _1 (a,b;c;z)= \frac{1}{\Gamma(b)\Gamma(c-b)} \int_0^1 \frac{ t^{b-1}(1-t)^{c-b-1} }{ (1-z t)^a }\,dt. 
\end{equation}
From this representation, one immediately sees that \eqref{2.27} holds, since the resulting integrand is positive. 
\end{proof}

We are now ready to prove Theorem~\ref{Thm_eq msr power type}. 

\begin{proof}[Proof of Theorem~\ref{Thm_eq msr power type}]
Let $a_k$ be given by \eqref{def of ak for power type eq msr} with $\alpha$ given by \eqref{def of alpha for power}.  
Then by \eqref{def of eq msr mu ak gen d} and \eqref{eq for binomial expansion}, the associated measure $\mu$ is given by \eqref{def of power type eq msr}.  
Combining \eqref{ak sum v1}, together with the definition \eqref{def of 2F1 Gauss series} of the hypergeometric function, it follows that the inequality \eqref{inequality for general} is equivalent to that in Lemma~\ref{Lem_hypergeometric inequal}. This completes the proof of the equilibrium measure.  

In order to show \eqref{energy for power type eq msr}, note that by taking $n=0$ in \eqref{ak sum v1}, we have  
\begin{align*}
\frac{d}{s}\sum_{k=0}^\infty \frac{ a_k}{ 2k+d-s }  =   \frac{1}{s} \frac{ \Gamma( \frac{s}{2}+2m+2) }{  \Gamma(\frac{d}{2}) }  \frac{   \Gamma( \frac{d-s}{2} ) }{  (2m+1)!} . 
\end{align*}
Also we have 
\begin{align*}
&\quad \frac12 \int_{ \R^d } V(x) \,d\mu(x) = \frac{ \Gamma(\alpha+1+\frac{d}{2}) }{ \Gamma(\alpha+1)\Gamma(\frac{d}{2}) } \sum_{k=1}^{2m+1} b_k \int_0^1 (1-r^2)^\alpha  r^{2k+d-1}\,dr  = \frac12\frac{ \Gamma(\alpha+1+\frac{d}{2}) }{  \Gamma(\frac{d}{2}) } \sum_{k=1}^{2m+1} b_k \frac{ \Gamma(\frac{d}{2}+k) } { \Gamma(\alpha+1+\frac{d}{2}+k) }  
\\
&=       \frac{ \Gamma(\frac{d-s}{2})  \Gamma(2m+2+\frac{s}{2} ) }{  \Gamma(\frac{d}{2})   }   \sum_{k=1}^{2m+1}  \frac{ (-1)^{k+1}  }{  k!(2m+1-k)!  } \frac{(s+2)(s+4)\dots(s+2k-2)}{ (s+4m+4)(s+4m+6)\dots(s+4m+2k+2) }. 
\end{align*}
Then by \eqref{energy in terms of Robin}, we obtain \eqref{energy for power type eq msr}.  
\end{proof}

\subsection{Proof of Theorem~\ref{Thm_eq msr for power potential}}
  
We first show Lemma~\ref{Lem_ak for power type potential}. 

\begin{proof}[Proof of Lemma~\ref{Lem_ak for power type potential}]
We claim that for $t \in (0,1),$
\begin{align}\label{2.32}
 \sum_{k=0}^\infty \frac{ \Gamma(k+t) }{ (k+t-p)k! } \frac{1}{ n-k-t }= -\frac{\pi}{ \sin(\pi t) }  \frac{1}{ n-p  } \Big(  \frac{ \Gamma( t-n ) }{ \Gamma(1-n) }-\frac{ \Gamma( t-p ) }{ \Gamma(1-p) } \Big).
\end{align}
Note that 
\begin{align*}
 \sum_{k=0}^\infty \frac{ \Gamma(k+t) }{ (k+t-p)k! } \frac{1}{ n-k-t } &= \frac{1}{n-p} \sum_{k=0}^\infty \frac{ \Gamma(k+t) }{ k! } \Big(\frac{1}{k+t-p}-\frac{1}{ k+t-n }\Big) 
\\
&= \frac{1}{n-p} \sum_{k=0}^\infty \frac{ \Gamma(k+t) }{ k! } \Big(\frac{ \Gamma(k+t-p) }{\Gamma(k+t-p+1)}-\frac{\Gamma(k+t-n)}{ \Gamma(k+t-n+1) }\Big) .
\end{align*} 
Then the desired identity \eqref{2.32} follows from \eqref{def of 2F1 Gauss series}, \eqref{def of Gauss sum} and the reflection property of the gamma function. 

By putting $t=\frac{d-s}{2}$ in \eqref{2.32}, we have 
\begin{align} \label{2.33}
 \sum_{k=0}^\infty \frac{ a_k}{ 2n+s-2k-d } &=    \frac{\Gamma(1+\frac{s}{2})}{ \Gamma( \frac{d}{2} ) } \, \frac{2p+s}{ 2d( n-p ) } \Big(  \frac{ \Gamma( \frac{d-s}{2}-n ) }{ \Gamma(1-n) }-\frac{ \Gamma( \frac{d-s}{2}-p ) }{ \Gamma(1-p) } \Big).
\end{align}
Note that as $n \to p$,  
\begin{align*}
\frac{1}{n-p}  \Big(  \frac{ \Gamma( \frac{d-s}{2}-n ) }{ \Gamma(1-n) }-\frac{ \Gamma( \frac{d-s}{2}-p ) }{ \Gamma(1-p) } \Big)  \to \Gamma(\tfrac{d-s}{2}-p) (p-1)! (-1)^p. 
\end{align*}
By using this, the expression \eqref{def of V for power type} follows from \eqref{def of V gen d}. 
\end{proof}

Next we show the following inequality.

\begin{lem}\label{Lem_hypergeometric inequal for power type potential}
Let $d \ge 1$ and $s \in [d-2,d)$. Then for any $p\in \mathbb{Z}_{ \ge 0 }$ and $x \ge 1$, we have 
\begin{align} \label{2.34}
 \frac{\sin(\pi\tfrac{d-s}{2})}{\pi} \sum_{n=0}^\infty \frac{  \Gamma( \frac{s-d}{2}+n+1 )  }{  (n+\frac{s}{2}) n! }   \frac{2p}{2n+2p+s}    \, \frac{1}{x^{2n+s}}  \ge     \frac{ \Gamma(\frac{s}{2})   }{ \Gamma(\frac{d}{2})  }  \,  -    \frac{ \Gamma(\frac{s}{2}+p)   }{ \Gamma(\frac{d}{2}+p)  } \,      x^{2p}.
\end{align}
\end{lem}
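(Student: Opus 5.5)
The plan is to rewrite both sides of \eqref{2.34} as one-dimensional integrals with a common positive weight, so that the inequality reduces to an elementary pointwise comparison. Throughout, set $a=1+\frac{s-d}{2}$, so that $a\in(0,1)$ for $s\in(d-2,d)$ and $1-a=\frac{d-s}{2}$. The case $p=0$ is trivial, since both sides then vanish. The endpoint $s=d-2$ (where $a=0$ and the left side vanishes because of the factor $\sin(\pi\frac{d-s}{2})$) I would either check directly or recover by continuity in $s$. So assume $p\ge1$ and $s\in(d-2,d)$.

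\emph{Step 1: splitting the coefficients.} By partial fractions,
\begin{equation*}
\frac{2p}{(n+\frac{s}{2})(2n+2p+s)}=\frac{2}{2n+s}-\frac{2}{2n+2p+s}.
\end{equation*}
This lets me write the left side of \eqref{2.34} as $S_0(x)-S_p(x)$, where
\begin{equation*}
S_q(x):=\frac{\sin(\pi\frac{d-s}{2})}{\pi}\sum_{n=0}^\infty\frac{\Gamma(n+a)}{n!}\,\frac{2}{2n+2q+s}\,x^{-2n-s}.
\end{equation*}

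\emph{Step 2: integral representation.} Write
\begin{equation*}
\frac{2}{2n+2q+s}\,x^{-2n-s}=2x^{2q}\int_x^\infty u^{-2n-2q-s-1}\,du,
\end{equation*}
and interchange sum and integral, which is legitimate since all terms are positive. Summing the binomial series $\sum_n\frac{\Gamma(n+a)}{n!}z^n=\Gamma(a)(1-z)^{-a}$ then gives
\begin{equation*}
S_q(x)=\frac{2\sin(\pi\frac{d-s}{2})\Gamma(a)}{\pi}\,x^{2q}\int_x^\infty u^{-2q-s-1}(1-u^{-2})^{-a}\,du .
\end{equation*}
The integral converges at $u=1$ because $a<1$.

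\emph{Step 3: evaluating at $x=1$.} The substitution $v=u^{-2}$ turns the integral from $1$ to $\infty$ into the Euler beta integral $\frac12 B(q+\frac{s}{2},1-a)$. The reflection formula gives $\Gamma(a)\Gamma(1-a)=\pi/\sin(\pi\frac{d-s}{2})$, so the $q$-independent prefactor combines with the beta integral to give
\begin{equation*}
x^{-2q}S_q(x)-\frac{\Gamma(\frac{s}{2}+q)}{\Gamma(\frac{d}{2}+q)}=-C\int_1^x u^{-2q-s-1}(1-u^{-2})^{-a}\,du=:\phi_q(x),
\end{equation*}
with $C=2\sin(\pi\frac{d-s}{2})\Gamma(a)/\pi>0$. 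This is consistent with the remark that both integrals in Proposition~\ref{Prop_evaulation of the EL conditions} agree at $|y|=1$.

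\emph{Step 4: the comparison.} Rearranged, \eqref{2.34} reads $\phi_0(x)\ge x^{2p}\phi_p(x)$. With the weight $w(u)=u^{-s-1}(1-u^{-2})^{-a}>0$, this is equivalent to
\begin{equation*}
\int_1^x w(u)\,du\le\int_1^x (x/u)^{2p}\,w(u)\,du ,
\end{equation*}
which holds because $(x/u)^{2p}\ge1$ for $1\le u\le x$.

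The only delicate points are bookkeeping ones. First, the partial-fraction identity must produce exactly the constants $\Gamma(\frac{s}{2}+q)/\Gamma(\frac{d}{2}+q)$ after the beta evaluation; verifying these constants is where I expect most of the care to go. Second, the boundary case $s=d-2$ needs separate treatment, which I expect to handle by continuity.
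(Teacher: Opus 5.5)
For $s>0$ your argument is correct and is essentially the paper's. Your partial-fraction split is the paper's identity $\frac{2p}{(n+s/2)(2n+2p+s)}=2\int_0^1 u^{2n+s-1}(1-u^{2p})\,du$. After substituting $t=1/u$, your Step~4 becomes the paper's observation that extending $\int_0^{1/x}$ to $\int_0^1$ only adds a nonpositive contribution, since $1-t^{2p}x^{2p}<0$ for $t>1/x$.

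The gap is the range $s<0$, which the hypotheses allow when $d=1$: every $s\in(-1,0)$ lies in the open interval $(d-2,d)$. There your Step~2 fails for $q=0$, for two reasons.
\begin{itemize}
\item The $n=0$ coefficient $2/s$ is negative, so the appeal to positivity of all terms is void.
\item More seriously, $\frac{2}{s}x^{-s}=2\int_x^\infty u^{-s-1}\,du$ is false, because the integral diverges at $u=\infty$ when $s<0$.
\end{itemize}
Hence $\int_x^\infty u^{-s-1}(1-u^{-2})^{-a}\,du$ does not exist, and the Step~3 identity for $\phi_0$ is not established. You checked convergence at $u=1$ but not at infinity, which requires $2q+s>0$. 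The paper treats $s\le 0$ separately by splitting off the $n=0$ term.

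Within your framework the repair is short:
\begin{itemize}
\item For $x>1$, termwise differentiation gives $S_0'(x)=-C\,x^{-s-1}(1-x^{-2})^{-a}$ for every admissible $s$.
\item $S_0$ is continuous on $[1,\infty)$, because its coefficients are $O(n^{a-2})$.
\item $S_0(1)=\Gamma(\frac{s}{2})/\Gamma(\frac{d}{2})$ by Gauss's summation of ${}_2F_1(a,\frac{s}{2};\frac{s}{2}+1;1)$, which is valid since $c-a-b=1-a>0$. Combine this with $\Gamma(a)\Gamma(1-a)=\pi/\sin(\pi\frac{d-s}{2})$.
\end{itemize}
This gives $\phi_0(x)=-C\int_1^x w(u)\,du$ for all $s\in(d-2,d)\setminus\{0\}$. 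For $q=p\ge1$ your Step~2 is already fine, since $2p+s>0$. Step~4 then goes through unchanged.

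Separately, your remark about the endpoint is wrong: at $s=d-2$ the left side does not vanish. The $n=0$ term carries $\Gamma(1+\frac{s-d}{2})=\Gamma(a)$, whose pole at $a=0$ cancels the zero of the sine, since $\sin(\pi\frac{d-s}{2})\Gamma(a)/\pi=1/\Gamma(1-a)\to 1$. For $s=d-2\neq 0$ the left side therefore tends to $\bigl(\frac{2}{s}-\frac{2}{2p+s}\bigr)x^{-s}$, not $0$. A direct check using the value $0$ would actually fail at $x=1$ when $d\ge 3$, where the right side equals $\frac{2}{s}-\frac{2}{s+2p}>0$. So only your continuity-in-$s$ route is valid there, or a direct check with the correct limiting value, which does hold: the difference vanishes at $x=1$ and has derivative $\frac{4p}{s+2p}\bigl(x^{2p-1}-x^{-s-1}\bigr)\ge 0$.
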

\begin{proof} 

We first consider the case $s >0$. Then by \eqref{eq for binomial expansion}, we have 
\begin{align*}
 \sum_{n=0}^\infty \frac{  \Gamma( \frac{s-d}{2}+n+1 )  }{  (n+\frac{s}{2}) n! }   \frac{2p}{2n+2p+s} y^{n} &= 2\sum_{n=0}^\infty \frac{  \Gamma( \frac{s-d}{2}+n+1 )  }{    n! }  \bigg(\int_0^1 u^{2n+s-1}(1-u^{2p})\,du \bigg)  y^n
\\
&= 2 \Gamma( 1+\tfrac{s-d}{2} ) \int_0^1 u^{s-1} (1-u^2y)^{ \frac{d-s}{2}-1 } (1-u^{2p})\,du .
\end{align*}
Here, the condition $s>0$ is required for the integrability when $n=0$.
Then, the LHS of \eqref{2.34} is written as 
\begin{align}
\begin{split}
 \frac{2}{ \Gamma( \frac{d-s}{2} ) } \frac{1}{x^s} \int_0^1 u^{s-1} (1-u^2/x^2)^{ \frac{d-s}{2}-1 } (1-u^{2p})\,du
= \frac{2}{ \Gamma( \frac{d-s}{2} ) }   \int_0^{1/x} t^{s-1} (1-t^2)^{ \frac{d-s}{2}-1 } (1-t^{2p}x^{2p})\,dt. 
\end{split}  
\end{align}
Note that for $t>1/x$, we have $1-t^{2p}x^{2p}<0$. Therefore 
\begin{align*}
&\quad \frac{2}{ \Gamma( \frac{d-s}{2} ) }   \int_0^{1/x} t^{s-1} (1-t^2)^{ \frac{d-s}{2}-1 } (1-t^{2p}x^{2p})\,dt 
\\
& \ge \frac{2}{ \Gamma( \frac{d-s}{2} ) }   \int_0^{1} t^{s-1} (1-t^2)^{ \frac{d-s}{2}-1 } (1-t^{2p}x^{2p})\,dt =   \frac{ \Gamma(\frac{s}{2})   }{ \Gamma(\frac{d}{2})  }  \,  -    \frac{ \Gamma(\frac{s}{2}+p)   }{ \Gamma(\frac{d}{2}+p)  } \,      x^{2p},
\end{align*}
where the equality follows from the evaluation of the Euler beta integral in terms of gamma functions.
This completes the proof for the case $s>0$. 

For the remaining case $s \in (-1,0]$ when $d=1$, a slight modification is required to ensure integrability. First, note that 
\begin{align*}
&\quad \sum_{n=0}^\infty \frac{  \Gamma( \frac{s-d}{2}+n+1 )  }{  (n+\frac{s}{2}) n! }   \frac{2p}{2n+2p+s} y^{n} = \frac{  \Gamma( \frac{s-d}{2}+1 )  }{  s }   \frac{4p}{2p+s}+  \sum_{n=1}^\infty \frac{  \Gamma( \frac{s-d}{2}+n+1 )  }{  (n+\frac{s}{2}) n! }   \frac{2p}{2n+2p+s} y^{n} 
\\
&= \frac{  \Gamma( \frac{s-d}{2}+1 )  }{  s }   \frac{4p}{2p+s}+  2\sum_{n=1}^\infty \frac{  \Gamma( \frac{s-d}{2}+n+1 )  }{    n! }  \bigg(\int_0^1 u^{2n+s-1}(1-u^{2p})\,du \bigg)  y^n
\\
&=\frac{  \Gamma( \frac{s-d}{2}+1 )  }{  s }   \frac{4p}{2p+s}+ 2 \Gamma( \tfrac{s-d}{2}+1 ) \int_0^1 u^{s-1} \Big( (1-u^2y)^{ \frac{d-s}{2}-1 }-1\Big) (1-u^{2p})\,du .
\end{align*} 
Then the LHS of \eqref{2.34} is written as 
\begin{align}
\begin{split}
& \quad  \frac{2}{ \Gamma( \frac{d-s}{2} ) }  \frac{2p}{s(2p+s)} \frac{1}{x^s} + \frac{2}{ \Gamma( \frac{d-s}{2} ) } \frac{1}{x^s} \int_0^1 u^{s-1} \Big( (1-u^2/x^2)^{ \frac{d-s}{2}-1 } -1 \Big) (1-u^{2p})\,du
\\
&=  \frac{2}{ \Gamma( \frac{d-s}{2} ) }  \frac{2p}{s(2p+s)} \frac{1}{x^s} +  \frac{2}{ \Gamma( \frac{d-s}{2} ) }   \int_0^{1/x} t^{s-1} \Big( (1-t^2)^{ \frac{d-s}{2}-1 }-1\Big) (1-t^{2p}x^{2p})\,dt. 
\end{split}  
\end{align}
Here, again by using the Euler beta integral, we obtain  
\begin{align*}
&\quad \frac{2}{ \Gamma( \frac{d-s}{2} ) }   \int_0^{1/x} t^{s-1} \Big( (1-t^2)^{ \frac{d-s}{2}-1 }-1\Big) (1-t^{2p}x^{2p})\,dt 
\\
&\ge \frac{2}{ \Gamma( \frac{d-s}{2} ) }   \int_0^{1} t^{s-1} \Big( (1-t^2)^{ \frac{d-s}{2}-1 }-1\Big) (1-t^{2p}x^{2p})\,dt 
= \frac{2}{ \Gamma( \frac{d-s}{2} ) }\Big( \frac{x^{2p}}{s+2p} -\frac{1}{s}\Big)   + \frac{ \Gamma(\frac{s}{2})   }{ \Gamma(\frac{d}{2})  }   -    \frac{ \Gamma(\frac{s}{2}+p)   }{ \Gamma(\frac{d}{2}+p)  } \,      x^{2p}.
\end{align*}
Therefore, we have shown that 
\begin{align*}
&\quad  \frac{\sin(\pi\tfrac{d-s}{2})}{\pi} \sum_{n=0}^\infty \frac{  \Gamma( \frac{s-d}{2}+n+1 )  }{  (n+\frac{s}{2}) n! }   \frac{2p}{2n+2p+s}    \, \frac{1}{x^{2n+s}}
\\
& \ge \frac{2}{ \Gamma( \frac{d-s}{2} ) } \frac{1}{s(2p+s)} \Big( 2p(x^{-s}-1)+s (x^{2p}-1) \Big)  + \frac{ \Gamma(\frac{s}{2})   }{ \Gamma(\frac{d}{2})  }   -    \frac{ \Gamma(\frac{s}{2}+p)   }{ \Gamma(\frac{d}{2}+p)  } \,      x^{2p}.
\end{align*}
Since $s<0$ and $p>0$, the first term in the RHS is nonnegative for $x \ge 1$. Therefore the proof is complete. 
\end{proof}

\begin{proof}[Proof of Theorem~\ref{Thm_eq msr for power potential}]

All we need to check is the inequality \eqref{inequality for general}. 
Note that by \eqref{2.32}, we have 
\begin{align*}
\sum_{k=0}^\infty   \frac{ \Gamma(k+\frac{d-s}{2}) }{ (k+\frac{d-s}{2}-p)k! }  \frac{1}{2k+d+2n} = -\frac{\pi}{ \sin(\pi \frac{d-s}{2})  } \frac{1}{2n+2p+s} \frac{ \Gamma(n+\frac{d}{2}) }{ \Gamma(n+1+\frac{s}{2}) }. 
\end{align*}
Thus 
\begin{align*}
\sum_{k=0}^\infty \frac{a_k}{2k+d+2n}  =     \frac{\Gamma(1+\frac{s}{2})}{ \Gamma( \frac{d}{2} ) } (2p+s)\,   \frac{1}{2n+2p+s} \frac{ \Gamma(n+\frac{d}{2}) }{ \Gamma(n+1+\frac{s}{2}) } 
\end{align*}
and 
\begin{align}
\begin{split}
& \quad  \sum_{n=0}^\infty \frac{ \Gamma(\frac{s}{2}+n) \Gamma( \frac{s-d}{2}+n+1 )  }{ \Gamma(\frac{d}{2}+n) \, n! }  \bigg( \sum_{k=0}^\infty \frac{a_k}{2k+d+2n} \bigg) \, \frac{1}{x^{2n+s}}
\\
&=   \frac{\Gamma(1+\frac{s}{2})}{ \Gamma( \frac{d}{2} ) } (2p+s)\,   \sum_{n=0}^\infty \frac{  \Gamma( \frac{s-d}{2}+n+1 )  }{  (n+\frac{s}{2}) n! }   \frac{1}{2n+2p+s}    \, \frac{1}{x^{2n+s}}.
\end{split}
\end{align}

On the other hand, by \eqref{2.33} and Lemma~\ref{Lem_ak for power type potential}, we have  
\begin{align}
\begin{split}
&\quad \sum_{n=0}^\infty \frac{ \Gamma(\frac{s}{2}+n) \Gamma( \frac{s-d}{2}+n+1 )  }{ \Gamma(\frac{d}{2}+n) \, n! } \bigg( \sum_{k=0}^\infty \frac{a_k}{ 2k+d-2n-s } \bigg) x^{2n} 
\\
&=  \frac{ \Gamma(\frac{s}{2}) \Gamma( \frac{s-d}{2}+1 )  }{ \Gamma(\frac{d}{2})  } \bigg( \sum_{k=0}^\infty \frac{a_k}{ 2k+d-s } \bigg)  +\frac{ \Gamma(\frac{s}{2}+1) \Gamma( \frac{s-d}{2}+1 )  }{ 2\, \Gamma( \frac{d}{2}+1 ) }  V(x)
\\
&=  \frac{\pi}{\sin(\pi\tfrac{d-s}{2})}  \frac{ \Gamma(\frac{s}{2})   }{ \Gamma(\frac{d}{2})  }  \frac{\Gamma(1+\frac{s}{2})}{ \Gamma( \frac{d}{2} ) } \, \frac{2p+s}{ 2p }   - \frac{\pi}{\sin(\pi\tfrac{d-s}{2})} \frac{ \Gamma(\frac{s}{2}+1)    }{  \Gamma( \frac{d}{2}+1 ) }     \frac{ \Gamma(\frac{s}{2}+p)   }{ \Gamma(\frac{d}{2}+p)  } \, \frac{d(2p+s)}{ 4p  }       |x|^{2p}.
\end{split}
\end{align}
Combining the above, one can see that the inequality \eqref{inequality for general} is equivalent to that in Lemma~\ref{Lem_hypergeometric inequal for power type potential}.

Note that by \eqref{2.33}, we have 
\begin{equation}
\frac{d}{s} \sum_{k=0}^\infty \frac{ a_k}{ 2k+d-s }  =  \frac{ \Gamma( \tfrac{d-s}{2} ) \Gamma(\frac{s}{2})}{ \Gamma( \frac{d}{2} ) } \, \frac{2p+s}{ 4 p  }.
\end{equation} 
Note that by \eqref{def of 2F1 Gauss series}, 
\begin{align*}
&\quad \int_0^1  {}_2F_1( \tfrac{d-s}{2}, \tfrac{d-s}{2}-p; \tfrac{d-s}{2}+1-p; r^2 )r^{2p+d-1}\,dr 
\\
&= \frac{ (d-s-2p) }{ \Gamma(\tfrac{d-s}{2})   } \sum_{k=0}^\infty \int_0^1 \frac{ \Gamma(\tfrac{d-s}{2}+k)  }{ (d-s-2p+2k) \,k! } r^{2k+2p+d-1}\,dr  =    \frac{  2p-d+s }{ 2(4p+s) }   \frac{  \Gamma(1+\tfrac{s-d}{2}) \Gamma(p+\frac{d}{2}) }{ \Gamma(p+\frac{s}{2}+1) }  . 
\end{align*}
Then by \eqref{def of V for power type} and \eqref{def of power type potential eq msr}, we have 
\begin{align*}
\frac12 \int_{ \R^d } V(x) \,d\mu(x)  =   \frac{s(2p+s)}{ 4 p (4p+s) }  \frac{\Gamma(\frac{d-s}{2})  \Gamma(\tfrac{s}{2})  } {  \Gamma(\frac{d}{2}) }    . 
\end{align*} 
Therefore by \eqref{energy in terms of Robin}, we obtain \eqref{energy for power type potential}.   
\end{proof}

\section{Proof of Theorem~\ref{Thm_Coulomb half space}}\label{S3}
 
In this section, we prove Theorem~\ref{Thm_Coulomb half space}.
In Subsection~\ref{S41}, we revisit the one-dimensional case \( d=1 \), where all computations can be made explicit.
Subsection~\ref{S42} is devoted to the general high-dimensional case, where we complete the proof of Theorem~\ref{Thm_Coulomb half space}, with the exception of \eqref{LD Prob in main}.
Finally, the statement concerning the large deviation probability \eqref{LD Prob in main} is established in Subsection~\ref{S43}.

\subsection{The two-dimensional case revisited} \label{S41}

It is instructive to recall the derivation for the case \( d = 1 \) from \cite{ASZ14}, where Conjecture \ref{Cj1} is a theorem, and one can make use of several well-known computations from logarithmic potential theory. 

Note that the variational condition is given by
\begin{equation}\label{3.11}
2 \int_{\R^2} \log\frac{1}{|x-y|} d\mu_V(y) + |x|^2  \begin{cases}
= c & \textup{if } x_0 \ge a = \sqrt{2},\,|x_1|<\sqrt{2},
\smallskip
\\
\ge c & \textup{otherwise}. 
\end{cases} 
\end{equation} 
For $t, x \in \R$, let 
\begin{align}\label{3.12}
\begin{split}
G(t,x)&:= 2 \int_{\R^2} \log\frac{1}{|(a+t,x)-\widehat{y}|} d\mu_V(\hat{y}) +(a+t)^2+ x^2 
\\
&=  2 \int_{ |y|<\sqrt{2} } \log \frac{1}{ |(a+t,x)-(a,y)| } \frac{ \sqrt{2-y^2} }{\pi}\,dy  +(a+t)^2+ x^2.
\end{split}
\end{align}
Hence $G(t,x)$ denotes the potential as specified by the LHS of (\ref{3.11}), with $t$ the horizontal coordinate with value zero at the position of the plane $x_0=a$, and $x$ the vertical coordinate
corresponding to $x_1$.

It is well known that \cite{ST97}
\begin{align} \label{EL condition for semi circle}
2 \int_\R \log \frac{1}{|u-v|} \frac{ \sqrt{2-v^2} }{ \pi } \mathbbm{1}_{ \{ |v| < \sqrt{2} \}  } \,dv   +u^2-1- \log 2  \begin{cases}
= 0, & |u| \le \sqrt{2},
\smallskip
\\
\ge 0 & |u| > \sqrt{2}, 
\end{cases}\qquad u \in \R.
\end{align}
Therefore if $t=0$ in (\ref{3.12}) we have
\begin{align*}
G(0,x) \begin{cases}
= 1+\log 2 +a^2 &\textup{if } |x| \le \sqrt{2},
\smallskip 
\\
\ge 1+\log 2 +a^2 &\textup{if } |x| \ge \sqrt{2}. 
\end{cases}
\end{align*} 
To complete the verification of the variational condition,
it now suffices to show that for any $t \ge 0$ and $x \in \R$, 
\begin{equation}\label{3.14}
G(t,x) \ge G(0,0)= 1+\log 2 +a^2. 
\end{equation}

In preparation for verifying (\ref{3.14}), using complex variables we write 
\begin{align} \label{3.15}
 2 \int_{ |y|<\sqrt{2} } \log \frac{1}{ |(a+t,x)-(a,y)| } \frac{ \sqrt{2-y^2} }{\pi}\,dy =  2 \int_{ |y|<\sqrt{2} } \log \frac{1}{ |t+i(x-y)| } \frac{ \sqrt{2-y^2} }{\pi}\,dy. 
\end{align}
Then 
\begin{align*}
\frac{d}{dt} G(t,x)&= 2(a+t)  -2 \int_\R \re \Big[ \frac{1}{t+i(x-y)} \Big]  \frac{ \sqrt{ 2-y^2 } }{ \pi } \cdot \mathbbm{1}_{ \{ |y|<\sqrt{ 2 } \} } \,dy
\\
&=  2(a+t)  -2  \int_\R \im \Big[ \frac{1}{u-z} \Big]   \frac{ \sqrt{ 2-u^2 } }{ \pi } \cdot \mathbbm{1}_{ \{ |u|<\sqrt{ 2 } \} } \,du  ,
\end{align*}
where  $z=-x+it$.  
Another well-known integral identity related to the semicircle law is its Stieltjes transform 
\begin{align}\label{3.16+}
\int_{\mathbb R} \frac{1}{u-z} \frac{ \sqrt{2-u^2} }{ \pi } \mathbbm{1}_{ \{ |u| < \sqrt{2} \}  } \,du= -z+\sqrt{z^2-2}, \qquad z \in \C_+. 
\end{align}
Using this in the previous equation shows
\begin{align}\label{3.16a}
\frac{d}{dt} G(t,x) &= 2(a+t)  - 2 \im \Big[ x-it+\sqrt{ (x-it)^2-2 } \Big] = 2(a+t) +2  \Big( t- \im  \sqrt{ (x-it)^2-2 } \Big). 
\end{align} 
Observe that this is precisely where the threshold \( a > \sqrt{2} \) arises.  
Namely, for the measure to be the correct equilibrium measure, one requires  
$$
\frac{d}{dt} G(t,x)\Big|_{t=0}= 2a-2\im \sqrt{x^2-2} 
$$ 
to be positive. In particular, by taking $x^2$ as small as possible and thus \( x = 0 \), this condition yields the requirement \( a > \sqrt{2}= a_{\rm cri} \), which moreover establishes statement (i) of
Theorem \ref{Thm_Coulomb half space}.

We will make use of (\ref{3.16a}) to verify (\ref{3.14}), which
we do in two steps.
\begin{itemize}
    \item (Step 1). We first show that for any $t \ge 0$, $G(t,0) \ge G(0,0)$. This follows from the fact that for $a>\sqrt{2} = a_{\rm cri}$,
    \begin{align*}
\frac{d}{dt} G(t,0)  = 2(a+t) +2  \Big( t-  \sqrt{ t^2+2 } \Big) \ge 0.  
\end{align*}
   \item (Step 2). Next, we show that for any fixed $t>0$, $G(t,x) \ge G(t,0)$. This follows from a similar computation above, which gives
   \begin{align*}
 \frac{d}{dx} G(t,x) &= 2x - 2 \int_{ \R } \re\Big[ \frac{i}{ t+i(x-y) }\Big] \frac{\sqrt{2-y^2}}{ \pi }\cdot \mathbbm{1}_{ \{ |y|<\sqrt{ 2 } \} } \,dy
 \\
 &= 2x + 2 \int_{ \R } \re \Big[ \frac{1}{y-(x+it)} \Big] \frac{ \sqrt{2-u^2} }{\pi}\cdot \mathbbm{1}_{ \{ |y|<\sqrt{ 2 } \} } \,du  
 \\
 &= 2x + 2 \re  \Big( -(x+it)+\sqrt{ (x+it)^2-2 } \Big)  =  2 \re  \sqrt{ (x-it)^2-2 } \ge 0 .  
   \end{align*}
   Here, we have used 
   \begin{align*}
    \re\Big[ \frac{i}{ t+i(x-y) }\Big] =  \re \Big[ \frac{1}{y-(x+it)} \Big] = \frac{y-x}{t^2+(x-y)^2},  
   \end{align*}
   as well as (\ref{3.16+}) in going from the first to the third
   equality. The final inequality has made use of a general property of the square root function with branch cut on the negative real axis.
\end{itemize}
Combining these steps, it follows that for any $t \ge 0$ and $x \in \R$, $G(t,x)\ge G(0,0)$, which completes the proof. 

\begin{rem} \label{Rem_elliptic half space}
Let $x=(x_0,x_1)$. With $\mu \in \mathbb R$ a parameters, $c_0^2:= 2/(1 + e^{2 \mu})$, $c_1^2:= 2/(1 + e^{-2 \mu})$, a generalisation of the potential (\ref{V1}) is
\begin{equation}\label{V1a}
V({x}) = \begin{cases}
(c_0 x_0)^2 + (c_1 x_1)^2, &\textup{if } x_0 > a,
\smallskip 
\\
+\infty, &\textup{otherwise}. 
\end{cases}
\end{equation}
In the case $a = - \infty$ (no hard wall) the equilibrium measure is the ellipse shaped droplet \cite{CPR87,FJ96}
\begin{equation}\label{V2a}
\sigma(x) = {1 \over \pi} \mathbbm 1_{\{ (e^{-\mu} x_0)^2 +
(e^{\mu} x_1)^2 <1 \}}.
\end{equation}
A straightforward extension of the above analysis, making use of scaled versions of (\ref{EL condition for semi circle}) and (\ref{3.16+}), shows that the threshold for the equilibrium measure to be fully supported on $x_0=a$ is $a \ge  \sqrt{2} c_1/c_0^2$, and it then has density on this line
\begin{equation}\label{V2b}
\mu(x_1) = {\sqrt{2 - (x_1/c_1)^2} \over \pi c_1} \mathbbm 1_{\{|x_1/c_1| < \sqrt{2} \}}.
\end{equation} 
Note that this conclusion holds independently of the parameterisation of $c_0,c_1$. In particular, this allows us to consider $c_1 \to \infty$, when the model becomes effectively confined to $\mathbb{R}$. 
In this one-dimensional log-gas setting, no such transition occurs (see \cite{DM06}), 
which is consistent with the divergence of the critical value $ \sqrt{2} c_1/c_0^2 \to \infty$ as $c_1 \to \infty.$ 
\end{rem}

\subsection{High dimensional case} \label{S42}

We first determine the equilibrium measure corresponding to $\hat{V}$, assuming that it is fully confined in $\{a\} \times \mathbb{R}^d$. In this setting, the model reduces to a Riesz gas with parameter $s = d-1$ in $\mathbb{R}^d$, subject to the quadratic potential $W(x) = |x|^2$.

For convenience, write  
\begin{equation}
Q(x) =   \frac{ \sqrt{\pi} \, \Gamma(\frac{d+3}{2})   }{ \Gamma(\frac{d}{2}+1)  } \,       |x|^{2}. 
\end{equation}
Then by Theorem \ref{Thm_eq msr for power potential}, the associated equilibrium measure is given by 
\begin{equation}
 d\mu_Q(x)\Big|_{s=d-1} = \frac{ \Gamma( \frac{d+3}{2}) }{ \Gamma(\frac32)\Gamma(\frac{d}{2}+1) } \,  (1-|x|^2)^{ \frac12 }  \cdot  \mathbbm{1}_{ \{ |x| \le 1 \} } \, \frac{ \Gamma(\frac{d}{2}+1) }{\pi^{ d/2 }}\,dx.    
\end{equation}
Note that  
\begin{align*}
|x|^2= R^{-(d-1)} Q(x/R) = R^{-(d+1)}Q(x), 
\end{align*} 
where $R$ is given by \eqref{def of eq msr in a Rd}. 
Then by the change of variables, the equilibrium measure becomes $\mu_W$ in \eqref{def of eq msr in a Rd}. 

We know from (\ref{1.34}) that for $s \in [s-2,d)$, $d \ge 1$ and $|x|<1$,
\begin{equation} \label{3.19}
\int_{ |y|<1 } \frac{ (1-|y|^2)^{\alpha} }{ |x-y|^s }\,dx =  \pi^{d/2}  \frac{ \Gamma(\frac{d-s}{2}) \Gamma(\alpha+1) }{ \Gamma( \frac{d}{2} ) \Gamma(\alpha+1+\frac{d-s}{2}) }   {}_2F_1(\tfrac{s}{2},\tfrac{s-d}{2}-\alpha;\tfrac{d}{2};|x|^2). 
 \end{equation} 
Then as a particular case when $s=d-1$, $\alpha=1/2$, it follows that  
\begin{equation}
\frac{2}{d-1} \int_{ |y|<R } \frac{ d\mu_W(y) }{ |x-y|^{d-1} } = - |x|^2+ \frac{d}{d-1} R^2. 
\end{equation} 
Note here that this identity recovers \eqref{EL condition for semi circle} since  
\begin{align*}
\lim_{d \to 1} \Big( \frac{d}{d-1} R^2- \frac{ 2 }{ d-1 } \Big) = 1+\log 2 .
\end{align*}
 
Recall that  $\hat{x}=(x_0,x_1,\dots,x_d) \in \R^{d+1}$ and $x=(x_1,\dots,x_d) \in \R^d.$
Suppose $x_0=a+t$ with $t\ge0$.
Define 
\begin{align}
\begin{split} \label{def of F(t,x)}
F(t,x)&:= \frac{2}{d-1} \int_{ \R^{d+1} } \frac{ d\mu_V(\hat{y}) }{ |\hat{x}-\hat{y}|^{d-1} } +|\hat{x}|^2   
=\frac{2}{d-1}\int_{ \R^d } \frac{ d\mu_W(y) }{ ( t^2+|x-y|^2 )^{(d-1)/2} }+(a+t)^2+|x|^2. 
\end{split}
\end{align} 
We have shown that  
\begin{equation}
F(0,0)= a^2+  \frac{d}{d-1} R^2, \qquad F(0,x) \begin{cases}
= F(0,0) & \textup{if } |x| \le R,
\smallskip 
\\
\ge F(0,0) & \textup{if } |x| \ge R. 
\end{cases}  
\end{equation}
This is the Euler-Lagrange conditions in the  codimension-one set \( \{a\} \times \mathbb{R}^d \).

We first reduce the $d$-dimensional integral for the potential
to a one-dimensional integral.

\begin{lem} \label{Lem_integral ev for half space constraint}
For any $t \in \R$ and $x \in \R^d$,  
\begin{align}\label{3.24}
\begin{split}
&\quad  \frac{2}{d-1} \int_{ \R^d } \frac{ d\mu_W(y) }{ ( t^2+|x-y|^2 )^{(d-1)/2} }  
\\
&= \frac{4d}{d-1} \frac{R^{d+1}}{\pi}  \int_0^1 \Big( t^2+(|x|+Rr)^2  \Big)^{ -\frac{d-1}{2} } {}_2F_1\Big( \frac{d-1}{2}, \frac{d-1}{2}; d-1;  \frac{ 4 |x| R r }{ t^2+( |x|+R r )^2 }\Big) (1-r^2)^{\frac12} r^{d-1}\,dr .
\end{split}
\end{align} 
In particular, setting $x=0$ in this shows
\begin{equation}\label{3.25}
\frac{2}{d-1}\int_{ \R^d } \frac{d\mu_W(y)}{ (t^2+|y|^2)^{(d-1)/2} }= \frac{2}{d-1}  \frac{1}{t^{d-1} }   {}_2F_1( \tfrac{d-1}{2},\tfrac{d}{2};\tfrac{d+3}{2};-R^2/t^2 ). 
\end{equation}
\end{lem}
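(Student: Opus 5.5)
The plan is to repeat the Funk--Hecke argument of Lemma~\ref{Lem_angular integral gen}, with $|x-y|^2$ replaced by $t^2+|x-y|^2$ and $s=d-1$. First I substitute $y=Rr u$, with $r\in(0,1)$ and $u$ on the unit sphere, into \eqref{def of eq msr in a Rd}. The measure then becomes
\[
\frac{2R}{\pi}\,\frac{\Gamma(\frac d2+1)}{\pi^{d/2}}\,R^d (1-r^2)^{1/2} r^{d-1}\,dr\,du .
\]
Next I write
\[
t^2+|x-y|^2 = t^2+|x|^2+R^2r^2-2|x|Rr\,\tfrac{x}{|x|}\cdot u .
\]
Funk--Hecke \eqref{eq for Funk-Hecke}, together with $|S_1|=2\pi^{d/2}/\Gamma(d/2)$, reduces the angular integral to
\[
\frac{2\pi^{(d-1)/2}}{\Gamma(\frac{d-1}{2})}\int_0^\pi \sin^{d-2}\theta\,\bigl(t^2+|x|^2+R^2r^2-2|x|Rr\cos\theta\bigr)^{-\frac{d-1}{2}}\,d\theta .
\]

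For the $\theta$-integral I follow the half-angle computation in the proof of Lemma~\ref{Lem_angular integral gen} exactly. The one change is that the quantity $(|y|+r)^2$ there becomes $A:=t^2+(|x|+Rr)^2$. Indeed,
\[
t^2+|x|^2+R^2r^2-2|x|Rr\cos\theta = A-4|x|Rr\cos^2(\theta/2).
\]
Expanding binomially in $z=4|x|Rr/A<1$ (strictly less than $1$ when $t\neq0$), and using the beta integrals already recorded there, gives the analogue of \eqref{2.9}:
\[
\frac{2^{d-2}\Gamma(\frac{d-1}{2})^2}{\Gamma(d-1)}\,A^{-\frac{d-1}{2}}\,{}_2F_1\!\Big(\tfrac{d-1}{2},\tfrac{d-1}{2};d-1;z\Big).
\]
It then remains to collect the constants. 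By the duplication formula,
\[
\frac{2^{d-2}\Gamma(\frac{d-1}{2})^2}{\Gamma(d-1)}=\frac{\sqrt\pi\,\Gamma(\frac{d-1}{2})}{\Gamma(\frac d2)} .
\]
The prefactor is therefore
\[
\frac{2R^{d+1}}{\pi}\cdot\frac{\Gamma(\frac d2+1)}{\pi^{d/2}}\cdot\frac{2\pi^{(d-1)/2}}{\Gamma(\frac{d-1}{2})}\cdot\frac{\sqrt\pi\,\Gamma(\frac{d-1}{2})}{\Gamma(\frac d2)}=\frac{2dR^{d+1}}{\pi}.
\]
Multiplying by $2/(d-1)$ gives the constant $\frac{4d}{d-1}\frac{R^{d+1}}{\pi}$ in \eqref{3.24}. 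Tracking these normalisations is the only real bookkeeping risk. The degenerate cases ($t=0$, where $z$ can reach $1$ at $r=|x|/R$; and $d=1$) I would handle by continuity, or interpret as limits, since the singularity is integrable.

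For \eqref{3.25} I set $x=0$, so that $z=0$ and the ${}_2F_1$ factor equals $1$. What remains is
\[
\frac{4d}{d-1}\frac{R^{d+1}}{\pi}\int_0^1 (t^2+R^2r^2)^{-\frac{d-1}{2}}(1-r^2)^{1/2}r^{d-1}\,dr .
\]
Substituting $v=r^2$ turns this into
\[
\frac{2d}{d-1}\frac{R^{d+1}}{\pi}\,t^{-(d-1)}\int_0^1 v^{\frac d2-1}(1-v)^{1/2}\Big(1+\tfrac{R^2}{t^2}v\Big)^{-\frac{d-1}{2}}\,dv .
\]
By the Euler integral \eqref{int rep for 2F1}, with $b=\frac d2$, $c=\frac{d+3}{2}$, $a=\frac{d-1}{2}$ and argument $-R^2/t^2$, the $v$-integral equals
\[
B\big(\tfrac d2,\tfrac32\big)\,{}_2F_1\big(\tfrac{d-1}{2},\tfrac d2;\tfrac{d+3}{2};-R^2/t^2\big).
\]
The final step is the constant check. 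Using $R^{d+1}=\sqrt\pi\,\Gamma(\frac{d+3}{2})/\Gamma(\frac d2+1)$ and $B(\frac d2,\frac32)=\frac{\sqrt\pi}{2}\Gamma(\frac d2)/\Gamma(\frac{d+3}{2})$, we get $\frac{2d}{\pi}R^{d+1}B=1$. This leaves exactly $\frac{2}{d-1}t^{-(d-1)}{}_2F_1(\cdots)$, which is \eqref{3.25}.
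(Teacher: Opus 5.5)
Your proposal is correct and follows the paper's proof essentially step for step: Funk--Hecke, then the half-angle reduction to $A^{-\frac{d-1}{2}}\,{}_2F_1(\frac{d-1}{2},\frac{d-1}{2};d-1;z)$ with $A=t^2+(|x|+Rr)^2$, then an Euler integral at $x=0$, and your constant bookkeeping for \eqref{3.24} checks out. The one slip is in the last line: $R^{d+1}B(\frac d2,\frac32)=\pi/d$, so $\frac{2d}{\pi}R^{d+1}B=2$ rather than $1$, and this factor $2$ is exactly what turns your prefactor $\frac{2d}{d-1}\frac{R^{d+1}}{\pi}B$ into the $\frac{2}{d-1}$ of \eqref{3.25} (your stated value $1$ would give $\frac{1}{d-1}$).
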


\begin{proof}

By applying the Funk-Hecke formula, we have  
\begin{align*}
\begin{split} 
&\quad \int \frac{ d\mu_W(y) }{ ( t^2+|x-y|^2 )^{(d-1)/2} }  = \frac{2R^2}{\pi}   \frac{ \Gamma(\frac{d}{2}+1) }{\pi^{ d/2 }}   \int_{ |y|<1 } \frac{ (1-|y|^2)^{ \frac12 } }{  (t^2/R^2+|x/R-y|^2)^{(d-1)/2}  } \,dy
\\
&= \frac{2R^2d}{\pi}   \frac{ \Gamma(\frac{d}{2}) }{2\pi^{ d/2 }} \int_0^1 \bigg( \int_{ S_1 } \Big( \frac{t^2+|x|^2}{R^2}+r^2-2\frac{|x|}{R}r \frac{x}{|x|} \cdot u \Big)^{ -\frac{d-1}{2} } \,du \bigg) (1-r^2)^{\frac12}r^{d-1}\,dr
\\
&= \frac{2R^2d}{\pi} \frac{ \Gamma( \frac{d}{2} ) }{ \sqrt{\pi}\, \Gamma( \frac{d-1}{2} ) }  \int_0^1 \bigg( \int_{-1}^1 (1-v^2)^{ \frac{d-3}{2} }   \Big( \frac{t^2+|x|^2}{R^2}+r^2-2\frac{|x|}{R}r v \Big)^{ -\frac{d-1}{2} } \,dv \bigg) (1-r^2)^{ \frac12 }r^{d-1}\,dr. 
\end{split}
\end{align*} 
Therefore, we obtain
\begin{align*}
&\quad \int \frac{ d\mu_W(y) }{ ( t^2+|x-y|^2 )^{(d-1)/2} } 
\\
&= \frac{2R^{d+1}d}{\pi} \frac{ \Gamma( \frac{d}{2} ) }{ \sqrt{\pi}\, \Gamma( \frac{d-1}{2} ) }  \int_0^1 \bigg( \int_{-1}^1 (1-v^2)^{ \frac{d-3}{2} }   \Big( (|x|-R r v)^2+ t^2  +R^2r^2(1-v^2)\Big)^{ -\frac{d-1}{2} } \,dv \bigg) (1-r^2)^{ \frac12 }r^{d-1}\,dr. 
\end{align*}
Note that  
\begin{align*}
&\quad \int_{-1}^1   (1-v^2)^{ \frac{d-3}{2} } \Big( \frac{t^2+|x|^2}{R^2}+r^2-2\frac{|x|}{R}r v \Big)^{ -\frac{d-1}{2} }   \,dv  = \int_0^{\pi}  \sin^{ d-2 }(\theta) \Big( \frac{t^2+|x|^2}{R^2}+r^2-2\frac{|x|}{R}r \cos (\theta) \Big)^{ -\frac{d-1}{2} }   \,d\theta 
 \\
 &= 2^{d-2}\int_0^{\pi} \sin^{ d-2 }(\theta/2)\cos^{d-2}(\theta/2)  \Big(  \frac{t^2}{R^2}+\Big(\frac{|x|}{R}+r\Big)^2  -4\frac{|x|}{R}r \cos^2 (\theta/2) \Big)^{ -\frac{d-1}{2} }     \,d\theta  
 \\
 &= \bigg( \frac{t^2}{R^2}+\Big(\frac{|x|}{R}+r\Big)^2  \bigg)^{ -\frac{d-1}{2} } 2^{d-1}  \int_0^{\pi/2}  \sin^{ d-2 }(\theta)\cos^{d-2}(\theta) \Big( 1-X \cos^2(\theta) \Big)^{ -\frac{s}{2} }    \,d\theta  ,
\end{align*}
where 
$$
X=  \frac{ 4 |x| R r }{ t^2+( |x|+R r )^2 }. 
$$
Then we have 
\begin{align*}
&\quad  \int_{-1}^1   (1-v^2)^{ \frac{d-3}{2} } \Big( \frac{t^2+|x|^2}{R^2}+r^2-2\frac{|x|}{R}r v \Big)^{ -\frac{d-1}{2} }   \,dv
\\
&=  \bigg( \frac{t^2}{R^2}+\Big(\frac{|x|}{R}+r\Big)^2  \bigg)^{ -\frac{d-1}{2} } 2^{d-2} \frac{ \Gamma( \frac{d-1}{2} )^2 }{ \Gamma(d-1) } {}_2F_1\Big( \frac{d-1}{2}, \frac{d-1}{2}; d-1;  \frac{ 4 |x| R r }{ t^2+( |x|+R r )^2 }\Big),
\end{align*} 
which completes the proof of (\ref{3.24}). 

Setting $x=0$ in (\ref{3.24}), noting that the ${}_2F_1$ function in the integrand is then unity, and expressing the integral of the remaining terms as a new ${}_2F_1$ gives (\ref{3.25}).
\end{proof}

\begin{lem} \label{Lem_EL for the vertical}
Suppose that $a \ge a_{ \rm cri }.$ Then for any $t \ge 0,$ we have $F(t,0) \ge F(0,0). $  
Furthermore, if $a< a_{ \rm cri }$, this inequality does not hold for $t \in (0,\epsilon)$ for some $\epsilon>0.$
\end{lem}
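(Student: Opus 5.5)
The plan is to study the function $t\mapsto F(t,0)$ through its derivative, rather than through the closed form \eqref{3.25}. The key observation is that the $t$-derivative of the potential term is a Poisson integral of the density $\mu_W$. By \eqref{def of F(t,x)}, for $t>0$,
\begin{equation*}
\frac{d}{dt}F(t,0)=2(a+t)-2\,h(t),\qquad h(t):=\int_{\R^d}\frac{t}{(t^2+|y|^2)^{(d+1)/2}}\,\mu_W(y)\,dy,
\end{equation*}
where $\mu_W(y)$ denotes the density in \eqref{def of eq msr in a Rd}. Differentiation under the integral sign is justified for $t>0$ because the integrand is smooth and $\mu_W$ has compact support. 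Write $\kappa_d:=\pi^{(d+1)/2}/\Gamma(\frac{d+1}{2})$ and set $P_t(y):=\kappa_d^{-1}\,t\,(t^2+|y|^2)^{-(d+1)/2}$. Then $P_t$ is the Poisson kernel of the half-space $\R_+^{d+1}$, it is a probability density on $\R^d$, and
\begin{equation*}
h(t)=\kappa_d\,(P_t*\mu_W)(0).
\end{equation*}

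\textbf{Step 1 (the value at $t=0^+$).} Since $\mu_W$ is continuous at the origin and $P_t$ is an approximate identity, we get $h(t)\to\kappa_d\,\mu_W(0)$ as $t\to0^+$. Reading off $\mu_W(0)=\frac{2R}{\pi}\frac{\Gamma(\frac d2+1)}{\pi^{d/2}}$ from \eqref{def of eq msr in a Rd}, this gives
\begin{equation*}
h(0^+)=\frac{2R\,\Gamma(\frac d2+1)}{\sqrt\pi\,\Gamma(\frac{d+1}{2})}.
\end{equation*}
We then check that this equals $a_{\rm cri}$. Using $R^{d+1}=\sqrt\pi\,\Gamma(\frac{d+3}{2})/\Gamma(\frac d2+1)$ and $\Gamma(\frac{d+3}{2})=\frac{d+1}{2}\Gamma(\frac{d+1}{2})$, definition \eqref{acrit} becomes
\begin{equation*}
a_{\rm cri}=(d+1)R^{-d}=\frac{(d+1)R\,\Gamma(\frac d2+1)}{\sqrt\pi\,\Gamma(\frac{d+3}{2})},
\end{equation*}
which matches $h(0^+)$. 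Physically, this is the jump of the normal field across a charged hyperplane: the constraint $\partial_tF(0^+,0)\ge0$ reads exactly $a\ge a_{\rm cri}$.

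\textbf{Step 2 ($h$ is nonincreasing on $(0,\infty)$).} This is the main point. The density $\mu_W$ is radial and radially nonincreasing, since it is proportional to $(1-|y|^2/R^2)^{1/2}$ on the ball. Each $P_s$ is also radial and radially decreasing. A convolution of two radial, radially nonincreasing nonnegative functions is again radial and radially nonincreasing; I would justify this by the layer-cake decomposition into indicators of centred balls together with the Riesz rearrangement inequality. Hence $P_s*\mu_W$ attains its maximum at $0$. Now use the semigroup property $P_t=P_{t-s}*P_s$ for $0<s<t$:
\begin{equation*}
(P_t*\mu_W)(0)=\int_{\R^d}P_{t-s}(y)\,(P_s*\mu_W)(y)\,dy\le (P_s*\mu_W)(0).
\end{equation*}
This shows that $h$ is nonincreasing, and therefore $h(t)\le h(0^+)=a_{\rm cri}$ for all $t>0$.

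\textbf{Step 3 (conclusion).} If $a\ge a_{\rm cri}$, then for every $t>0$
\begin{equation*}
\frac{d}{dt}F(t,0)\ge 2(a-a_{\rm cri})+2t\ge0.
\end{equation*}
The function $F(\cdot,0)$ is continuous on $[0,\infty)$ by \eqref{3.25}, or directly by dominated convergence, since $(d-1)<d$ makes the kernel locally integrable. Integrating the derivative bound then gives $F(t,0)\ge F(0,0)$ for all $t\ge0$.

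Conversely, if $a<a_{\rm cri}$, Step 1 gives $\frac{d}{dt}F(t,0)\to 2(a-a_{\rm cri})<0$ as $t\to0^+$. So the derivative is negative on some interval $(0,\epsilon)$, and integrating from $0$ yields $F(t,0)<F(0,0)$ for every $t\in(0,\epsilon)$.

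The only delicate points are the continuity and approximate-identity limits at $t=0^+$, and the monotonicity claim in Step 2. If the rearrangement argument proves awkward, a fallback is to differentiate the explicit ${}_2F_1$ in \eqref{3.25} and use its Euler integral representation \eqref{int rep for 2F1}, which exhibits $h(t)$ as a positive integral whose monotonicity in $t$ can be checked directly. For $d=1$ the argument reduces to Step 1 of Subsection~\ref{S41}.
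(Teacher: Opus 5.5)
Your proof is correct, and it takes a genuinely different route from the paper. The paper works from the closed form \eqref{3.25}. A connection formula for ${}_2F_1$ turns it into $F(t,0)-F(0,0)=t\,f(t)$, where $f(t)=(d+1)t+2a-\frac{2(d+1)}{R^d}\,{}_2F_1(\frac d2,-\frac12;\frac32;-t^2/R^2)$ and $f(0)=2(a-a_{\rm cri})$. The paper then checks $f'\ge 0$ by differentiating the ${}_2F_1$, computing its limit as $t\to\infty$, and using an Euler integral; the converse is implicit from $f(0)<0$ and continuity.

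You instead differentiate under the integral and recognise $\partial_t F(t,0)=2(a+t)-2\kappa_d\,(P_t*\mu_W)(0)$ as a half-space Poisson integral. You then identify $a_{\rm cri}=\kappa_d\,\mu_W(0)$ as the jump of the normal field across the charged hyperplane. The paper's computation is tied to the specific profile $(1-|x/R|^2)^{1/2}$, but in exchange it gives an explicit formula for $F(t,0)$. Your argument explains where $a_{\rm cri}$ comes from and uses no special-function identities. It only needs $\mu_W$ to be bounded, continuous at the origin and maximal there. It also covers $d=1$ uniformly: there the prefactor $\frac{2}{d-1}$ in \eqref{3.25} degenerates, but the $t$-derivative of the kernel has the same form.

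Your Step 2 is more than you need. Because $P_t$ is a probability density and $0\le\mu_W\le\mu_W(0)$, the bound $(P_t*\mu_W)(x)\le\mu_W(0)$ is immediate, with no rearrangement or semigroup argument.

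More importantly, that one-line bound holds at every $x\in\R^d$, not only at $x=0$. So the same computation gives $\partial_tF(t,x)\ge 2(a-a_{\rm cri})+2t$ for all $x$ and all $t>0$. Combine this with continuity in $t$ at $t=0$ (monotone convergence) and the hyperplane Euler--Lagrange inequality $F(0,x)\ge F(0,0)$. You obtain $F(t,x)\ge F(0,x)\ge F(0,0)$ for all $t\ge 0$ and $x\in\R^d$ whenever $a\ge a_{\rm cri}$. This is exactly the inequality to which Remark~\ref{Rem_conjecture inequality} reduces Conjecture~\ref{Cj1}.

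Your route, pushed one step further, therefore appears to give the sufficiency direction of the conjecture in every dimension. It bypasses the delicate comparison $F(t,x)\ge F(t,0)$ on which the paper's approach stalls.
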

\begin{proof}
By the previous lemma, the desired inequality is equivalent to show that for any $t \ge 0$,
\begin{equation}
\frac{2}{d-1}  \frac{1}{ t^{d-1} }   {}_2F_1( \tfrac{d-1}{2},\tfrac{d}{2};\tfrac{d+3}{2};-R^2/t^2 ) +2at+t^2  \ge  \frac{d}{d-1} R^2. 
\end{equation}
Note that by \cite[Eq.(15.8.2)]{NIST}, 
\begin{align*}
{}_2F_1( \tfrac{d-1}{2},\tfrac{d}{2};\tfrac{d+3}{2};-R^2/t^2 ) &= \frac{ \sqrt{\pi} \, \Gamma(\frac{d+3}{2})   }{ \Gamma(\frac{d}{2})  } \Big( \frac{t}{R}\Big)^{d-1} {}_2F_1( \tfrac{d-1}{2},-1;\tfrac12; -t^2/R^2 ) 
\\
& \quad -(d^2-1)  \Big( \frac{t}{R}\Big)^{d} {}_2F_1( \tfrac{d}{2},-\tfrac12;\tfrac32; -t^2/R^2 ). 
\end{align*}
Then it follows from \cite[Eq.(15.2.4)]{NIST} that 
\begin{align*}
\frac{2}{d-1}  \frac{1}{ t^{d-1} }  {}_2F_1( \tfrac{d-1}{2},\tfrac{d}{2};\tfrac{d+3}{2};-R^2/t^2 )  
&= d\Big( t^2+\frac{R^2}{d-1} \Big)  -    \frac{  2(d+1)t }{R^d}  {}_2F_1( \tfrac{d}{2},-\tfrac12;\tfrac32; -t^2/R^2 ). 
\end{align*}
Therefore, the desired lemma is equivalent to show that for $a \ge a_{ \rm cri }$ and $t \ge 0$,
\begin{align}
f(t):=(d+1)t+2a  -    \frac{  2(d+1) }{R^d}  {}_2F_1( \tfrac{d}{2},-\tfrac12;\tfrac32; -t^2/R^2 ) \ge 0. 
\end{align}
Note that  
\begin{align}
f(0)= 2(a-a_{ \rm cri }) \ge 0.  
\end{align}
Thus it suffices to show that $f'(t) \ge 0$ for $t \ge 0$.
Since 
\begin{equation}
\frac{d}{dz} {}_2F_1(a,b;c;z)= \frac{ab}{c}\, {}_2F_1( a+1,b+1;c+1;z ), 
\end{equation}
(see \cite[Eq.(15.5.1)]{NIST}), it is equivalent to showing that
\begin{equation}
1-\frac{1}{R^{d+2}}  \frac{2d}{3}\,t \,  {}_2F_1( \tfrac{d}{2}+1, \tfrac12; \tfrac{5}{2}; -t^2/R^2 ) \ge 0,  
\end{equation}
and thus after making use of \eqref{def of eq msr in a Rd} that
\begin{equation}
1-  \frac{ \Gamma(\frac{d}{2}+1)  } { \sqrt{\pi} \, \Gamma(\frac{d+3}{2})   }    \frac{2d}{3}\,s \,  {}_2F_1( \tfrac{d}{2}+1, \tfrac12; \tfrac{5}{2}; -s^2 ) \ge 0.  
\end{equation}
Note that again by \cite[Eq.(15.8.2)]{NIST}, we have
\begin{align*}
\lim_{ s \to +\infty } \Big[  \frac{ \Gamma(\frac{d}{2}+1)  } { \sqrt{\pi} \, \Gamma(\frac{d+3}{2})   }    \frac{2d}{3}\,s \,  {}_2F_1( \tfrac{d}{2}+1, \tfrac12; \tfrac{5}{2}; -s^2 ) \Big] = \frac{d}{d+1}<1. 
\end{align*}
Furthermore, by \cite[Eq.(15.5.3)]{NIST}, we have 
\begin{align*}
\frac{d}{ds}\Big[ s \,  {}_2F_1( \tfrac{d}{2}+1, \tfrac12; \tfrac{5}{2}; -s^2 ) \Big] &=   {}_2F_1( \tfrac{d}{2}+1, \tfrac32; \tfrac{5}{2}; -s^2 ) 
= \frac32 \int_0^1 \frac{ u^{\frac12} }{ (1+s^2u)^{ \frac{d}{2}+1 } }\,du \ge 0. 
\end{align*} 
Therefore the desired inequality follows. 
\end{proof}

\subsection{Large deviation probabilities} \label{S43}

In this subsection, we show the large deviation statement in Theorem~\ref{Thm_Coulomb half space}. 
For the general Riesz gas model, it is well known from \cite[Corollary 1.1]{LS17} that 
\begin{equation}
\log Z_{N,\beta}(V) \sim -\frac{\beta}{2} N^{ 2-\frac{s}{d} } I_V(\mu_V), 
\end{equation}
as $N\to \infty$.  
By definition, the probability $\mathbb{P}[x_0 \ge a]$ can be expressed in terms of the ratio of partition functions
\begin{equation}
  \mathbb{P}[x_0 \ge a] =  \frac{ Z_{N,\beta}(\hat{V}) }{  Z_{N,\beta}(|\hat{x}|^2) }. 
\end{equation}
Therefore, it follows that 
\begin{align}
\log   \mathbb{P}[x_0 \ge a] \sim -\frac{\beta}{2} N^{ \frac{d+3}{d+1} }  \Big( I_V(\mu_V) - I_{ |x|^2 }(\mu_{|x|^2}) \Big),
\end{align}
as $N \to \infty.$

Note that by \eqref{energy for power type potential} with $p=1$, $d \mapsto d+1$ and $s=d-1$, we have 
\begin{equation}
I_{ |x|^2 }(\mu_{|x|^2}) = \frac{(d+1)^2}{(d-1)(d+3)}. 
\end{equation} 
On the other hand, we have 
\begin{align*}
I_V(\mu_V)  = a^2+I_W(\mu_W). 
\end{align*}
Note that by \eqref{3.19}, the Robin's constant associated with $W$ is given by $\frac{d}{d-1}R^2$. Therefore by \eqref{energy in terms of Robin}, after straightforward computations, we obtain
\begin{equation}
I_V(\mu_V)= a^2+ \Big( \frac{ \sqrt{\pi} \, \Gamma(\frac{d+3}{2})   }{ \Gamma(\frac{d}{2}+1)  } \Big)^{ \frac{2}{d+1} }\frac{d(d+1)}{(d-1)(d+3)}. 
\end{equation}
Thus it follows that 
\begin{align}
\begin{split}
I_V(\mu_V) - I_{ |x|^2 }(\mu_{|x|^2}) &= a^2+ \Big( \frac{ \sqrt{\pi} \, \Gamma(\frac{d+3}{2})   }{ \Gamma(\frac{d}{2}+1)  } \Big)^{ \frac{2}{d+1} }\frac{d(d+1)}{(d-1)(d+3)}-  \frac{(d+1)^2}{(d-1)(d+3)} . 
\end{split}
\end{align}
Combining the above, we obtain \eqref{LD Prob in main}.

\appendix 

\section{Conjecture~\ref{Cj1} for $d=3$} \label{Appendix_d=3}
In this appendix, we verify Conjecture~\ref{Cj1} in the case \( d=3 \).
As explained in Remark~\ref{Rem_conjecture inequality}, it is sufficient to
establish that \( G(x) \ge G(0) \), where \( G(x) \) is defined in
\eqref{def of G(x)}.

By \cite[Eq.(15.8.10)]{NIST}, we have 
\begin{align} \label{hypergeometric in G(x)}
{}_2F_1(a,a;2a;z)= -\frac{ \Gamma(2a) }{\Gamma(a)^4} \sum_{k=0}^\infty \frac{ \Gamma(a+k)^2 }{ (k!)^2 } (1-z)^k \Big( \log(1-z) +2 \psi(k+a)- 2 \psi(k+1) \Big),  
\end{align}
where $\psi(z)=\Gamma'(z)/\Gamma(z)$ is the digamma function. In particular for $a=1$, we have 
\begin{align}
{}_2F_1(1,1;2;z) = - \frac{\log (1-z)}{z}. 
\end{align}
By using this, the function $G(x)$ in \eqref{def of G(x)} for $d=3$ is simplified as 
\begin{align}
\begin{split}
G(x) &  = -\frac{3}{2 \pi x}\int_0^1  r \sqrt{1-r^2} \log \Big(\frac{(r-x)^2+t^2}{(r+x)^2+t^2}\Big) \, dr + x^2
\\
&= -\frac{3}{ \pi x} \re \int_0^1  r \sqrt{1-r^2} \log \Big(\frac{r-x-it}{r+x-it}\Big)\, dr + x^2
\\
&= -\frac{3}{ \pi x} \re \int_{-1}^1  r \sqrt{1-r^2} \log (r-x-it)\, dr + x^2.
\end{split}
\end{align} 
Note that by using the Stieltjes transform \eqref{3.16+} of the semi-circle law, we have 
\begin{align*}
(xG(x))'&= 3x^2 + \frac{3}{\pi}\re \int_{-1}^1 \frac{r}{r-x-it} \sqrt{1-r^2} \,dr  = 3x^2 + \frac{3}{\pi}\re \int_{-1}^1 \Big(1+ \frac{x+it}{r-x-it} \Big) \sqrt{1-r^2} \,dr 
\\
&= 3x^2+ \frac{3}{2} + 3 \,\re \bigg[ (x+it) \int_{-1}^1 \frac{1}{r-x-it} \frac{\sqrt{1-r^2}}{\pi}\,dr \bigg]  
\\
&= 3x^2+ \frac{3}{2} + 3 \,\re \Big[ -z^2+z\sqrt{ z^2-1 }  \Big]  = 3t^2+\frac32 +3 \re \Big[ z \sqrt{z^2-1} \Big],
\end{align*}
where $z=x+it$. 
Therefore it follows that 
\begin{align} \label{G(x) for d=3}
G(x) = 3t^2+\frac32 + \frac{1}{x}  \re \Big[(z^2-1)^{\frac32}\Big] .  
\end{align}
Let $w=\sqrt{z^2-1}=u+iv$. Then we have $u^2-v^2=x^2-t^2-1$ and $uv=tx$. 
Thus we have
\begin{align*}
x^2G'(x) =  3x \re (z w) -\re w^3 =3x^2u-3xtv-u^3+3uv^2 = u(3x^2-u^2). 
\end{align*}
Since $u>0$, it suffices to show that $3x^2-u^2>0$. 
By using $(u^2+v^2)^2= (x^2-t^2-1)^2+ 4 t^2x^2 $, we have 
$$
u^2= \frac{ \sqrt{ (x^2-t^2-1)^2+ 4 t^2x^2} +x^2-t^2-1 }{2}  . 
$$
Then the desired inequality $3x^2-u^2>0$ is equivalent to 
$$
5x^2+t^2+1 > \sqrt{ (x^2-t^2-1)^2+ 4 t^2x^2} . 
$$
This inequality immediately follows from 
$$
(5x^2+t^2+1 )^2-\Big(  (x^2-t^2-1)^2+ 4 t^2x^2 \Big)= 4x^2(3+2t^2+6x^2)>0. 
$$
Therefore, we obtain that $G'(x) \ge 0$. 

\begin{rem} 
The function $G(x)$ in \eqref{G(x) for d=3} admits an explicit representation in terms of $x$ and $t$: 
\begin{equation}
G(x) =   3t^2 + \frac{3}{2}-\frac{1}{x} \Big( (t^2-x^2+1)^2+ 4 t^2 x^2\Big)^{ \frac{3}{4} } \sin \bigg(\frac{\pi}{4}+ \frac{3}{2} \arctan\Big(\frac{t^2-x^2+1}{2
   t x}\Big)\bigg). 
\end{equation}
From this expression, one can also directly verify that $G'(x) \ge 0.$
\end{rem}

\begin{rem}
The approach used in this appendix for \( d=3 \) may potentially be extended
to odd dimensions \( d \), since, by \eqref{hypergeometric in G(x)}, the
hypergeometric function \( {}_2F_1 \) appearing in \eqref{def of G(x)} again
reduces to a combination of logarithm and rational functions. However,
carrying out such computations in a systematic manner remains challenging,
and we leave this for future work. 
\end{rem}


\bibliographystyle{abbrv}

\end{document}